\begin{document}
\title{Implication Problems over Positive Semirings}
%
%
\author{Minna Hirvonen\inst{1,2}\orcidID{0000-0002-2701-9620}}
\authorrunning{M. Hirvonen}
%
\institute{Institut für Theoretische Informatik, Leibniz Universität Hannover, Germany \and
Department of Mathematics and Statistics, University of Helsinki, Finland \\
\email{minna.hirvonen@thi.uni-hannover.de}}
\maketitle              
\begin{abstract}
We study various notions of dependency in semiring team semantics. Semiring teams are essentially database relations, where each tuple is annotated with some element from a positive semiring. We consider semiring generalizations of several dependency notions from database theory and probability theory, including functional and inclusion dependencies, marginal identity, and (probabilistic) independence.
We examine axiomatizations of implication problems, which are rule-based characterizations for the logical implication and inference of new dependencies from a given set of dependencies. Semiring team semantics provides a general framework, where different implication problems can be studied simultaneously for various semirings. The choice of the semiring leads to a specific semantic interpretation of the dependencies, and hence different semirings offer a way to study different semantics  (e.g., relational, bag, and probabilistic semantics) in a unified framework.

\keywords{Dependency  \and Semiring \and Team semantics \and Relational database \and Probability distribution \and Axiomatization}
\end{abstract}

\def\dep{=\!\!}
\newcommand{\Var}{\mathrm{Var}}
\newcommand{\Supp}{\mathrm{Supp}}
\newcommand{\Cl}{\mathrm{Cl}}
\newcommand{\Const}{\mathrm{Const}}
\newcommand{\scc}{\mathrm{scc}}
\newcommand{\pia}{\perp}
\newcommand{\X}{\mathbb{X}}

\section{Introduction}
Various notions of dependency appear in many research areas, such as mathematics, statistics, and computer science. 
For example, the notions of functional and
inclusion dependencies are applied in relational databases as primary and foreign key constraints to
maintain data integrity, and the notions of independence and marginal identity of random variables (IID variables) are used in probability theory and statistics to simplify the underlying mathematics.
Many of these dependency notions share analogous properties across different domains of application. For example, the complete axiomatizations for inclusion dependencies over relational databases \cite{casanova82} and for marginal identities over probability distributions \cite{HANNULA22} are very similar, and the semigraphoid axioms \cite{pearl88} are sound for both embedded multivalued dependency and conditional independence.

It would therefore be useful to study these notions in a general framework that could help explain how the characteristics of different application domains affect the interpretation and properties of the notions.
One possible way to do this is to consider \emph{semirings}, which is a generalization of rings without the requirement that every element must have an additive inverse. 
The $K$-relations, introduced in \cite{green07}, are relations such that each of their tuples is annotated by some element from a semiring $K$. In \cite{green07}, $K$-relations over commutative semirings were used for \emph{data provenance}, which refers to meta-information obtained about the origin and processing history of data based on propagation of tuple annotations through queries, but $K$-relations have also been studied in relation to dependencies. For example, conditional independence, functional dependency, marginal identity, and multivalued dependency over semirings have been studied in \cite{hannula:LIPIcs.ICDT.2024.20,hannula2023conditionalindependencesemiringrelations}, and join operations over $K$- relations in \cite{atserias20}. 

In \cite{barlag23}, semirings were introduced into the framework of \emph{team semantics}, which is a semantical system for logics that can express different dependencies using new kinds of atomic formulas. This is possible because in team semantics, the satisfaction of logical formulas is determined on a set of assignments, instead of just a single assignment, as, e.g., in the case of the usual semantics of first-order logic. These sets of assignments are called teams, and they are essentially database relations. The most well-known logic based on team semantics is the \emph{dependence logic} \cite{vaananen07} that extends first-order logic with dependency atoms that correspond to the notion of functional dependency. Other well-known variants include the inclusion logic \cite{galliani12} and the independence logic \cite{Gradel2013-GRDDAI}. The framework of team semantics has been generalized to multi-team (bag) semantics \cite{DurandHKMV18} and probabilistic team semantics \cite{HKMV18}, and also extended to various logics, including propositional logic \cite{YangV16}, modal logic \cite{vaananen2008modal}, and inquisitive logic \cite{ciardelli11}.

Recently, the notion of $K$-teams was introduced as a generalization that covers the usual relational team semantics, multi-team semantics, and probabilistic team semantics in a unified framework \cite{barlag23}. The goal of this paper is to define and study axiomatizations over general $K$-teams for various classes of atoms that have previously been studied separately in these different variants of team semantics. We use the terminology of $K$-teams, as defined in \cite{barlag23}, but since $K$-teams are essentially annotated relations, the results also apply to $K$-relations. In particular, we study the logical inference of atoms in the form of \emph{implication problems}. The implication problem for a class of atoms $\mathcal{C}$ over a semiring $K$ is the problem of deciding whether, for a given finite set $\Sigma\cup\{\tau\}$ of $\mathcal{C}$ atoms, any $K$-team that satisfies every atom of $\Sigma$ also satisfies $\tau$.

We now conclude this introduction by summarizing the main results of this paper. In this paper, we show that the axiomatization of the implication problem for functional dependencies and the unary variants of marginal identity and marginal distribution equivalence over probabilistic teams introduced in \cite{hirvonen24} is sound and complete more generally for any additively cancellative positive semiring. 
We also define a concept of weighted inclusion dependency that on suitable semirings coincides with the notions of inclusion dependency and marginal identity. We show that the weighted inclusion dependency has a certain symmetry property exactly on additively cancellative positive semirings, and the well-known axioms for the usual inclusion dependencies form a sound and complete axiomatization for the implication problem of weighted inclusion dependencies over a certain class of non-additively cancellative semirings.
The unary variant of weighted inclusion dependency can also be considered together with (unary) functional dependencies and certain decidable subclasses of conditional independence atoms
to obtain sound and complete axiomatizations over suitable classes of non-additively cancellative semirings. 
We also introduce a notion of marginal distribution inclusion, which is a non-symmetric variant of marginal distribution equivalence atom, and present a sound and complete axiomatization for its unary variant over a certain class of non-additively cancellative semirings.

\section{Preliminaries}

In this section, we give the basic definitions for semirings, $K$-teams, and the dependencies considered in this paper.

\subsection{Semirings}

Semirings are structures that correspond to rings without the requirement that every element must have an additive inverse. 
\begin{definition}
  A \emph{semiring} $K$ is a tuple $(K,+,\times,0,1)$ such that
  \begin{enumerate}
      \item[(i)] $+$ and $\times$ are binary operations on $K$,
      \item[(ii)] $(K,+,0)$ is a commutative monoid with identity element 0,
      \item[(iii)] $(K,\times,1)$ is a monoid with identity element 1,
      \item[(iv)] $a\times(b+c)=a\times b + a\times c$ and $(a+b)\times c=a\times c + b\times c$ for all $a,b,c\in K$,
      \item[(v)] $0\times a=a\times 0=0$ for all $a\in K$.
  \end{enumerate}
  \end{definition}
A semiring $K$ is \emph{commutative} if $(K,\times,1)$ is a commutative monoid. A semiring $K$ is \emph{positive} if $a+b=0$ implies $a=b=0$ and $a\times b=0$ implies $a=0$ or $b=0$. A semiring $K$ is \emph{additively cancellative} if $a+b=a+c$ implies $b=c$, and \emph{multiplicatively cancellative} if $a\times b=a\times c$ implies $b=c$, and $b\times a=c\times a$  implies $b=c$ for all $a\neq 0$. We say that an element $a\in K$ is \emph{(additively) idempotent} if $a+a=a$. Note that if $K$ additively cancellative, then 0 is the only idempotent element of $K$. If $K$ is multiplicatively cancellative and contains an idempotent non-zero element, then $K$ is idempotent, i.e., all its elements are idempotent.

 A semiring is \emph{ordered} if there exists a reflexive, transitive, and antisymmetric relation $\leq$ (i.e., a partial order) such that for all $a,b, c \in K$ $a\leq b$ implies $a+c \leq b+c$ and ($a\leq b$ and $0 \leq c$) imply ($a\times c \leq b\times c$ and $c\times a \leq c\times b$). We write $a<b$ if $a\leq b$ and $a\neq b$. An order $\leq$ is \emph{total} if for all $a,b\in K$ $a\leq b$ or $b\leq a$. We say that an order on $K$ is \emph{zero-min order} if $a\leq 0$ implies $a=0$ for all $a\in K$.
 
 The \emph{natural preorder} $\leq_{\rm n}$ of $K$ is defined by $a\leq_{\rm n} b$ iff there is $c\in K$ such that $a+c=b$. This relation is always reflexive and transitive, and if $\leq_{\rm n}$ is also antisymmetric, the semiring $K$ is ordered by $\leq_{\rm n}$, which is then called the \emph{natural order} of $K$. If the semiring $K$ ordered by $\leq_{\rm n}$ is positive, then $\leq_{\rm n}$ is also a zero-min order on $K$. Note also that $K$ is always ordered by $\leq_{\rm n}$ if $K$ is positive and additively cancellative. 
 
\begin{example}
The following structures are examples of semirings\footnote{Note that the minimum and the maximum in (iv)--(v) are taken with respect to the usual total order of (the extended) reals. The (possibly partial) order of the semiring in general does not need to be the same as this, as long as it respects these operations defined via the usual order of reals, as required in the definition of ordered semiring.}.
\begin{itemize}
    \item[(i)] The \emph{Boolean semiring} $\mathbb{B}=(\mathbb{B},\lor,\land,0,1)$ is the two-element Boolean algebra that models logical truth.
    \item[(ii)] The \emph{semiring of natural numbers} $\mathbb{N}=(\mathbb{N},+,\times,0,1)$ consists of the natural numbers with standard addition and multiplication.
    \item[(iii)] The \emph{probability semiring} $\mathbb{R}_{\geq0}=(\mathbb{R}_{\geq0},+,\times,0,1)$ consists of the non-negative reals with standard addition and multiplication.
    \item[(iv)] The \emph{tropical semiring} $\mathbb{T}=(\mathbb{R}\cup\{\infty\},\min,+,\infty,0)$ consists of the expanded reals with minimum and standard addition as addition and multiplication, respectively.
    \item[(v)] The \emph{Viterbi semiring} $\mathbb{V}=([0,1],\max,\times,0,1)$ consists of the unit interval with maximum and standard multiplication as addition and multiplication, respectively.
    \item[(vi)] The \emph{Lukasiewicz semiring} $\mathbb{L}=([0,1],\max,\times,0,1)$ consists of the unit interval with maximum and multiplication $a\times b := \max\{0,a + b-1\}$ as addition and multiplication, respectively.
\end{itemize}
\end{example}
These semirings have applications in various contexts. The Boolean semiring models logical truth and can be used as a framework for classical decision problems \cite{barlag2025logicalapproachesnondeterministicpolynomial}, the semiring of natural numbers relates to bag semantics and problems with counting \cite{KR2022-10}, and the probability semiring can model probabilities \cite{DERKINDEREN2024109130} or problems with geometric features \cite{barlag2025logicalapproachesnondeterministicpolynomial,schaefer2024existentialtheoryrealscomplexity}. The tropical and Viterbi semirings are used in optimization, e.g., performance analysis \cite{DBLP:journals/access/OmanovicOC23}, reachability problems \cite{DBLP:journals/ijac/GaubertK06}, and in dynamic programming \cite{1054010}, and the
Lukasiewicz semiring is used in multivalued logic \cite{hajek1998metamathematics}.

  The Boolean semiring, the tropical semiring, and the Viterbi semiring are commutative, positive, and multiplicatively cancellative, but not additively cancellative. The semiring of natural numbers and the probability semiring are commutative, positive, and additively and multiplicatively cancellative. The Lukasiewicz semiring is commutative, but not positive or additively or multiplicatively cancellative. The Boolean semiring, the tropical semiring, the Viterbi semiring, and the Lukasiewicz semiring are idempotent. Note that the tropical semiring with the usual order of the extendend reals is not zero-min ordered, because its zero element is $\infty$.

In this paper, we only consider positive nontrivial semirings, i.e., $0\neq 1$, so our results do not cover the Lukasiewicz semiring, which is not positive. Since the positiveness of the semiring is assumed throughout this paper, we do not mention it for each theorem statement separately. Additional assumptions are necessary for some of the results, but these will be specified later in the relevant sections of the paper. 
The following lemma will also be useful later.
\begin{lemma}\label{addcan_lemma}
 If $K$ is additively cancellative, then $a<b$ and $c\leq d$ imply $a+c<b+d$.   
\end{lemma}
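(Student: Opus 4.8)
The plan is to first obtain the weak inequality $a+c\leq b+d$ and then rule out the possibility $a+c=b+d$, thereby upgrading it to the desired strict inequality. Throughout, $\leq$ denotes the order under which $K$ is ordered; since $K$ is positive and, by hypothesis, additively cancellative, this is the natural order $\leq_{\rm n}$, which the preliminaries guarantee is a genuine partial order, so that $a<b$ means exactly $a\leq b$ together with $a\neq b$.

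For the weak inequality I would simply apply the additive-monotonicity clause of the ordered-semiring definition twice: from $a\leq b$ one gets $a+c\leq b+c$, and from $c\leq d$ one gets $b+c\leq b+d$. Transitivity of $\leq$ then yields $a+c\leq b+d$. This part uses nothing beyond the axioms of an ordered semiring and would hold even without cancellativity.

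To rule out equality, I would argue by contradiction. Suppose $a+c=b+d$. Chaining the two inequalities from the previous step gives $b+c\leq b+d=a+c\leq b+c$, so by antisymmetry of $\leq$ we must have $a+c=b+c$. At this point additive cancellativity (together with commutativity of $+$) yields $a=b$, contradicting $a<b$. Hence $a+c\neq b+d$, and combined with $a+c\leq b+d$ this gives $a+c<b+d$.

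The argument is short, and the only delicate point is the final step from $a+c=b+c$ to $a=b$: this is precisely where additive cancellativity is indispensable and is what separates this lemma from the merely weak monotonicity available in an arbitrary ordered semiring. Indeed, in an idempotent semiring such as $\mathbb{B}$ one has $0<1$ yet $0\lor 1 = 1\lor 1$, so strictness genuinely fails there. I therefore expect no real obstacle beyond ensuring that the order in play is antisymmetric, which is supplied here by positivity together with additive cancellativity.
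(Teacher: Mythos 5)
Your proof is correct and follows essentially the same route as the paper's: both obtain $a+c\leq b+c\leq b+d$ from monotonicity and then use additive cancellativity together with antisymmetry to rule out $a+c=b+d$ (the paper derives the strict inequality $a+c<b+c$ first and contradicts that; you cancel $c$ to get $a=b$ — the same argument in a slightly different order). One small remark: your identification of $\leq$ with the natural order $\leq_{\rm n}$ is neither justified (a positive, additively cancellative semiring may carry a compatible order other than $\leq_{\rm n}$) nor needed, since antisymmetry is already built into the paper's definition of an ordered semiring.
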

\begin{proof}
  Since $K$ is additively cancellative, $a\neq b$ implies $a+c\neq b+c$, and therefore $a<b$ implies $a+c<b+c$. If $a<b$ and $c\leq d$, then $a+c<b+c$ and $b+c\leq b+d$. Then also $a+c\leq b+d$. Clearly, $a+c\neq b+d$, because otherwise $b+c\leq a+c$, contradicting $a+c<b+c$. This means that $a+c< b+d$.  
\end{proof}

\subsection{$K$-Teams and Dependencies}

Let $K$ be a positive (and commutative, totally zero-min ordered, multiplicatively cancellative\footnote{See Remark \ref{remark}.}) semiring, $D$ a finite set of variables, and $A$ a finite set of values. An \emph{assignment} s is a function $s\colon D\to A$. A \emph{team} $X$ of $A$ over $D$ is a finite set of assignments $s\colon D\to A$. We may assume that the set $D$ is totally ordered, e.g., $D=\{x_1,\dots,x_n\}$ and hence an assignment $s\in X$ can also be viewed as the tuple $(s(x_1),\dots,s(x_n))$. A $K$-team is a function $\mathbb{X}\colon X\to K$. The value $\mathbb{X}(s)$ is called the \emph{weight} of the assignment $s$.

The \emph{support} of a $K$-team $\mathbb{X}$ is the set of those assignments that have non-zero weight, i.e.,  $\Supp(\mathbb{X}):=\{s\in X: \mathbb{X}(s)\neq 0\}$. Let $\bar{x}$ and $\bar{a}$ be a tuple of variables and a tuple of values, respectively. The \emph{marginal} of $\mathbb{X}$ over $\bar{x}=\bar{a}$ is defined as
\[
|\mathbb{X}_{\bar{x}=\bar{a}}|:=\sum_{\substack{s\in X,\\s(\bar{x})=\bar{a}}}\mathbb{X}(s),
\]
where the empty sum is interpreted as 0. We denote by $|\mathbb{X}|$ the weight of the entire $K$-team, i.e., $|\mathbb{X}|:=\sum_{s\in X}\mathbb{X}(s)$. Define also $X(\bar{x}):=\{\bar{a}\in A^{|\bar{x}|}:|\mathbb{X}_{\bar{x}=\bar{a}}|\neq 0\}$.

A multiset $(B,m)$ is a set equipped with a multiplicity function $m\colon B\to\{1,2,\dots\}$ such that $m(b)$ is the number of copies of the element $b\in B$ in the multiset. We write $b\in(B,m)$ if $b\in B$, and $(B',m')\subseteq(B,m)$ if $B'\subseteq B$ and $m'(b)\leq m(b)$ for all $b\in B'$. For simplicity, we often write multisets with double wave brackets, e.g., $(\{a,b\},m)$ such that $m(a)=1$ and $m(b)=2$ can be represented as $\{\{a,b,b\}\}$.

A tuple $\bar{x}=(x_1,\dots, x_n)$ is said to be \emph{repetition-free} if $x_i= x_j$ implies $i= j$ for all $i,j\in\{1,\dots,n\}$.  Let $\Var(\bar{x})$ denote the set of variables that appear in the tuple $\bar{x}$. We say that the tuples $\bar{x}$ and $\bar{y}$ are \emph{disjoint} if $\Var(\bar{x})\cap\Var(\bar{y})=\emptyset$. The length of the tuple $\bar{x}$ is denoted by $|\bar{x}|$.

Let $\bar{x}$ and $\bar{y}$ be tuples of variables. Then $\dep(\bar{x},\bar{y})$ is the (functional) dependency (FD) atom. If the tuple $\bar{x}$ is empty, we write $\dep(\bar{y})$ instead of $\dep(\emptyset,\bar{y})$ and say that $\dep(\bar{y})$ is a constancy atom (CA). If the tuples $\bar{x}$ and $\bar{y}$ are repetition-free, then $\bar{x}\subseteq^*\bar{y}$ and $\bar{x}\approx^* \bar{y}$ are the marginal distribution inclusion (IND$^*$), and the marginal distribution equivalence (MDE) atoms, respectively. 
If the tuples $\bar{x}$ and $\bar{y}$ are repetition-free and of the same length, then $\bar{x}\leq \bar{y}$ and $\bar{x}\approx \bar{y}$ are the (weighted) inclusion dependency (IND) and the marginal identity (MI) atom, respectively. 
If $\bar{x}$, $\bar{y}$, and $\bar{z}$ are disjoint\footnote{In the usual team semantics setting, it common to also allow non-disjoint tuples in the definition of IAs and CIs. Here we require disjointness, because in semiring team semantics the satisfaction of $x\perp x$ does not mean that the value of $x$ is constant in the support of the team, unlike in the usual team semantic setting. This might happen, e.g., if $K$ is multiplicatively cancellative and contains an idempotent non-zero element. In this paper, we occasionally use the terms \emph{disjoint} and \emph{non-disjoint} for these different types of IAs and CIs when the distinction is necessary to avoid confusion.} tuples (not necessarily of the same length), then $\bar{y}\perp_{\bar{x}}\bar{z}$ is a conditional independence (CI) atom. If $\bar{x}$ is empty, then instead of $\bar{y}\perp_{\emptyset}\bar{z}$, we write $\bar{y}\perp \bar{z}$ and call this the independence atom (IA). If $\Var(\bar{x}\bar{y}\bar{z})=D$, then $\bar{y}\perp_{\bar{x}}\bar{z}$ is a \emph{saturated} conditional independence (SCI) for any $K$-team $\X$ over $D$.

If $|\bar{x}|=|\bar{y}|=|\bar{z}|=1$, then the corresponding atom is called \emph{unary}, and the letter ``U'' is added in front of the abbreviation of the atom name, e.g., UFD, UIND, UMDE etc. A constancy atom $\dep(\bar{y})$ is unary if $|\bar{y}|=1$, and UCAs are considered UFDs in this work.

We extend the notation $\Var(\bar{x})$ to atoms $\sigma$ and sets of atoms $\Sigma$ in the obvious way. Let $\sigma$ be an atom and $\mathbb{X}$ a $K$-team of $X$ over $D$ such that $\Var(\sigma)\subseteq D$. The satisfaction relation $\mathbb{X}\models\sigma$ is defined as follows.
\begin{itemize}
    \item $\mathbb{X}\models\dep(\bar{x},\bar{y})$ iff for all $s,s'\in\Supp(\mathbb{X})$, $s(\bar{x})=s'(\bar{x})$ implies $s(\bar{y})=s'(\bar{y})$,
    \item $\mathbb{X}\models \bar{x}\leq \bar{y}$ iff $|\mathbb{X}_{\bar{x}=\bar{a}}|\leq|\mathbb{X}_{\bar{y}=\bar{a}}|$ for all $\bar{a}\in A^{|\bar{x}|}$,
   \item $\mathbb{X}\models \bar{x}\subseteq^* \bar{y}$ iff $\{\{|\mathbb{X}_{\bar{x}=\bar{a}}|:\bar{a}\in X(\bar{x})\}\}\subseteq\{\{|\mathbb{X}_{\bar{y}=\bar{a}}|:\bar{a}\in X(\bar{y})\}\}$,
    \item $\mathbb{X}\models \bar{x}\approx \bar{y}$ iff $|\mathbb{X}_{\bar{x}=\bar{a}}|=|\mathbb{X}_{\bar{y}=\bar{a}}|$ for all $\bar{a}\in A^{|\bar{x}|}$,
    \item $\mathbb{X}\models \bar{x}\approx^* \bar{y}$ iff $\{\{|\mathbb{X}_{\bar{x}=\bar{a}}|:\bar{a}\in X(\bar{x})\}\}=\{\{|\mathbb{X}_{\bar{y}=\bar{a}}|:\bar{a}\in X(\bar{y})\}\}$,
    \item $\mathbb{X}\models\bar{y}\perp_{\bar{x}}\bar{z}$ iff $|\mathbb{X}_{\bar{x}\bar{y}\bar{z}=s(\bar{x}\bar{y}\bar{z})}|\times|\mathbb{X}_{\bar{x}=s(\bar{x})}|=|\mathbb{X}_{\bar{x}\bar{y}=s(\bar{x}\bar{y})}|\times|\mathbb{X}_{\bar{x}\bar{z}=s(\bar{x}\bar{z})}|$ for all $s\colon\Var(\bar{x}\bar{y}\bar{z})\to A$.
\end{itemize}
If $\mathbb{X}\models\tau$, we say that $\mathbb{X}$ \emph{satisfies} $\tau$. If $\Sigma$ is a set of atoms, we write $\mathbb{X}\models\Sigma$ iff $\mathbb{X}\models\sigma$ for all $\sigma\in\Sigma$. 
\begin{table}
\centering
        \begin{tabular}{c c c c c | c }
        \hline
        $ \ x \ $ & $ \ y \ $ & $ \ z \ $ & $ \ v \ $ & $ \ w \ $ & $ \ \mathbb{X} \ $  \\
        \hline
        0 & 0 & 1 & 0 & 0 & a  \\
        0 & 1 & 1 & 0 & 0 & b  \\
        1 & 0 & 0 & 1 & 0 & c  \\
        \hline 
        \ ~ \
    \end{tabular}
\caption{The $K$-team $\mathbb{X}$ of Example \ref{satexample} that is also used in the proof of Prop. \ref{addcan_prop}.}
\label{ex_table}
\end{table}
\begin{example}\label{satexample}
    Consider the $K$-team $\mathbb{X}$ of $A=\{0,1\}$ over $D=\{x,y,z,v,w\}$ depicted in Table \ref{ex_table}, and let $a,b,c$ be nonzero elements of $K$ such that $a+b< c$. 
    First note that since $a,b,c$ are nonzero, the support of $\mathbb{X}$ consists of all of the three rows in the table. We have $\mathbb{X}\models\dep(x,z)$ because the values of $z$ agree on those rows where the values of $x$ agree. However, $\mathbb{X}\not\models\dep(x,y)$ because on the first two rows $x$ has the same value, 0, but the values for $y$ are different, namely 0 and 1. We also have $\mathbb{X}\models\dep(w)$ and $\mathbb{X}\not\models\dep(z)$ because the value of $w$ is constant, but the value of $z$ is not. 

We have $\mathbb{X}\models x\leq v$ and $\mathbb{X}\models x\approx v$ because $|\mathbb{X}_{x=0}|=a+b=|\mathbb{X}_{v=0}|$ and $|\mathbb{X}_{x=1}|=c=|\mathbb{X}_{v=1}|$, but $\mathbb{X}\not\models x\leq z$ and $\mathbb{X}\not\models x\approx z$ because $|\mathbb{X}_{x=1}|=c > a+b=|\mathbb{X}_{z=1}|$.
We also have $\mathbb{X}\models x\subseteq^* z$ and $\mathbb{X}\models x\approx^* z$ because $|\mathbb{X}_{x=0}|= a+b=|\mathbb{X}_{z=1}|$ and $|\mathbb{X}_{x=1}|=c=|\mathbb{X}_{z=0}|$, so $\{{|\mathbb{X}_{x=i}|:i\in\{0,1\}\}}=\{\{a+b,c\}\}=\{{|\mathbb{X}_{z=i}|:i\in\{0,1\}\}}$. However, $\mathbb{X}\not\models x\subseteq^* w$ and $\mathbb{X}\not\models x\approx^* w$ because $|\mathbb{X}_{x=0}|= a+b$, $|\mathbb{X}_{x=1}|=c$, $|\mathbb{X}_{w=0}|= a+b+c$, and $|\mathbb{X}_{w=1}|=0$, and hence $\{{|\mathbb{X}_{x=i}|:i\in\{0,1\}\}}=\{\{a+b,c\}\}\not\subseteq\{\{a+b+c\}\}=\{{|\mathbb{X}_{w=i}|:i\in\{0\}\}}$.

We have $\mathbb{X}\models x\perp w$ and $\mathbb{X}\not\models x\perp_w y$. The former holds since $|\mathbb{X}_{xw=ij}|\times|\mathbb{X}|=|\mathbb{X}_{xw=ij}|\times(a+b+c)=|\mathbb{X}_{x=i}|\times(a+b+c)=|\mathbb{X}_{x=i}|\times|\mathbb{X}_{w=j}|$ if $j=0$ and $|\mathbb{X}_{xw=ij}|\times|\mathbb{X}|=0\times(a+b+c)=|\mathbb{X}_{x=i}|\times 0=|\mathbb{X}_{x=i}|\times|\mathbb{X}_{w=j}|$ if $j=1$, implying that $|\mathbb{X}_{xw=ij}|\times|\mathbb{X}|=|\mathbb{X}_{x=i}|\times|\mathbb{X}_{w=j}|$ for all $i,j\in\{0,1\}$. The latter holds since $|\mathbb{X}_{wxy=011}|\times|\mathbb{X}_{w=0}|=0\neq c\times b=|\mathbb{X}_{wx=01}|\times|\mathbb{X}_{wy=01}|$.

\end{example}

\begin{remark}\label{remark}
    Recall that we only consider positive semirings $K$. In the beginning of this section we also assumed that $K$ is commutative, totally zero-min ordered, and multiplicatively cancellative. A total zero-min order is needed for the atoms $\bar{x}\leq \bar{y}$, to ensure that the values that do not appear for $x$ in the support of $\mathbb{X}$ cannot prevent the satisfaction of the atom. Commutativity and multiplicative cancellativeness is needed for the atoms $\bar{y}\perp_{\bar{x}}\bar{z}$ to ensure the soundness of the usual axioms for (conditional) independence.\footnote{For the atoms whose satisfaction condition does not refer to the multiplication of the semiring, it would also suffice to consider commutative monoids $(K,+,0)$. Since we consider these atoms together with (conditional) independence atoms that concern multiplication, we have decided to stay in the framework of semirings.} 
\end{remark}

\begin{example}
    If $K=\mathbb{B}$ with the usual order, i.e., $0<1$, then $\dep(\bar{x},\bar{y})$,  $\bar{x}\leq\bar{y}$, and $\bar{y}\perp_{\bar{x}}\bar{z}$ correspond to functional dependency $\Var(\bar{x})\rightarrow\Var(\bar{y})$, inclusion dependency $\bar{x}\subseteq\bar{y}$, and embedded multivalued dependency $\Var(\bar{x})\twoheadrightarrow\Var(\bar{y})|\Var(\bar{z})$ over (uni)relational databases. If $\bar{y}\perp_{\bar{x}}\bar{z}$ is an SCI, then it corresponds to multivalued dependency $\Var(\bar{x})\twoheadrightarrow\Var(\bar{y})$. For any $\mathbb{B}$-team $\mathbb{X}$, the corresponding database relation is $\Supp(\mathbb{X})$.
    
If $K=\mathbb{R}_{\geq0}$, then $\bar{x}\approx\bar{y}$, $\bar{x}\approx^*\bar{y}$, and $\bar{y}\perp_{\bar{x}}\bar{z}$ correspond to marginal identity, marginal distribution equivalence, and conditional independence over probability distributions. Note that each $\mathbb{R}_{\geq0}$-team $\mathbb{X}$ can be viewed as the probability distribution $p_{\mathbb{X}}$ obtained by normalizing $\mathbb{X}$, i.e., $p_{\mathbb{X}}\colon X\to[0,1]$ such that $p_{\mathbb{X}}(s)=\mathbb{X}(s)/|\mathbb{X}|$ for any $s\in X$.
    \end{example}
\begin{remark}
    For notational convenience, we sometimes write SCIs similarly to multivalued dependencies (MVDs), that is, we denote by $\bar{x}\twoheadrightarrow\bar{y}$ the saturated conditional independence $\bar{y}\setminus \bar{x}\perp_{\bar{x}}\bar{z}\setminus(\bar{x}\bar{y})$, where $\Var(\bar{x}\bar{y}\bar{z})=D$. The notation $\bar{x}\setminus\bar{y}$ refers to the tuple obtained from $\bar{x}$ by removing the variables that appear in the tuple $\bar{y}$.
\end{remark}
As seen in the above example, the atoms other than the marginal distribution inclusion atom $\bar{x}\subseteq^*\bar{y}$ already appear in the literature at least for some semirings. The idea of $\bar{x}\subseteq^*\bar{y}$ is to introduce a non-symmetric variant of the marginal distribution equivalence atom $\bar{x}\approx^*\bar{y}$ analogously to the way that $\bar{x}\leq\bar{y}$ relates to $\bar{x}\approx\bar{y}$. For example, in the case of the Boolean semiring, $\bar{x}\subseteq^*\bar{y}$ states that $|X(\bar{x})|\leq|X(\bar{y})|$, i.e., in the support of the team, the number of values for $\bar{x}$ is at most the number of values for $\bar{y}$. For the tropical semiring, $\bar{x}\subseteq^*\bar{y}$ states that the minimal weight of any value in $X(\bar{x})$ must also appear in $X(\bar{y})$ for some value. In the next section, we will see that $\bar{x}\leq\bar{y}$ and $\bar{x}\subseteq^*\bar{y}$ are symmetric over additively cancellative semirings, so it does not make sense to consider these variants, e.g., in the case of semirings $\mathbb{N}$ or $\mathbb{R}_{\geq0}$.

\section{Implication Problems and Axiomatizations under Semiring Semantics}

In this section, we define implication problems and axiomatizations in the general framework of semiring semantics. We will see later in the next section that certain axiomatizations completely characterize implication problems over suitable classes of semirings.

\subsection{Implication Problems}

For a set of atoms $\Sigma\cup\{\tau\}$  such that $\Var(\Sigma\cup\{\tau\})\subseteq D$, we write
$\Sigma\models_K\tau$ if $\mathbb{X}\models\Sigma$ implies $\mathbb{X}\models\tau$ for all $K$-teams $\mathbb{X}$ of $X$ over $D$. If $\Sigma\models_K\tau$, we say that $\Sigma$ \emph{logically implies} $\tau$ over $K$. The implication problem for a class of atoms $\mathcal{C}$ over $K$ is defined as follows: given a set $\Sigma\cup\{\tau\}$ of atoms from class $\mathcal{C}$, decide whether $\Sigma\models_K\tau$. If the semiring $K$ is clear from the context, we write $\Sigma\models\tau$ instead of $\Sigma\models_K\tau$.

Before we define axiomatizations to characterize the implication problems for different classes of atoms, we present some results concerning logical implication that depend on certain properties of the semiring.
\begin{proposition}\label{addcan_prop}
\begin{itemize}
    \item[(i)] 
   $\{\bar{x}\leq \bar{y}\}\models_K\{\bar{y}\leq \bar{x}\}$  iff $K$ is additively cancellative, 
    \item[(ii)] $\{\bar{x}\subseteq^* \bar{y}\}\models_K\{\bar{y}\subseteq^* \bar{x}\}$ iff for all $a,b\in K$, $a+b=a$ implies $b=0$.
\end{itemize}
\end{proposition}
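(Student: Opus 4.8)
The plan is to prove each biconditional by establishing the positive direction through a global summation argument and the converse by exhibiting an explicit small counterexample built from the relevant failure of cancellation or absorption. For the positive direction of (i), assume $K$ is additively cancellative and take a team $\mathbb{X}$ with $\mathbb{X}\models\bar x\le\bar y$. The crucial observation is that, since every assignment takes exactly one value on $\bar x$ (and exactly one on $\bar y$), both marginal profiles sum to the total weight, i.e. $\sum_{\bar a}|\mathbb{X}_{\bar x=\bar a}|=|\mathbb{X}|=\sum_{\bar a}|\mathbb{X}_{\bar y=\bar a}|$, where $\bar a$ ranges over the finite common index set $A^{|\bar x|}=A^{|\bar y|}$. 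If $\bar y\le\bar x$ failed at some $\bar a_0$, then, using totality of the order together with $|\mathbb{X}_{\bar x=\bar a_0}|\le|\mathbb{X}_{\bar y=\bar a_0}|$, we would have a strict inequality $|\mathbb{X}_{\bar x=\bar a_0}|<|\mathbb{X}_{\bar y=\bar a_0}|$ while all remaining terms still satisfy $\le$. Applying Lemma~\ref{addcan_lemma} repeatedly over this finite sum yields $|\mathbb{X}|<|\mathbb{X}|$, a contradiction; hence $\bar y\le\bar x$. Note this argument is insensitive to the lengths and holds for arbitrary tuples, not just unary ones.

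For the converse of (i) I argue contrapositively: if $K$ is not additively cancellative, pick $a,b,c$ with $a+b=a+c$ and $b\ne c$, and (by totality, after possibly swapping $b$ and $c$) assume $b<c$. Consider the unary instance over $\{x,y\}$ given by the three assignments with $(x,y,\text{weight})$ equal to $(0,0,a)$, $(1,0,b)$, and $(0,1,c)$. A direct computation gives $|\mathbb{X}_{x=0}|=a+c$, $|\mathbb{X}_{x=1}|=b$, $|\mathbb{X}_{y=0}|=a+b$, and $|\mathbb{X}_{y=1}|=c$; since $a+c=a+b$ and $b<c$, this team satisfies $x\le y$ but violates $y\le x$ at the value $1$.

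For (ii), write $(\star)$ for the condition that $a+b=a$ implies $b=0$. In the positive direction, assume $(\star)$ and $\mathbb{X}\models\bar x\subseteq^*\bar y$, so the multiset $M_x$ of nonzero $\bar x$-marginals is a sub-multiset of the multiset $M_y$ of nonzero $\bar y$-marginals. As before, both multisets sum to $|\mathbb{X}|$, so writing $M_y=M_x\uplus R$ yields $|\mathbb{X}|=|\mathbb{X}|+\sum R$; condition $(\star)$ forces $\sum R=0$, and since every element of $R$ is nonzero, positivity of $K$ forces $R$ to be empty, whence $M_x=M_y$ and $\bar y\subseteq^*\bar x$. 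For the converse, if $(\star)$ fails pick $a,b$ with $a+b=a$ and $b\ne0$ (here $a\ne0$, since $a+b=a=0$ would force $b=0$ by positivity), and take the two-assignment team with $(x,y,\text{weight})$ equal to $(0,0,a)$ and $(0,1,b)$. Then $M_x=\{\{a\}\}$ while $M_y=\{\{a,b\}\}$, so $x\subseteq^*y$ holds but $y\subseteq^*x$ fails.

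I expect the main obstacle to lie in the positive directions rather than in the counterexamples: the key conceptual move in both parts is to collapse a pointwise, respectively sub-multiset, comparison of marginals into a single identity of total weights $|\mathbb{X}|$, and then to extract the desired equality from the algebraic hypothesis --- Lemma~\ref{addcan_lemma} for (i) and the interplay of $(\star)$ with positivity for (ii). Some care is also needed in the counterexamples to verify the multiset (sub)relations in the degenerate case $a=b$, to confirm that the constructed assignments are genuinely distinct, and to check that the listed marginals are exactly the nonzero ones so that the supports $X(\bar x)$ and $X(\bar y)$ are as claimed.
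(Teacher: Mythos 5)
Your proposal is correct and follows essentially the same route as the paper's own proof: the forward directions collapse the pointwise (resp.\ sub-multiset) comparison into the identity $\sum_{\bar a}|\mathbb{X}_{\bar x=\bar a}|=|\mathbb{X}|=\sum_{\bar a}|\mathbb{X}_{\bar y=\bar a}|$ and then invoke Lemma~\ref{addcan_lemma} (resp.\ the absorption-freeness condition plus positivity), while the converses are witnessed by small explicit teams exploiting $a+b=a+c$ with $b<c$, resp.\ $a+b=a$ with $b\neq 0$. The only cosmetic differences are that you build minimal two-variable counterexample teams instead of reusing the paper's Table~\ref{ex_table}, and in (ii) you phrase the argument via a multiset difference $M_y=M_x\uplus R$ rather than the paper's injective map $f\colon X(\bar x)\to X(\bar y)$ --- both are equivalent.
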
 
\begin{proof}
We first show the right-to-left directions of the claims. For item (i), we show that if $K$ is additively cancellative, then $|\mathbb{X}_{\bar{x}=\bar{a}}|\leq|\mathbb{X}_{\bar{y}=\bar{a}}|$ for all $\bar{a}\in A^{|\bar{x}|}$ implies $|\mathbb{X}_{\bar{x}=\bar{a}}|=|\mathbb{X}_{\bar{y}=\bar{a}}|$ for all $\bar{a}\in A^{|\bar{x}|}$. Suppose for a contradiction that $|\mathbb{X}_{\bar{x}=\bar{a}}|<|\mathbb{X}_{\bar{y}=\bar{a}}|$ for some $\bar{a}\in A^{|\bar{x}|}$. Then by Lemma \ref{addcan_lemma}, $\sum_{\bar{a}\in A^{|\bar{x}|}}|\mathbb{X}_{\bar{x}=\bar{a}}|<\sum_{\bar{a}\in A^{|\bar{x}|}}|\mathbb{X}_{\bar{y}=\bar{a}}|$. This is a contradiction, as $\sum_{\bar{a}\in A^n}|\mathbb{X}_{\bar{x}=\bar{a}}|=|\mathbb{X}|=\sum_{\bar{a}\in A^n}|\mathbb{X}_{\bar{y}=\bar{a}}|$.

For item (ii), we show that if for all $a,b\in K$, $a+b=a$ implies $b=0$, then $\{\{|\mathbb{X}_{\bar{x}=\bar{a}}|:\bar{a}\in X(\bar{x})\}\}\subseteq\{\{|\mathbb{X}_{\bar{y}=\bar{a}}|:\bar{a}\in X(\bar{y})\}\}$ implies $\{\{|\mathbb{X}_{\bar{x}=\bar{a}}|:\bar{a}\in X(\bar{x})\}\}=\{\{|\mathbb{X}_{\bar{y}=\bar{a}}|:\bar{a}\in X(\bar{y})\}\}$.
    From $\bar{x}\subseteq^*\bar{y}$, it follows that there exists an injective function $f\colon X(\bar{x})\to X(\bar{y})$ such that $|\mathbb{X}_{\bar{x}=\bar{a}}|=|\mathbb{X}_{\bar{y}=f(\bar{a})}|$ for all $\bar{a}\in X(\bar{x})$. It suffices to show that $f$ is surjective. If this is not the case, let $B=X(\bar{y})\setminus f(X(\bar{x}))$. Now we have $|\mathbb{X}|=\sum_{\bar{a}\in X(\bar{x})}|\mathbb{X}_{\bar{x}=\bar{a}}|=\sum_{\bar{a}\in X(\bar{x})}|\mathbb{X}_{\bar{y}=f(\bar{a})}|\neq\sum_{\bar{a}\in X(\bar{x})}|\mathbb{X}_{\bar{y}=f(\bar{a})}|+\sum_{\bar{b}\in B}|\mathbb{X}_{\bar{y}=\bar{b}}|=\sum_{\bar{a}\in X(\bar{y})}|\mathbb{X}_{\bar{y}=\bar{a}}|=|\mathbb{X}|$, where the inequality follows from $0\neq\sum_{\bar{b}\in B}|\mathbb{X}_{\bar{y}=\bar{b}}|$ by the assumption. Since $|\mathbb{X}|\neq|\mathbb{X}|$ is clearly a contradiction, the claim follows.

 We now show the left-to-right directions. Without loss of generality, we show the claims for unary atoms. For item (i), suppose that $K$ is not additively cancellative. Then there exists $a,b,c\in K$ such that $a+b=a+c$, but $b< c$. Let $\mathbb{X}$ be as in Table \ref{ex_table} of Example \ref{satexample}, and the elements $a,b,c$ as above. Then $|\mathbb{X}_{x=0}|=a+b=a+c=|\mathbb{X}_{y=0}|$ and $|\mathbb{X}_{x=1}|=b<c=|\mathbb{X}_{y=1}|$, so $\mathbb{X}\models x\leq y$, but $\mathbb{X}\not\models y\leq x$.
 
 For item (ii), suppose that $a,b\in K$ are such that $a+b=a$ and $b\neq 0$. Suppose that $c=0$, and let $\mathbb{X}$ be as in Table \ref{ex_table}. Then $|\mathbb{X}_{x=0}|=a+b=a+c=a=|\mathbb{X}_{y=0}|$, but $|\mathbb{X}_{x=1}|=c=0\neq b=|\mathbb{X}_{y=1}|$. Hence, $\{\{|\mathbb{X}_{x=a}|:a\in X(x)\}\}=\{\{a\}\}\subsetneq\{\{a,b\}\}=\{\{|\mathbb{X}_{y=a}|:a\in X(y)\}\}$, so $\mathbb{X}\models x\subseteq^* y$, but $\mathbb{X}\not\models y\subseteq^* x$.
\end{proof}
Note that if $K$ is additively cancellative, then in particular, $a+b=a$ implies $b=0$ for all $a,b\in K$. As a consequence of Proposition \ref{addcan_prop},  $\mathbb{X}\models\bar{x}\leq\bar{y}$ iff $\mathbb{X}\models\bar{x}\approx\bar{y}$ and $\mathbb{X}\models\bar{x}\subseteq^*\bar{y}$ iff $\mathbb{X}\models\bar{x}\approx^*\bar{y}$ for any $K$-team $\mathbb{X}$, where $K$ is additively cancellative.
Hence, for additively cancellative semirings, we consider the atoms $\bar{x}\approx\bar{y}$ and $\bar{x}\approx^*\bar{y}$ 
instead of $\bar{x}\leq\bar{y}$, 
and $\bar{x}\subseteq^*\bar{y}$, 
and forget the order of $K$.

\subsection{Sound Axiomatizations}\label{axioms}

An axiom is an expression of the form $\Lambda\vdash\Gamma$, where $\Lambda$ and $\Gamma$ sets of atoms. For a set of axioms $\mathcal{A}$, we define the deduction relation $\vdash_{\mathcal{A}}$ on sets of atoms such that $\Sigma\vdash_{\mathcal{A}}\Delta$ if and only if for every $\sigma\in\Delta$, there is a finite sequence $(\tau_1,\dots,\tau_n)$ of atoms such that $\tau_n=\sigma$, and for each $\tau_i$, $i\in\{1,\dots,n\}$, either $\tau_i\in\Sigma$ or $\tau_i\in\Gamma$ for some $\Lambda\vdash\Gamma\in\mathcal{A}$ such that $\Lambda\subseteq\Sigma\cup\{\tau_1,\dots,\tau_{i-1}\}$. If $\Sigma\vdash_{\mathcal{A}}\Delta$, we say that there is an $\mathcal{A}$-deduction of $\Delta$ from $\Sigma$.

Note that if $\Lambda\vdash\Gamma\in\mathcal{A}$, then also $\Lambda\vdash_{\mathcal{A}}\Gamma$. We sometimes write $\Sigma\vdash_{\mathcal{A}}\sigma$, instead of $\Sigma\vdash_{\mathcal{A}}\{\sigma\}$, and $\Sigma\vdash\sigma$ instead of $\Sigma\vdash_{\mathcal{A}}\sigma$, if the set of axioms $\mathcal{A}$ is clear from the context. The set of axioms  $\mathcal{A}$ is also called an \emph{axiomatization}. We say that the axiomatization $\mathcal{A}$ is \emph{sound} for a class of semirings $\mathcal{C}$, if $\Sigma\vdash_{\mathcal{A}}\tau$ implies $\Sigma\models_K\tau$ for all $\Sigma\cup\{\tau\}$ and all $K\in\mathcal{C}$. 

We denote by $\mathcal{A}_{FD}$ the \emph{Armstrong axiomatization} \cite{armstrong74} consisting of the following three axioms on the left;  reflexivity, transitivity, and
augmentation. When restricted to unary FDs, we use the notation $\mathcal{A}_{UFD}$ for the set of the following axioms on the right.
\begin{multicols}{2}
\begin{enumerate}[leftmargin=0.85cm]
    \item[FD1] $\emptyset\vdash \{\dep(\bar{x},\bar{x}'):\Var(\bar{x}')\subseteq\Var(\bar{x})\}$
    \item[FD2] $\{\dep(\bar{x},\bar{y}),\dep(\bar{y},\bar{z})\}\vdash \{\dep(\bar{x},\bar{z})\}$
    \item[FD3] $\{\dep(\bar{x},\bar{y})\}\vdash \{\dep(\bar{x}\bar{z},\bar{y}\bar{z})\}$
\end{enumerate}
    \columnbreak 
\begin{enumerate}[leftmargin=1.5cm]
    \item[UFD1] $\emptyset\vdash \{\dep(x,x)\}$
    \item[UFD2] $\{\dep(x,y),\dep(y,z)\}\vdash \{\dep(x,z)\}$
    \item[UFD3] $\{\dep(y)\}\vdash \{\dep(x,y)\}$
\end{enumerate}
\end{multicols}
\noindent Let $1\leq k\leq n$, and denote by $\rho_k$ any function $\rho_k\colon D^n\to D^k$ such that $\rho_k(x_1,\dots,x_n)=(x_{i_1},\dots,x_{i_k})$ for all $(x_1,\dots,x_n)\in D^n$, where $i_1,\dots,i_k$ is a sequence of distinct elements from $\{1,\dots, n\}$. Let $P_k^n$ be the set of these functions $\rho_k$, and define $P^n=\bigcup_{1\leq k\leq n} P_k^n$. We define the axiomatizations $\mathcal{A}_{IND}$ and $\mathcal{A}_{MI}$,
consisting of the following axioms for INDs \cite{casanova82} and MIs \cite{HANNULA22}, respectively.
\begin{multicols}{2}
    \begin{enumerate}[leftmargin=1cm]
    \item[IND1] $\emptyset\vdash \{\bar{x}\leq \bar{x}\}$
    \item[IND2] $\{\bar{x}\leq \bar{y},\bar{y}\leq \bar{z}\}\vdash \{\bar{x}\leq\bar{y}\}$
    \item[IND3] $\{\bar{x}\leq \bar{y}\}\vdash {\{\rho(\bar{x})\leq \rho(\bar{y}):\rho\in P^{|\bar{x}|}\}}$
    \end{enumerate}
    \columnbreak 
    \begin{enumerate}[leftmargin=1.25cm]
    \item[MI1] $\emptyset\vdash \{\bar{x}\approx \bar{x}\}$
    \item[MI2] $\{\bar{x}\approx \bar{y}\}\vdash \{\bar{y}\approx \bar{x}\}$
    \item[MI3] $\{\bar{x}\approx \bar{y},\bar{y}\approx \bar{z}\}\vdash \{\bar{x}\approx \bar{y}\}$
    \item[MI4] $\{\bar{x}\approx \bar{y}\}\vdash {\{\rho(\bar{x})\approx \rho(\bar{y}):\rho\in P^{|\bar{x}|}\}}$
    \end{enumerate}
\end{multicols}
\noindent For IND$^*$s and MDEs, we define the sets $\mathcal{A}_{IND^*}$ and $\mathcal{A}_{MDE}$, consisting of the corresponding axioms below.
    \begin{enumerate}[leftmargin=1.2cm]
    \item[IND$^*$1] $\emptyset\vdash \{\bar{x}\subseteq^* \bar{x}\}$
    \item[IND$^*$2] $\{\bar{x}\subseteq^* \bar{y},\bar{y}\subseteq^* \bar{z}\}\vdash \{\bar{x}\subseteq^* \bar{y}\}$
    \item[IND$^*$3] $\{\bar{x}\subseteq^* \bar{y}\}\vdash \{{\rho(\bar{x})\subseteq^* \rho'(\bar{y}):\rho\in P_{|\bar{x}|}^{|\bar{x}|},\rho'\in P_{|\bar{y}|}^{|\bar{y}|}\}}$.
    \end{enumerate}
    \begin{enumerate}[leftmargin=1.2cm]
    \item[MDE1] $\emptyset\vdash \{\bar{x}\approx^* \bar{x}\}$
    \item[MDE2] $\{\bar{x}\approx^* \bar{y}\}\vdash \{\bar{y}\approx^* \bar{x}\}$
    \item[MDE3] $\{\bar{x}\approx^* \bar{y},\bar{y}\approx^* \bar{z}\}\vdash \{\bar{x}\approx^* \bar{y}\}$
    \item[MDE4] $\{\bar{x}\approx^* \bar{y}\}\vdash \{{\rho(\bar{x})\approx^* \rho'(\bar{y}):\rho\in P_{|\bar{x}|}^{|\bar{x}|},\rho'\in P_{|\bar{y}|}^{|\bar{y}|}\}}$.
    \end{enumerate}
For UINDs, UIND$^*$s, UMIs, and UMDEs, we define the axiomatizations $\mathcal{A}_{UIND}$, $\mathcal{A}_{UIND^*}$, $\mathcal{A}_{UMI}$, and $\mathcal{A}_{UMDE}$, consisting of the unary variants of the axioms IND1-2, IND$^*$1-2, MI1-3, and MDE1-3, respectively.
For IAs, CIs, and SCIs, we define the axioms $\mathcal{A}_{IA}$, $\mathcal{A}_{CI}$,and $\mathcal{A}_{SCI}$, consisting of the axioms listed below, respectively.
\begin{enumerate}[leftmargin=0.75cm]
     \item[IA1] $\emptyset\vdash\{\emptyset\perp\bar{x}\}$
    \item[IA2] $\{\bar{x}\perp\bar{y}\}\vdash\{\bar{y}\perp\bar{x}\}$
    \item[IA3] $\{\bar{x}\perp\bar{y}\}\vdash\{\bar{x}'\perp\bar{y}':\Var(\bar{x}')\subseteq\Var(\bar{x}), \Var(\bar{y}')\subseteq\Var(\bar{y})\}$ 
    \item[IA4] $\{\bar{x}\perp\bar{y},\bar{x}\bar{y}\perp\bar{z}\}\vdash\{\bar{x}\perp\bar{y}\bar{z}\}$
\end{enumerate}
\begin{enumerate}[leftmargin=0.75cm]
     \item[CI1] $\emptyset\vdash\{\emptyset\perp_{\bar{x}}\bar{y}\}$
    \item[CI2] $\{\bar{y}\perp_{\bar{x}}\bar{z}\}\vdash\{\bar{z}\perp_{\bar{x}}\bar{y}\}$
    \item[CI3] $\{\bar{y}\perp_{\bar{x}}\bar{z}\}\vdash\{\bar{y}'\perp_{\bar{x}}\bar{z}':\Var(\bar{y}')\subseteq\Var(\bar{y}), \Var(\bar{z}')\subseteq\Var(\bar{z})\}$ \item[CI4] $\{\bar{y}\perp_{\bar{x}}\bar{z}\bar{w}\}\vdash\{\bar{y}\perp_{\bar{x}\bar{w}}\bar{z}\}$ \item[CI5] $\{\bar{y}\perp_{\bar{x}}\bar{z},\bar{y}\perp_{\bar{x}\bar{z}}\bar{w}\}\vdash\{\bar{y}\perp_{\bar{x}}\bar{z}\bar{w}\}$ 
\end{enumerate}
    \begin{enumerate}[leftmargin=0.95cm]
    \item[SCI1]  $\{\bar{x}\twoheadrightarrow\bar{y}\}\vdash \{\bar{x}\twoheadrightarrow\bar{z}:\Var(\bar{y})\cap\Var(\bar{z})\subseteq\Var(\bar{x}),\Var(\bar{x}\bar{y}\bar{z})=D\}$
    \item[SCI2] $\emptyset\vdash \{\bar{x}\twoheadrightarrow\bar{y}:\Var(\bar{y})\subseteq\Var(\bar{x})\}$
    \item[SCI3] $\{\bar{x}\twoheadrightarrow\bar{y}\}\vdash \{\bar{x}\bar{w}\twoheadrightarrow\bar{y}\bar{z}:\Var(\bar{z})\subseteq\Var(\bar{w})\}$
    \item[SCI4]  $\{\bar{x}\twoheadrightarrow\bar{y},\bar{y}\twoheadrightarrow\bar{z}\}\vdash \{\bar{x}\twoheadrightarrow\bar{w}:\Var(\bar{w})=\Var(z)\setminus\Var(\bar{y})\}$
    \end{enumerate}
For the interaction of MIs and MDEs, CAs and MDEs, IAs and FDs, and SCIs and FDs, respectively, we define the following axioms.
\begin{enumerate}[leftmargin=1.95cm]
    \item[MI \& MDE] $\{\bar{x}\approx \bar{y}\}\vdash\{\bar{x}\approx^* \bar{y}\}$
\end{enumerate}
\begin{multicols}{2}
\begin{enumerate}[leftmargin=2.286cm]
    \item[CA \& MDE 1] $\{\bar{x}\approx^* \bar{y},\dep(\bar{y})\}\vdash\{\dep(\bar{x})\}$
    \item[CA \& MDE 2] $\{\dep(\bar{x}),\dep(\bar{y})\}\vdash\{\bar{x}\approx^* \bar{y}\}$
\end{enumerate}
\begin{enumerate}[leftmargin=2cm]
    \item[IA \& FD 1] $\{\bar{x}\perp \bar{y},\dep(\bar{z})\}\vdash \{\bar{x}\perp \bar{y}\bar{z}\}$
    \item[IA \& FD 2] $\{\bar{x}\perp \bar{y},\dep(\bar{x},\bar{y})\}\vdash \{\dep(\bar{y})\}$
\end{enumerate}
\end{multicols}
\begin{enumerate}[leftmargin=2.05cm]
    \item[SCI \& FD 1] $\{\dep(\bar{x},\bar{y})\}\vdash\{\bar{x}\twoheadrightarrow\bar{y}\}$
    \item[SCI \& FD 2] $\{\bar{x}\twoheadrightarrow\bar{z},\dep(\bar{y}\setminus\bar{z},\bar{z}')\}\vdash\{\dep(\bar{x},\bar{w}):\Var(\bar{w})=\Var(\bar{z})\cap\Var(\bar{z}')\}$
\end{enumerate}
We also define the following notation for the sets of axioms for MIs and MDEs: $\mathcal{A}_{MI \& MDE}:=\{MI \ \& \ MDE \}$, for CAs and MDEs: $\mathcal{A}_{CA \& MDE}:=\{CA \ \& \ MDE \ 1,\\ CA \ \& \ MDE \ 2\}$, for IAs and FDs: $\mathcal{A}_{IA \& FD}:=\{IA \ \& \ FD \ 1,IA \ \& \ FD \  2\}$, and for SCIs and FDs: $\mathcal{A}_{SCI \& FD}:=\{SCI \ \& \ FD \ 1,SCI \ \& \ FD \  2\}$. We also use analogous notation for the unary variants of the axioms and sets defined above.

Let $k\in\{1,3,5,\dots\}$ and define $S_k\colon\{0,1,\dots,k\}\to\{0,1,\dots,k\}$ as follows 
\[
S_k(i):=\begin{cases}
    i+1, \text{ if } 0\leq i\leq k-1\\
    0, \text{ if } i=k.
\end{cases}
\]
Then we define the following axioms for the interaction of UFDs and UINDs, UFDs and UMIs, and UFDs and UMDEs, respectively.
\begin{enumerate}[leftmargin=0cm]
\item[] $k$-CYCLE-UIND
    \item[] $\{\dep(x_{i},x_{S_k(i)}):0\leq i\leq k-1, i\text{ even}\}\cup\{x_{S_k(i)}\leq x_{i}:1\leq i\leq k, i\text{ odd}\}\vdash\{{\dep(x_{S_k(i)},x_{i})}:0\leq i\leq k-1, i\text{ even}\}\cup\{x_{i}\leq x_{S_k(i)}:1\leq i\leq k, i\text{ odd}\}$ 
\end{enumerate}
\begin{enumerate}[leftmargin=0cm]
\item[] $k$-CYCLE-UMI
    \item[] $\{\dep(x_{i},x_{S_k(i)}):0\leq i\leq k-1, i\text{ even}\}\cup\{x_{i}\approx x_{S_k(i)}:1\leq i\leq k, i\text{ odd}\}\vdash{\{\dep(x_{S_k(i)},x_{i}):0\leq i\leq k-1, i\text{ even}\}}$ 
\end{enumerate}
\begin{enumerate}[leftmargin=0cm]
\item[] $k$-CYCLE-UMDE
    \item[] $\{\dep(x_{i},x_{S_k(i)}):0\leq i\leq k-1, i\text{ even}\}\cup\{x_{i}\approx^*x_{S_k(i)}:1\leq i\leq k, i\text{ odd}\}\vdash{\{\dep(x_{S_k(i)},x_{i}):0\leq i\leq k-1, i\text{ even}\}\cup\{x_{i}\approx^*x_{S_k(i)}:0\leq i\leq k-1, i\text{ even}\}}$ 
\end{enumerate}
Since the cycle rules are defined for each odd $k$, we define the following infinite sets of axioms to cover all $k$-cycle rules $\mathcal{A}_{CYCLE-UIND}:=\{k\text{-CYCLE-UIND}:k\in\mathbb{N}, k\text{ is odd}\}$, $\mathcal{A}_{CYCLE-UMI}:=\{k\text{-CYCLE-UMI}:k\in\mathbb{N}, k\text{ is odd}\}$, and $\mathcal{A}_{CYCLE-UMDE}:=\{k\text{-CYCLE-UMDE}:k\in\mathbb{N}, k\text{ is odd}\}$.

We now show that the above axioms are sound for suitable classes of semirings.
\begin{proposition}\label{soundness}
    The axioms $\mathcal{A}_{FD}$, $\mathcal{A}_{IND}$, $\mathcal{A}_{IND^*}$, $\mathcal{A}_{MI}$, $\mathcal{A}_{MDE}$, $\mathcal{A}_{MI \& MDE}$, $\mathcal{A}_{CA \& MDE}$, $\mathcal{A}_{CYCLE-UIND}$, $\mathcal{A}_{CYCLE-UMI}$, and $\mathcal{A}_{CYCLE-UMDE}$ are sound for any semiring $K$.
\end{proposition}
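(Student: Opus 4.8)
The plan is to use the standard reduction for rule systems: an $\mathcal{A}$-deduction is a finite sequence of atoms, each either in $\Sigma$ or obtained by applying a single axiom $\Lambda\vdash\Gamma$ whose premises already appear earlier. Hence, by induction on the length of the deduction, it suffices to verify that each individual axiom is \emph{sound}, i.e. that for every positive $K$-team $\mathbb{X}$ (totally zero-min ordered wherever inclusion atoms occur) we have $\mathbb{X}\models\Lambda\Rightarrow\mathbb{X}\models\gamma$ for all $\gamma\in\Gamma$. I would then check the axioms group by group.

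For $\mathcal{A}_{FD}$ I would observe that $\mathbb{X}\models\dep(\bar{x},\bar{y})$ depends only on $\Supp(\mathbb{X})$, which is an ordinary relation, so FD1–FD3 inherit soundness directly from the relational Armstrong axioms. For $\mathcal{A}_{MI}$, $\mathcal{A}_{MDE}$, $\mathcal{A}_{IND}$, $\mathcal{A}_{IND^*}$ the reflexivity/symmetry/transitivity axioms are immediate from the corresponding properties of $=$, $\leq$, and multiset (in)equality. The only axioms needing a computation are the projection/permutation ones: for a projection $\rho$ one has $|\mathbb{X}_{\rho(\bar{x})=\bar{b}}|=\sum_{\bar{a}:\rho(\bar{a})=\bar{b}}|\mathbb{X}_{\bar{x}=\bar{a}}|$, so IND3 follows from $a\leq b\Rightarrow a+c\leq b+c$ applied termwise over the (finite) fibers of $\rho$, and MI4 from the analogous equalities, since $+$ is a function; for IND$^*$3 and MDE4 the relevant $\rho,\rho'$ are bijections, so they merely reindex value tuples and leave the marginal multiset unchanged. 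For MI $\&$ MDE I would use positivity to note that $|\mathbb{X}_{\bar{x}=\bar{a}}|=|\mathbb{X}_{\bar{y}=\bar{a}}|$ for all $\bar{a}$ forces $X(\bar{x})=X(\bar{y})$, so the two marginal multisets coincide. For $\mathcal{A}_{CA\&MDE}$ the key is again positivity: every support assignment has nonzero marginal, so a constancy atom $\dep(\bar{y})$ is equivalent to $|X(\bar{y})|\leq 1$; as an MDE preserves the cardinality of the marginal multiset, $|X(\bar{x})|=|X(\bar{y})|$, which yields both CA $\&$ MDE rules.

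The cycle axioms are the hard part, and I would treat them together. Fix a premise-satisfying $\mathbb{X}$ and write $V_i:=X(x_i)$ and $w_i(a):=|\mathbb{X}_{x_i=a}|$. An FD $\dep(x_i,x_{i+1})$ induces a surjection $f_i\colon V_i\twoheadrightarrow V_{i+1}$, hence $|V_i|\geq|V_{i+1}|$; an inclusion $x_{i+1}\leq x_i$ forces (via the zero-min order) $V_{i+1}\subseteq V_i$, and an MI or MDE between $x_i,x_{i+1}$ forces $|V_i|=|V_{i+1}|$. Running these inequalities around the cycle $x_0,\dots,x_k,x_0$ gives $|V_0|\geq|V_1|\geq\dots\geq|V_k|\geq|V_0|$, so all $|V_i|$ are equal; since the $V_i$ are finite, each $f_i$ is a bijection, which immediately yields the reversed FDs $\dep(x_{i+1},x_i)$ appearing in all three conclusions. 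For $k$-CYCLE-UMDE the same bijections satisfy $w_{i+1}(f_i(a))=w_i(a)$, hence $M_i=M_{i+1}$ as multisets, giving the extra conclusions $x_i\approx^* x_{i+1}$.

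The genuinely delicate case is $k$-CYCLE-UIND, where each premise inclusion must be upgraded to the reversed inclusion. Here I would argue through the counting functions $N_i(t):=|\{a\in V_i:w_i(a)\geq t\}|$: an FD-bijection gives $N_i=N_{i+1}$, while an inclusion $x_{i+1}\leq x_i$ gives $V_{i+1}=V_i$ and $N_{i+1}(t)\leq N_i(t)$ for every $t$. The cyclic chain then forces $N_i(t)=N_{i+1}(t)$ for all $i$ and $t$, so the inclusions $\{a:w_{i+1}(a)\geq t\}\subseteq\{a:w_i(a)\geq t\}$ are in fact equalities for every $t$, and by totality of the order $w_{i+1}(a)=w_i(a)$ for all $a$, i.e. the reversed inclusions $x_i\leq x_{i+1}$ hold. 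I expect this order-statistic step — turning termwise-dominating inclusions into equalities \emph{without} an additive cancellation law — to be the main obstacle, since it is precisely the point where one cannot cancel a common sum and must instead exploit finiteness of the $V_i$, positivity, and totality of the zero-min order.
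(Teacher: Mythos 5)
Your proof is correct, and for everything except the $k$-CYCLE-UIND rule it follows essentially the same route as the paper: FD axioms reduce to the relational Armstrong axioms via $\Supp(\mathbb{X})$, the remaining non-cycle axioms are direct computations with marginals (which the paper dismisses in one sentence), and your cardinality argument around the cycle ($|V_0|\geq|V_1|\geq\dots\geq|V_k|\geq|V_0|$, forcing every FD-surjection to be a bijection) is exactly the paper's argument for the reversed FDs and, via the weight-preserving bijections, for the extra UMDE conclusions. The genuine difference is in $k$-CYCLE-UIND, the case the paper itself flags as more complicated. The paper partitions $A$ into weight-equivalence classes under $a\sim b\iff|\mathbb{X}_{x_0=a}|=|\mathbb{X}_{x_0=b}|$, totally orders the classes, and runs an induction from the minimal class upward: around the cycle the weight of a minimal element can only decrease, so minimality forces all inequalities to be equalities and the composite $f_k\circ\dots\circ f_1$ to permute each class, and one repeats this for successive classes. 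Your argument replaces this induction by the level-set counts $N_i(t)=|\{a\in V_i:w_i(a)\geq t\}|$: FD-bijections preserve $N$, premise inclusions can only decrease it, cyclicity forces equality of all $N_i$, and then the finite sets $\{a:w_{i+1}(a)\geq t\}\subseteq\{a:w_i(a)\geq t\}$ of equal size coincide, so evaluating at $t=w_i(a)$ and $t=w_{i+1}(a)$ gives pointwise equality of weights. Both arguments exploit exactly finiteness, positivity, and the zero-min order; yours is shorter and avoids the bookkeeping of class permutations (note that at the last step totality is not actually what you need --- reflexivity and antisymmetry of the order suffice, since each $a$ is only compared against its own weight). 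One small point to tighten: your justification of CA \& MDE 2 (``MDE preserves the cardinality of the marginal multiset'') really only covers CA \& MDE 1; for rule 2 the premises are two constancy atoms and no MDE, so you need the extra (immediate) observation that if $\dep(\bar{x})$ and $\dep(\bar{y})$ hold and the support is nonempty, then both marginal multisets equal $\{\{|\mathbb{X}|\}\}$ and hence coincide --- equality of cardinalities alone is not the reason.
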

\begin{proof}
    The axioms and the satisfaction relation for FDs are the same as in the case of relational databases \cite{armstrong74}, so it is easy to see that the axioms are sound. Excluding the cycle rules, the soundness of the other axioms follows straightforwardly from the definitions of the marginal and the satisfaction relation for the relevant atoms.
    
    For the cycle rules, the soundness proofs are similar to the Boolean and the probability semiring cases of \cite{CosmadakisKV90} and \cite{hirvonen24}, except that for $k$-cycle-UIND the situation is slightly more complicated in the case of general $K$. Note that since $\bar{x}\approx \bar{y}$ implies $\bar{x}\approx^* \bar{y}$, it suffices to show that $k$-CYCLE-UIND and  $k$-CYCLE-UMDE are sound.
    
    For a team $\mathbb{X}\colon X\to K$, we have that $\dep(x,y)$ implies $|X(y)|\leq |X(x)|$, $x\leq y$ implies $X(x)\subseteq X(y)$, and $x\approx^* y$ implies $|X(x)|=|X(y)|$. Hence, the existence of the alternating FD/IND (or FD/UMDE) cycle $x_0,x_1\dots,x_k,x_0$ implies that $|X(x_i)|=|X(x_j)|$ for all $i,j\in\{0,1,\dots,k\}$. By the finiteness of $X$, this means that each surjective function associated with $\dep(x_i,x_{S_k(i)})$, where $i$ even, is also injective, implying that $\dep(x_{S_k(i)},x_i)$. 

    For UMDEs, note that the existence of the bijective function $f$ associated with $\dep(x_i,x_{S_k(i)})$ and $\dep(x_{S_k(i)},x_i)$ implies that $|\mathbb{X}_{x_i=a}|=|\mathbb{X}_{x_{S_k(i)}=f(a)}|$ for all $a\in X(x_i)$ and $|\mathbb{X}_{x_i=f^{-1}(b)}|=|\mathbb{X}_{x_{S_k(i)}=b}|$ for all $b\in X(x_{S(i)})$, so we have $x_i\approx^* x_{S_k(i)}$. 

    For the case of UINDs, first define an equivalence relation $\sim$ on $A$ such that  $a\sim b$ iff $|\mathbb{X}_{x_0=a}|=|\mathbb{X}_{x_0=b}|$. Denote by $[a]$ the equivalence class of $a$ under $\sim$. The total zero-min order on $K$ induces an order on the equivalence classes such that $[a]\leq[b]$ iff $|\mathbb{X}_{x_0=a}|\leq|\mathbb{X}_{x_0=b}|$. Let $a_0\in A$ be such that its equivalence class is the smallest in this order. Then
    \begin{align*}
    |\mathbb{X}_{x_0=a_0}|&=|\mathbb{X}_{x_1=f_1(a_0)}|\geq|\mathbb{X}_{x_2=f_1(a_0)}|\\
    &=|\mathbb{X}_{x_3=(f_3\circ f_1)(a_0)}|\geq|\mathbb{X}_{x_4=(f_3\circ f_1)(a_0)}|\dots\\
    &=|\mathbb{X}_{x_k=(f_k\circ\dots\circ f_1)(a_0)}|\geq|\mathbb{X}_{x_0=(f_k\circ\dots\circ f_1)(a_0)}|,
    \end{align*}
where each $f_i\colon A\to A$ is some bijective extention of the bijective function from $X(x_{S_k^{-1}(i)})$ to $X(x_i)$ associated with $\dep(x_{S_k^{-1}(i)},x_{i})$ for an odd $i$. If any of the inequalities is strict, then $|\mathbb{X}_{x_0=a_0}|>|\mathbb{X}_{x_0=(f_k\circ\dots\circ f_1)(a_0)}|$. This is a contradiction as $|\mathbb{X}_{x_0=a_0}|$ is the smallest marginal of $x_0$. This means that all of the inequalities above are actually equalities, i.e, $|\mathbb{X}_{x_i=(f_i\circ\dots\circ f_1)(a_0)}|=|\mathbb{X}_{x_{S_k(i)}=(f_i\circ\dots\circ f_1)(a_0)}|$ for all odd $i$.
Moreover, since each $f_i$ is injective, the composite function $f_k\circ\dots\circ f_1$ is also injective, and $(f_k\circ\dots\circ f_1)\restriction[a_0]$ must be a permutation on $[a_0]$. 

Suppose then that $|\mathbb{X}_{x_i=(f_i\circ\dots\circ f_1)(a_{n})}|=|\mathbb{X}_{x_{S_k(i)}=(f_i\circ\dots\circ f_1)(a_{n})}|$ for all odd $i$ and the function $(f_k\circ\dots\circ f_1)\restriction[a_{n}]$ is a permutation on $[a_{n}]$.
Let then $a_{n+1}\in A$ be such that its equivalence class is minimal in $(A/\sim)\setminus\{[a_0],\dots,[a_{n}]\}$. Now $(f_k\circ\dots\circ f_1)(a_{n+1})\notin \bigcup_{j=0}^{n}[a_j]$, and we can repeat an argument analogous to the case of $a_0$ to obtain that $|\mathbb{X}_{x_i=(f_i\circ\dots\circ f_1)(a_{n+1})}|=|\mathbb{X}_{x_{S_k(i)}=(f_i\circ\dots\circ f_1)(a_{n+1})}|$ for all odd $i$ and the function $(f_k\circ\dots\circ f_1)\restriction[a_{n+1}]$ is a permutation on $[a_{n+1}]$.
Hence, we have $|\mathbb{X}_{x_i=(f_i\circ\dots\circ f_1)(a)}|=|\mathbb{X}_{x_{S_k(i)}=(f_i\circ\dots\circ f_1)(a)}|$ for all $a\in A$ and odd $i$.
Since $(f_i\circ\dots\circ f_1)(A)=A$, we have
$|\mathbb{X}_{x_i=a}|=|\mathbb{X}_{S(x_i)=a}|$ for all $a\in A$, and $x_i\leq x_{S_k(i)}$ for all odd $i$.
\end{proof}
Since the soundness of $\mathcal{A}_{CI}$, $\mathcal{A}_{SCI}$, and $\mathcal{A}_{SCI \& FD}$ was already proven in \cite{hannula2023conditionalindependencesemiringrelations}, we only state the result.
\begin{proposition}[\cite{hannula2023conditionalindependencesemiringrelations}]
    The axioms $\mathcal{A}_{CI}$, $\mathcal{A}_{SCI}$, and $\mathcal{A}_{SCI \& FD}$ are sound for any commutative and multiplicatively cancellative semiring $K$.
\end{proposition}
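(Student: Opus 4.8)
The plan is to prove soundness one axiom at a time by unfolding the defining equation of CI satisfaction and reducing each rule to an identity between products of marginals in $K$. Since every SCI is a CI with the extra condition $\Var(\bar{x}\bar{y}\bar{z})=D$, the soundness of $\mathcal{A}_{SCI}$ follows by specializing the argument for $\mathcal{A}_{CI}$, and $\mathcal{A}_{SCI \& FD}$ is handled by combining those manipulations with the constancy condition that defines FD satisfaction. So the core of the argument is the five CI axioms. Throughout, commutativity is used to reorder both the variables fixed in a marginal (so that $|\mathbb{X}_{\bar{x}\bar{y}\bar{z}}|=|\mathbb{X}_{\bar{x}\bar{z}\bar{y}}|$, etc.) and the factors in each product.

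The axioms CI1 (trivial independence) and CI2 (symmetry) are immediate from commutativity, since passing between the two sides only permutes the factors and reorders the fixed variables. For CI3 (decomposition) I would start from the defining equation of $\bar{y}\perp_{\bar{x}}\bar{z}$ and sum it over all values of the dropped variables $\bar{y}\setminus\bar{y}'$; on the left the factor $|\mathbb{X}_{\bar{x}}|$ does not depend on the summation index and on the right $|\mathbb{X}_{\bar{x}\bar{z}}|$ does not, so distributivity lets me pull these factors out and collapse the remaining sums into the coarser marginals $|\mathbb{X}_{\bar{x}\bar{y}'\bar{z}}|$ and $|\mathbb{X}_{\bar{x}\bar{y}'}|$, yielding $\bar{y}'\perp_{\bar{x}}\bar{z}$. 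Repeating symmetrically for $\bar{z}\setminus\bar{z}'$ gives $\bar{y}'\perp_{\bar{x}}\bar{z}'$. No cancellation is needed here.

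The weak union (CI4) and contraction (CI5) axioms are where multiplicative cancellativity becomes essential. For CI4 I would first apply CI3 to the hypothesis $\bar{y}\perp_{\bar{x}}\bar{z}\bar{w}$ to extract the auxiliary atom $\bar{y}\perp_{\bar{x}}\bar{w}$, multiply the hypothesis equation by $|\mathbb{X}_{\bar{x}\bar{w}}|$, substitute via the auxiliary equation, and cancel the common factor $|\mathbb{X}_{\bar{x}}|$ to recover $\bar{y}\perp_{\bar{x}\bar{w}}\bar{z}$. For CI5 I would multiply the equation for $\bar{y}\perp_{\bar{x}\bar{z}}\bar{w}$ by $|\mathbb{X}_{\bar{x}}|$, rewrite one factor using the equation for $\bar{y}\perp_{\bar{x}}\bar{z}$, and cancel the common factor $|\mathbb{X}_{\bar{x}\bar{z}}|$ to obtain $\bar{y}\perp_{\bar{x}}\bar{z}\bar{w}$. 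The main obstacle is that cancellation is legitimate only when the cancelled factor is nonzero, so both arguments split into cases. When the relevant marginal ($|\mathbb{X}_{\bar{x}}|$ for CI4, $|\mathbb{X}_{\bar{x}\bar{z}}|$ for CI5) is nonzero, multiplicative cancellativity applies directly; when it is zero, positivity closes the gap: a vanishing marginal forces every finer marginal lying below it to vanish, since each such marginal is a subsum and in a positive semiring a finite sum is $0$ only if every summand is. Hence both sides of the target CI equation reduce to $0$ and the axiom holds trivially.

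Finally, the same nonzero/zero split handles SCI \& FD 2, where the FD-constancy condition is substituted into the CI equation. SCI \& FD 1 is checked by observing that if $\bar{y}$ is determined by $\bar{x}$ on the support, then $|\mathbb{X}_{\bar{x}\bar{y}}|$ equals either $|\mathbb{X}_{\bar{x}}|$ (at the determined value of $\bar{y}$) or $0$, and in both cases the CI equation defining $\bar{x}\twoheadrightarrow\bar{y}$ collapses to a trivially true identity. The overall argument mirrors the relational and probabilistic proofs of the cited references, with positivity playing the role of ``probability-zero'' reasoning and multiplicative cancellativity replacing division.
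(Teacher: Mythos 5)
First, a point of comparison: the paper itself contains no proof of this proposition. It is imported verbatim from the cited reference (``we only state the result''), so there is no in-paper argument to match your proof against; your attempt has to be judged as a free-standing proof. On that basis, your treatment of the five CI axioms is correct and is the standard argument: CI1 and CI2 by commutativity, CI3 by summing the defining equation over the dropped variables and using distributivity to pull out the factors that do not depend on the summation index, and CI4/CI5 by multiplying by an auxiliary marginal, substituting, and cancelling a common factor, with the cancellation legitimised by multiplicative cancellativity when the factor is nonzero and the zero case closed by positivity (a vanishing marginal forces every finer marginal to vanish, since in a positive semiring a finite sum is $0$ only if all summands are). This part of the proposal is sound.

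The gap is in how you dispose of $\mathcal{A}_{SCI}$ and $\mathcal{A}_{SCI \& FD}$. SCI1 (a complementation rule) and SCI4 (a transitivity rule) are \emph{not} instances of CI1--CI5: complementation and transitivity are famously unsound for general conditional independence and become sound only because of the saturation condition $\Var(\bar{x}\bar{y}\bar{z})=D$, so ``specializing the argument for $\mathcal{A}_{CI}$'' cannot be the whole proof. One must unfold $\bar{x}\twoheadrightarrow\bar{y}$ into a CI atom and use saturation to identify the complement sets; e.g., under the side conditions of SCI1 one has $\Var(\bar{z})\setminus\Var(\bar{x})=D\setminus\Var(\bar{x}\bar{y})$ and $D\setminus\Var(\bar{x}\bar{z})=\Var(\bar{y})\setminus\Var(\bar{x})$, which reduces SCI1 to symmetry, while SCI4 requires a genuine multi-step semantic derivation (decomposition, weak union, symmetry, and contraction applied to the complement sets, all of whose identification depends on saturation). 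None of this bookkeeping appears in your sketch. Similarly, SCI \& FD 2 is not handled by ``substituting the constancy condition and splitting on zero'': its conclusion is an FD, and the standard soundness argument is a tuple-mixing one --- given two support assignments agreeing on $\bar{x}$ but differing on $\bar{w}$, the saturated CI equation together with positivity (a product of nonzero elements is nonzero, hence the mixed joint marginal is nonzero) produces a support assignment that contradicts the hypothesis FD $\dep(\bar{y}\setminus\bar{z},\bar{z}')$. Your one-sentence dismissals leave precisely the nontrivial parts of the proposition unproven.
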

The following claim is straightforward to check and the soundness of the axioms $\mathcal{A}_{IA \& FD}$ (in the CI case) was  proven in \cite{hannula2023conditionalindependencesemiringrelations}, so we only include a short description of the proof idea.  
\begin{proposition}\label{soundness2}
    The axioms $\mathcal{A}_{IA}$ and $\mathcal{A}_{IA \& FD}$ are sound for any commutative and multiplicatively cancellative semiring $K$.
\end{proposition}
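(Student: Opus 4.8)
The plan is to check each axiom separately. Throughout I would write $N:=|\mathbb{X}|$ and, for a subtuple $\bar{u}$ of $\bar{x}\bar{y}\bar{z}$ and a fixed relevant assignment $s$, abbreviate the marginal $|\mathbb{X}_{\bar{u}=s(\bar{u})}|$ by $m(\bar{u})$. The first observation is that an independence atom $\bar{x}\perp\bar{y}$ is precisely the empty-condition case of a conditional independence atom, so IA1, IA2, IA3 are the instances of CI1, CI2, CI3 with empty conditioning tuple, and IA\,\&\,FD\,1, IA\,\&\,FD\,2 are the empty-condition specializations of their CI analogues; the soundness of all of these is already available (CI1--CI3 and the CI case of the interaction axioms). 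I would nevertheless record the short direct justifications: IA1 and IA2 are immediate from commutativity of $\times$ together with the fact that the marginal over the empty tuple is $N$, and IA3 follows by summing the defining equation $m(\bar{x})\cdot m(\bar{y})=m(\bar{x}\bar{y})\cdot N$ over the values of the discarded variables and regrouping by distributivity to obtain $m(\bar{x}')\cdot m(\bar{y}')=m(\bar{x}'\bar{y}')\cdot N$. None of these three needs any cancellativity.

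The hard part is IA4, which (as one checks by comparing with CI5 under symmetry) is \emph{not} an instance of any single CI axiom and must be verified by hand. Given $\bar{x}\perp\bar{y}$ and $\bar{x}\bar{y}\perp\bar{z}$, decomposition (IA3, just shown sound) applied to the second premise yields $\bar{y}\perp\bar{z}$, that is $m(\bar{y})\cdot m(\bar{z})=m(\bar{y}\bar{z})\cdot N$. I would then compute, for every relevant $s$,
\[
m(\bar{x}\bar{y}\bar{z})\cdot N\cdot N
 = m(\bar{x}\bar{y})\cdot m(\bar{z})\cdot N
 = m(\bar{x})\cdot m(\bar{y})\cdot m(\bar{z})
 = m(\bar{x})\cdot m(\bar{y}\bar{z})\cdot N ,
\]
using the second premise, then the first, then $\bar{y}\perp\bar{z}$, always with commutativity. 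If $N\neq 0$, multiplicative cancellativity removes one factor of $N$ and delivers exactly $m(\bar{x}\bar{y}\bar{z})\cdot N=m(\bar{x})\cdot m(\bar{y}\bar{z})$, i.e. $\bar{x}\perp\bar{y}\bar{z}$. If $N=0$, positivity forces every weight, and hence every marginal, to vanish, so the conclusion holds trivially. This single cancellation step is the only place where multiplicative cancellativity is used.

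For IA\,\&\,FD\,1, let $\bar{c}_0$ be the constant value of $\bar{z}$ on $\Supp(\mathbb{X})$ (the support being nonempty; the empty case is trivial). Any assignment with $\bar{z}\neq\bar{c}_0$ lies outside the support and has weight $0$, so $m(\bar{y}\bar{z})=m(\bar{x}\bar{y}\bar{z})=0$ whenever $s(\bar{z})\neq\bar{c}_0$, making both sides of the goal $0$; and when $s(\bar{z})=\bar{c}_0$ the same observation gives $m(\bar{y}\bar{z})=m(\bar{y})$ and $m(\bar{x}\bar{y}\bar{z})=m(\bar{x}\bar{y})$, so the goal reduces to the premise $\bar{x}\perp\bar{y}$. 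For IA\,\&\,FD\,2 I would argue by contradiction using positivity alone: if $\bar{y}$ is not constant on the support, choose support assignments witnessing two distinct values $\bar{b}_1\neq\bar{b}_2\in X(\bar{y})$ and let $\bar{a}_1$ be the $\bar{x}$-value of the $\bar{b}_1$-witness. The FD $\dep(\bar{x},\bar{y})$ forces every support assignment with $\bar{x}=\bar{a}_1$ to have $\bar{y}=\bar{b}_1$, so no support assignment has $(\bar{x},\bar{y})=(\bar{a}_1,\bar{b}_2)$ and hence $|\mathbb{X}_{\bar{x}\bar{y}=\bar{a}_1\bar{b}_2}|=0$. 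Evaluating $\bar{x}\perp\bar{y}$ at this value then gives $0=|\mathbb{X}_{\bar{x}\bar{y}=\bar{a}_1\bar{b}_2}|\cdot N=|\mathbb{X}_{\bar{x}=\bar{a}_1}|\cdot|\mathbb{X}_{\bar{y}=\bar{b}_2}|$, and since each factor on the right is a sum containing a nonzero support weight it is nonzero by positivity, contradicting the zero-product property of a positive semiring. Hence $\bar{y}$ is constant on the support, i.e. $\mathbb{X}\models\dep(\bar{y})$; note that this rule needs only positivity, so the multiplicative cancellativity in the hypothesis is required solely for IA4.
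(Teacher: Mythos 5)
Your proposal is correct and takes essentially the same route as the paper's (much more condensed) proof: IA1--IA2 from commutativity, IA3 from distributivity and the definition of the marginal, IA4 by combining both premises with the already-established IA3 and then cancelling the nonzero total weight $|\mathbb{X}|$, IA \& FD 1 by case analysis on the constant value of $\bar{z}$, and IA \& FD 2 by the same positivity contradiction (the paper phrases it via a surjective map $f\colon X(\bar{x})\to X(\bar{y})$ and its preimages, while you pick the witnessing $\bar{x}$-value directly, which is the same argument). Your explicit handling of the $|\mathbb{X}|=0$ case and your bookkeeping of exactly where multiplicative cancellativity versus positivity is needed are sound refinements of the paper's sketch, not deviations from it.
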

\begin{proof}
    For IAs, it is easy to see that the soundness of IA1 and IA2 follows from the multiplicative cancellativity and commutativity, respectively. The soundness of IA3 follows from the definition of the marginal using distributivity. The soundness of IA4 follows from the multiplicative cancellativity, commutativity, and the soundness of IA3. 
    
    Since $\dep(\bar{x})$ states that the value of $\bar{x}$ is constant in the support of the team, the soundness of the axiom IA \& FD 1 is easy to see by considering the satisfaction requirement of the consequent separately for each value of $\bar{x}$. 
    For IA \& FD 2, assume that $\mathbb{X}\not\models\dep(\bar{y})$. Let $\bar{a},\bar{b}\in X(\bar{y})$ be such that $\bar{a}\neq \bar{b}$.  Since $\mathbb{X}\models\dep(\bar{x},\bar{y})$, there is a surjective function $f\colon X(\bar{x})\to X(\bar{y})$. Now we have $f^{-1}(\bar{a}),f^{-1}(\bar{b})\in X(\bar{x})$ such that $f^{-1}(\bar{a})\neq f^{-1}(\bar{b})$. Since $\mathbb{X}\models \bar{x}\perp \bar{y}$, we have $|\mathbb{X}_{\bar{x}\bar{y}=f^{-1}(\bar{a})\bar{b}}|\times|\mathbb{X}|=|\mathbb{X}_{\bar{x}=f^{-1}(\bar{a})}|\times|\mathbb{X}_{\bar{y}=\bar{b}}|$. This is a contradiction because $|\mathbb{X}_{\bar{x}\bar{y}=f^{-1}(\bar{a})\bar{b}}|\times|\mathbb{X}|=0\times|\mathbb{X}|=0$, but $|\mathbb{X}_{\bar{x}=f^{-1}(\bar{a})}|\neq 0$ and $|\mathbb{X}_{\bar{y}=\bar{b}}|\neq 0$.
\end{proof}

\section{Complete Axiomatizations}

In this section, we show that some of the axiomatizations from the previous section completely characterize certain implication problems, i.e., they are complete for the corresponding class of atoms over a suitable class of semirings. We say that axiomatization $\mathcal{A}$ is \emph{complete} for a class of atoms $\mathcal{C}$ over a class of semirings $\mathcal{C}'$, if $\Sigma\models_K\tau$ implies $\Sigma\vdash_{\mathcal{A}}\tau$ for all sets $\Sigma\cup\{\tau\}$ of atoms from the class $\mathcal{C}$ and all semirings $K$ from $\mathcal{C}'$.

We begin by noting that some of the axiomatizations from Section \ref{axioms} are already known to be complete in the general setting of semirings. The axiomatization for MIs over probability distributions introduced in \cite{hannula2021tractability} is known to be complete for any positive semiring.
\begin{theorem}[\cite{hannula2023conditionalindependencesemiringrelations}]
    The axiomatization $\mathcal{A}_{MI}$ is sound and complete for the implication problem for MIs over any semiring $K$.
\end{theorem}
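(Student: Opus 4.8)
Soundness of $\mathcal{A}_{MI}$ is already part of Proposition \ref{soundness}, so only completeness remains, and I would establish it contrapositively: assuming $\Sigma\not\vdash_{\mathcal{A}_{MI}}\tau$ for $\tau=\bar{x}\approx\bar{y}$, I would construct a single $K$-team $\mathbb{X}$ with $\mathbb{X}\models\Sigma$ but $\mathbb{X}\not\models\tau$. The guiding observation is that the construction can be made uniform in $K$. Take $\mathbb{X}$ to assign weight $1$ to every assignment in a finite relation $R$ over $D$ and weight $0$ elsewhere. Since $K$ is positive and nontrivial, every nonempty sum of copies of $1$ is distinct from $0$; consequently, if the multiset $\{\{ s(\bar{u}) : s\in R \}\}$ of $\bar{u}$-values equals the multiset $\{\{ s(\bar{v}) : s\in R \}\}$ of $\bar{v}$-values, then $|\mathbb{X}_{\bar{u}=\bar{a}}|=|\mathbb{X}_{\bar{v}=\bar{a}}|$ for every $\bar{a}$ in every positive $K$, so $\mathbb{X}\models\bar{u}\approx\bar{v}$; and if some value of $\bar{x}$ in $R$ is never taken by $\bar{y}$, then the corresponding marginals are nonzero and zero, so $\mathbb{X}\not\models\bar{x}\approx\bar{y}$ in every positive $K$. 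Thus it suffices to prove the following purely combinatorial statement, independent of the semiring: if $\Sigma\not\vdash_{\mathcal{A}_{MI}}\bar{x}\approx\bar{y}$, there is a finite $R$ whose value-multisets agree across every atom of $\Sigma$ yet in which $\bar{x}$ realizes a value that $\bar{y}$ does not.

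This combinatorial claim is the symmetric counterpart of the completeness of the Casanova et al.\ axioms IND1--IND3 for inclusion dependencies \cite{casanova82}: the condition $X(\bar{u})=X(\bar{v})$ is the conjunction of $\bar{u}\subseteq\bar{v}$ and $\bar{v}\subseteq\bar{u}$, and MI1--MI4 are precisely the symmetrisations of IND1--IND3. I would therefore build $R$ by a chase/fixpoint procedure. Start from a seed assignment that gives $\bar{x}$ a distinguished value-tuple $\bar{a}^\ast$ and gives all remaining variables pairwise distinct fresh values. Then repeatedly repair every atom $\bar{u}\approx\bar{v}\in\Sigma$: whenever the current value-multisets of $\bar{u}$ and $\bar{v}$ differ, add new assignments that copy the missing value onto the deficient side and pad all unconstrained positions with fresh pairwise-distinct values. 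The projection rule MI4 is accommodated by applying the repair to all coordinate patterns $\rho(\bar{u}),\rho(\bar{v})$ simultaneously. Assigning weight $1$ to the resulting relation yields the desired $\mathbb{X}$.

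The crux is a faithfulness invariant linking the chase to syntactic derivability: I would show that the set of position-tuples at which the distinguished value $\bar{a}^\ast$ can appear during the chase is closed under, and generated by, exactly the MI rules, so that it coincides with the syntactic closure $\{\bar{z} : \Sigma\vdash_{\mathcal{A}_{MI}}\bar{x}\approx\bar{z}\}$. Non-derivability of $\bar{x}\approx\bar{y}$ then guarantees that $\bar{a}^\ast$ never propagates to the $\bar{y}$-positions, giving the required missing value, while the repair loop ensures multiset-balance for every atom of $\Sigma$. The main obstacles are (i) proving termination of the chase---the padding introduces fresh values, so I must bound propagation to the finitely many patterns in the closure and show the multiset repairs converge rather than cascade indefinitely; and (ii) the bookkeeping for MI4, ensuring that padding with fresh, pairwise-distinct values never creates a spurious coincidence that would either validate $\tau$ or break the balance of some atom of $\Sigma$. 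Positivity and nontriviality of $K$ are used only at the very end, to convert the purely set/multiset-theoretic counterexample into a valid $K$-team for every positive semiring at once; this is exactly what lets a single argument cover all semirings $K$, as in the probabilistic special case \cite{HANNULA22}.
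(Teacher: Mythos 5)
First, note that the paper itself contains no proof of this theorem: it is imported verbatim from \cite{hannula2023conditionalindependencesemiringrelations} (building on the probabilistic case \cite{HANNULA22}), so there is no in-paper argument to compare against and your attempt must be judged on its own merits. Your reduction step is sound and well-conceived: with all weights equal to $1$, equal value-multisets yield equal marginals over \emph{every} semiring (equal counts of $1$'s), and a value realized by $\bar{x}$ but never by $\bar{y}$ yields a nonzero-versus-zero violation over every positive nontrivial semiring. You also correctly identify that the violation must be of the zero/nonzero kind --- a violation witnessed merely by unequal nonzero counts would not transfer to semirings such as $\mathbb{B}$, where $1+1=1$. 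So the theorem does reduce to your semiring-independent combinatorial claim.

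The genuine gap is in the proof of that claim: the chase you describe does not terminate, and this is not a side issue to be checked later but the technical heart of the proof. With ``pad all unconstrained positions with fresh pairwise-distinct values,'' every repair of an atom $\bar{u}\approx\bar{v}$ adds a tuple whose $\bar{u}$-value is entirely fresh, hence absent from the $\bar{v}$-multiset, so the very atom you just repaired is violated again; already for $\Sigma=\{x\approx y\}$ over $D=\{x,y,z\}$ the loop runs forever, with the deficiency count never dropping below one. Termination forces you to \emph{reuse} values so that deficiencies cancel one another: both directions of $x\approx y$ must be repaired by a single tuple such as $(f_1,a^{\ast},\cdot)$, and for overlapping atoms such as $xy\approx yz$ no single tuple works and one needs a cyclic arrangement, e.g.\ $R=\{(a,b,c),(d,a,b),(b,c,d),(c,d,a)\}$, whose $xy$- and $yz$-projections agree as multisets. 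Designing this value-reuse in general (a direct, Casanova-et-al.-style closed-form construction rather than a fresh-padding chase) and then proving your ``faithfulness invariant'' --- that $\bar{a}^{\ast}$ reaches exactly the positions in the syntactic closure, and that no reused value creates a spurious coincidence --- is precisely the content of the completeness proof, and it is asserted rather than carried out in your proposal.
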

Let $\mathcal{A}_{SCI+FD}:=\mathcal{A}_{SCI}\cup\mathcal{A}_{FD}\cup\mathcal{A}_{SCI\& FD}$
The following result is known for the implication problem of SCIs+FDs.
\begin{theorem}[\cite{hannula2023conditionalindependencesemiringrelations}]
    The axiomatization $\mathcal{A}_{SCI+FD}$ is sound and complete for the implication problem for SCIs+FDs over any multiplicatively cancellative and commutative semiring $K$.
\end{theorem}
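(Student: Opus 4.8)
The natural route is a reduction to the classical theory of functional and multivalued dependencies over relational databases, where a sound and complete axiomatization is known. The plan is to exploit the fact that $\mathcal{A}_{SCI+FD}$ mirrors, atom for atom, that classical axiomatization, so that $\Sigma\vdash_{\mathcal{A}_{SCI+FD}}\tau$ holds exactly when the corresponding relational FD/MVD statement is derivable. Soundness is already available from the preceding cited propositions, so only completeness needs attention, and I would establish it in contrapositive form: assuming $\Sigma\not\vdash_{\mathcal{A}_{SCI+FD}}\tau$, I would construct a $K$-team that satisfies $\Sigma$ but violates $\tau$.

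Since the syntactic derivation fails and the axioms coincide with the relational ones, the classical completeness result for FDs and MVDs yields a finite relation $R$ over $D$ that satisfies every FD/MVD in $\Sigma$ but violates the FD/MVD corresponding to $\tau$. The core step is a \emph{lifting lemma}: turn $R$ into a $K$-team $\mathbb{X}$ by assigning weight $1$ to every tuple of $R$ and $0$ to all other assignments, and show that for every FD and every SCI $\sigma$ over $D$ one has $\mathbb{X}\models\sigma$ if and only if $R$ satisfies the corresponding relational dependency. For FDs this is immediate, since $\Supp(\mathbb{X})=R$ and the satisfaction of an FD depends only on the support.

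The SCI case is where the real work lies. For a saturated atom $\bar{y}\perp_{\bar{x}}\bar{z}$ with $\Var(\bar{x}\bar{y}\bar{z})=D$, the marginal $|\mathbb{X}_{\bar{x}\bar{y}\bar{z}=\bar a\bar b\bar c}|$ is just the weight of a single assignment, so the defining condition reduces, for each full assignment $s$, to $\mathbb{X}(s)\times|\mathbb{X}_{\bar{x}=s(\bar x)}| = |\mathbb{X}_{\bar{x}\bar{y}=s(\bar x\bar y)}|\times|\mathbb{X}_{\bar{x}\bar{z}=s(\bar x\bar z)}|$. With uniform weights each marginal is a sum of $1$'s, i.e.\ the image under the canonical semiring homomorphism $\mathbb{N}\to K$ of a cardinality. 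Using the classical rectangle (Cartesian-product) characterization of MVDs---$R$ satisfies $\bar{x}\twoheadrightarrow\bar{y}$ iff within each $\bar{x}$-block the set of $(\bar{y},\bar{z})$-values is a full product---I would check that when the MVD holds both sides equal the image of $p\cdot q$, where $p$ and $q$ count the distinct $\bar y$- and $\bar z$-values in the block; the identity then holds in $K$ because $\mathbb{N}\to K$ is a homomorphism. Conversely, if the MVD fails there is a ``hole'': an assignment $s$ with $\mathbb{X}(s)=0$ but $|\mathbb{X}_{\bar{x}\bar{y}=s(\bar x\bar y)}|$ and $|\mathbb{X}_{\bar{x}\bar{z}=s(\bar x\bar z)}|$ both nonzero, and here positivity of $K$ forces the right-hand product to be nonzero while the left-hand side is $0$, so the SCI fails in $\mathbb{X}$.

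Putting these together yields $\mathbb{X}\models\Sigma$ and $\mathbb{X}\not\models\tau$, hence $\Sigma\not\models_K\tau$, completing the contrapositive. The main obstacle is precisely this SCI direction of the lifting lemma: one must confirm that the purely multiplicative satisfaction condition of a saturated independence atom translates exactly into the combinatorial rectangle condition defining an MVD, with positivity supplying the nonvanishing of products in the failure case and the homomorphism property of $\mathbb{N}\to K$ supplying the matching of products in the success case. Note that multiplicative cancellativity plays no role in the lifting itself; it enters the theorem only through the already-established soundness of the SCI axioms.
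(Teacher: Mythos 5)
The paper itself contains no proof of this theorem: it is imported verbatim from \cite{hannula2023conditionalindependencesemiringrelations}, so there is no in-paper argument to compare against. Your proof is, however, correct, and it is a legitimate self-contained alternative to the citation. The axioms of $\mathcal{A}_{SCI+FD}$ are exactly the classical Beeri--Fagin--Howard axioms for FDs and MVDs written in the $\twoheadrightarrow$ notation, so $\Sigma\not\vdash_{\mathcal{A}_{SCI+FD}}\tau$ does yield a finite relation $R$ separating $\Sigma$ from $\tau$ relationally, and your lifting lemma for the uniform weight-$1$ team is sound: for a saturated atom the full-tuple marginal is $0$ or $1$, every other marginal is the image of a tuple count under the canonical homomorphism $\mathbb{N}\to K$, so when the MVD holds both sides of the SCI condition equal the image of the block-size product, while a ``hole'' forces a zero left-hand side against a product of nonzero factors, which is nonzero by positivity (the paper's standing assumption). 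Notably, this is the same transfer technique the paper itself uses for its new results: Lemma~\ref{lemmafd} together with Theorems~\ref{INDax}, \ref{FDUINDax}, and \ref{UFDUINDIAthr} lifts Boolean counterexamples using an idempotent nonzero weight $k$, and Theorem~\ref{FDUMIUMDEthrm} lifts with uniform weight $1$ in the additively cancellative case; your weight-$1$ version is precisely the variant that works without positing an idempotent element, because the saturated-atom condition factors through $\mathbb{N}\to K$. Your closing remark is also worth retaining: commutativity and multiplicative cancellativity are needed only for soundness of the SCI axioms, whereas the completeness half of the theorem requires nothing beyond positivity and nontriviality of $K$.
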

We now show that many axiomatizations originally introduced in the relational or probabilistic setting, i.e, $K=\mathbb{B}$ or $K=\mathbb{R}_{\geq0}$, are complete more generally over some classes of semirings. Recall that throughout this paper we assume that the semiring $K$ is positive. Since the soundness of the axiomatizations follows from the results of Section \ref{axioms}, it suffices to prove the completeness for each theorem. 

Separate considerations for semirings with and without additive cancellativity are important for our results because additive cancellativity affects the symmetry properties of some atoms. Recall that if $K$ is additively cancellative, the atoms $\bar{x}\leq\bar{y}$ and $\bar{x}\approx\bar{y}$, and the atoms $\bar{x}\subseteq^*\bar{y}$ and $\bar{x}\approx^*\bar{y}$ are equivalent. If $K$ is not additively cancellative, we can consider the non-symmetric variants $\bar{x}\leq\bar{y}$ and $\bar{x}\subseteq^*\bar{y}$ instead.

\subsection{Classes of semirings with additive cancellativity}

We first present some results for additively cancellative semirings.
We show that the axiomatization
$\mathcal{A}_{FD+UMI+UMDE}:=\mathcal{A}_{FD}\cup\mathcal{A}_{UMI}\cup\mathcal{A}_{UMDE}\cup\mathcal{A}_{UMI \& UMDE} \\\cup\mathcal{A}_{CYCLE-UMDE}$
is sound and complete for the implication problem for FDs+\\UMIs+UMDEs over any additively cancellative semiring $K$. The proof is a simple modification of the proof for the analogous completeness result in the more restricted case of probability distributions \cite{hirvonen24}. 

Let $\mathcal{A}$ be an axiomatization and $D$ a set of variables such that $\Var(\Sigma)\subseteq D$. Define the closure of $\Sigma$ under $\mathcal{A}$ and $D$ as $\Cl_{\mathcal{A},D}(\Sigma):=\{\sigma:\Sigma\vdash_{\mathcal{A}}\sigma,\Var(\sigma)\subseteq D\}$.
If $\mathcal{A}$ and $D$ are clear from the context, we just write $\Cl(\Sigma)$. 
The following definitions from \cite{hirvonen24} describe the construction of the set $X$ that will be used in Lemma \ref{FDUMIUMDE_lemma}.
\begin{definition}[\cite{hirvonen24}]\label{graph_def}
Let $\Sigma$ be a set of FDs, UMIs, and UMDEs such that $\Var(\Sigma)\subseteq D$. Denote by  $G(\Sigma)$ a multigraph defined as follows
\begin{itemize}
\item[(i)] the vertices of $G(\Sigma)$ are the variables $D$,
\item[(ii)] if $x\approx y\in\Sigma$, then there is an undirected black edge between $x$ and $y$,
\item[(iii)] if $x\approx^* y\in\Sigma$, then there is an undirected blue edge between $x$ and $y$, 
\item[(iv)] if $\dep(x,y)\in\Sigma$, then there is a directed red edge from $x$ to $y$.
\end{itemize}
\end{definition}
If there are two red directed edges from $x$ to $y$ and from $y$ to $x$, they are viewed as a single red undirected edge between $x$ and $y$.
\begin{lemma}[\cite{hirvonen24}]\label{graph_prop_lemma}
Let $\Sigma$ be a set of FDs, UMIs and UMDEs such that $\Var(\Sigma)\subseteq D$ and let $\Delta:=\Cl(\Sigma)$. Then the graph $G(\Delta)$ has the following properties.
\begin{itemize}
\item[(i)] Each vertex has a black, blue, and red self-loop.
\item[(ii)] The black, blue, and red subgraphs of $G(\Delta)$   are transitively closed. 
\item[(iii)] The black subgraph of $G(\Delta)$ is a subgraph of the blue subgraph of $G(\Delta)$. 
\item[(iv)] The subgraphs induced by the strongly connected components of $G(\Delta)$   are undirected. Each such component contains a black, blue, and red undirected subgraph. In the subgraph of each color, the vertices of the component can be partitioned into a collection of disjoint cliques.
All the vertices of the component belong to a single blue clique.
\item[(v)] If $\dep(\bar{x},y)\in\Delta$ and the vertices $\bar{x}$ have a common ancestor $z$ in the red subgraph of $G(\Delta)$, then there is a red edge from $z$ to $y$.
\end{itemize}
\end{lemma}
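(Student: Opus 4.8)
The plan is to verify the five properties one at a time, tracing exactly which axioms of $\mathcal{A}_{FD+UMI+UMDE}$ force each structural feature of the closure graph $G(\Delta)$. Properties (i)--(iii) are immediate. The black, blue and red self-loops in (i) are the atoms $x\approx x$, $x\approx^* x$ and $\dep(x,x)$, which lie in $\Delta$ by the reflexivity axioms (the unary forms of MI1 and MDE1, and FD1). The transitive closure in (ii) is exactly the content of the unary MI3, MDE3 and of FD2. Property (iii) is the single axiom MI \& MDE, which turns every black edge $x\approx y$ into a blue edge $x\approx^* y$.

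For property (v), I would first observe that a common red ancestor $z$ of the variables of $\bar x$ means $\dep(z,x_i)\in\Delta$ for every $x_i\in\Var(\bar x)$, since a red path collapses to a red edge by the transitive closure (ii). Using the union rule for FDs---derivable from augmentation FD3 and transitivity FD2 in the standard Armstrong fashion---these combine to $\dep(z,\bar x)\in\Delta$. One final application of FD2 with the hypothesis $\dep(\bar x,y)\in\Delta$ yields $\dep(z,y)\in\Delta$, i.e.\ the required red edge from $z$ to $y$.

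The substantial part is property (iv), and here the $k$-CYCLE-UMDE rules do the work. Fix a strongly connected component $C$ and a red edge $x\to y$ inside it. Strong connectedness supplies a directed path from $y$ back to $x$, which together with $x\to y$ closes a directed cycle. I would normalize this cycle into the exact shape of a cycle-rule premise: first replace every black edge by a blue edge via (iii); then break every maximal run of equal-colored edges by inserting a self-loop of the opposite color (available by (i)), so that the colors strictly alternate red/blue around the cycle. A strictly alternating cycle automatically has even length, which is precisely the parity $k$ odd demanded by $k$-CYCLE-UMDE; after rotating the base point onto a red edge the cycle is literally an instance of the premise, since blue edges may be traversed in either direction by the symmetry MDE2 and inserted red self-loops $\dep(v,v)$ sit harmlessly at even positions. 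Applying the rule reverses every red edge---in particular it produces $\dep(y,x)$---and simultaneously adds a blue edge $x\approx^* y$ at each former red position. Reversibility of all red edges shows the red subgraph on $C$ is undirected; together with (i) and (ii) this makes the red, black and blue relations on $C$ reflexive, symmetric and transitive, hence equivalence relations, so each partitions $C$ into disjoint cliques. Running the same argument on a cycle through any two vertices $u,v\in C$ joins them by a blue path (directly if the cycle is all blue, otherwise via the blue edges just created), so by blue transitivity $C$ is a single blue clique.

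The main obstacle is precisely this normalization step in (iv): one must convert an arbitrary directed cycle of the mixed graph into a strictly alternating red/blue cycle that is a genuine instance of some $k$-CYCLE-UMDE premise, taking care that red edges are only ever traversed in their given direction while blue edges are free, and that the inserted self-loops fix the length to be even (equivalently $k$ odd) without changing which atoms actually lie in $\Delta$. Once this bookkeeping is arranged, the argument is the same as for probability distributions in \cite{hirvonen24}, since none of these graph manipulations uses any property of the semiring beyond the soundness of the cycle rules already established in Proposition \ref{soundness}.
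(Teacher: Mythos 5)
Your proof is correct and follows essentially the same route as the paper relies on: properties (i)--(iii) and (v) come directly from the reflexivity, transitivity, augmentation, and MI\,\&\,MDE axioms, and (iv) is obtained by normalizing a directed closed walk inside a strongly connected component into a strictly alternating FD/UMDE cycle (black edges turned blue, self-loops inserted to force alternation, base point rotated onto a red edge, so the even length matches the required odd $k$) and applying $k$-CYCLE-UMDE to reverse every red edge and supply the blue edges that make the component a single blue clique. Note that this paper gives no proof of Lemma~\ref{graph_prop_lemma} but imports it from \cite{hirvonen24}; your reconstruction matches the intended argument, including the key point that only the CYCLE-UMDE rule is needed (and indeed it is the only cycle rule in $\mathcal{A}_{FD+UMI+UMDE}$), since MI\,\&\,MDE converts black edges into blue ones.
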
 

Now we assign a unique number, called the \emph{scc-number}, to each strongly connected component of $G(\Delta)$ such that the scc-number of a descendant component is always greater than the number of its ancestor component.

\begin{definition}[\cite{hirvonen24}]\label{addcan_def}
Let $\Sigma$ and $G(\Delta)$ be as in Lemma \ref{graph_prop_lemma} and assign an scc-numbering to  $G(\Delta)$. Define a team $X$ over the variables $D$ as follows.
\begin{itemize}
\item[(0)]  Let $\Delta'$ be a set of all nonunary functional dependencies $\dep(\bar{x},\bar{y})\not\in\Delta$ where $\Var(\bar{x}\bar{y})\subseteq D$. 
\begin{itemize}
    \item[(a)] For each $C\subseteq D$ with $|C|\geq 2$, check if there is $\dep(\bar{x},\bar{y})\in\Delta'$ such that $\Var(\bar{x})=C$. If yes, add a tuple with zeroes in exactly those positions $z$ that are functionally determined by $\bar{x}$, i.e., those $z$ for which $\dep(\bar{x},z)\in\Delta$. Leave all the other positions in the tuple empty.
\end{itemize} 
\item[(i)] Add a tuple of all zeroes, i.e., an assignment $s$ such that $s(x)=0$ for all $x\in D$.
\item[(ii)] Process each strongly connected component in turn, starting with the one with the smallest scc-number and then proceeding in the ascending order of the numbers. For each strongly connected component, handle each of its maximal red cliques in turn.
\begin{itemize}
\item[(a)] For each maximal red clique $k$, add a tuple with zeroes in the columns corresponding to the variables in $k$ and to the variables that are in any red clique that is a red descendant of $k$. Leave all the other positions in the tuple empty for now.
\item[(b)] Choose a variable in $k$ and count the number of zeroes in a column corresponding to the chosen variable. It suffices to consider only one variable, because the construction ensures that all the columns corresponding to variables in $k$ have the same number of zeroes. Denote this number by $\text{count}(k)$.
\end{itemize}
After adding one tuple for each maximal red clique, check that the $\text{count}(k)$ is equal for every clique $k$ in the current component and strictly greater than the count of each clique in the previous component. If it is not, repeat some of the tuples added to make it so.
\item[(iii)] The last component is a single red clique consisting of those variables $x$ for which $\dep(x)\in\Delta$, if any. Each variable in this clique functionally depends on all the other variables in the graph, so we do not leave any empty positions in its column. Therefore the columns corresponding to these variables contain only zeroes. If there are no variables $x$ for which $\dep(x)\in\Delta$, we finish processing the last component by adding one tuple with all positions empty.
\item[(iv)] After all strongly connected components have been processed, we start filling the empty positions. Process again each strongly connected component in turn, starting with the one with the smallest scc-number. For each strongly connected component, count the number of maximal black cliques. If there are $n$ such cliques, number them from 0 to $n-1$. Then handle each maximal black clique $k$, $0\leq k\leq n-1$ in turn.
\begin{itemize}
\item[(a)] For each column in clique $k$, count the number of empty positions. If the column has $d>0$ empty positions, fill them with numbers $1,\dots d-1,d+k$ without repetitions. (Note that each column in $k$ has the same number of empty positions.) 
\end{itemize}
\item[(v)] If there are variables $x$ for which $\dep(x)\in\Delta$, they are all in the last component. The corresponding columns contain only zeroes and have no empty positions. As before, count the number of maximal black cliques. If there are $n$ such cliques, number them from 0 to $n-1$. Then handle each maximal black clique $k$, $0\leq k\leq n-1$ in turn.
\begin{itemize}
\item[(a)] For each column in clique $k$, change all the zeroes into $k$'s. 
\end{itemize}
\end{itemize}
\end{definition}
The point of step (0) is to ensure that $X$ does not satisfy any nonunary FDs that are not in $\Delta$. The other steps ensure that $X$ satisfies the nonunary FDs in $\Delta$, and that $X$ satisfies exactly those unary FDs or constant atoms that are in $\Delta$. The counting and filling in the empty positions is done to make sure that the number appearances of each element in a column is the same for exactly those variables that are in the same black clique, and, similarly, the multiset of the elements in each column is the same for exactly those variables that are in the same blue clique.
\begin{lemma}\label{FDUMIUMDE_lemma}
    Let $\Sigma$ be a set of FDs, UMIs, and UMDEs such that $\Var(\Sigma)\subseteq D$. Then there exists $A=\{0,\dots,n\}$, and a set $X$ of assignments $s\colon D\to A$ that satisfies the following properties, where $(A_u,m_u):=\{\{s(u)\in A:s\in X\}\}$ for all $u\in D$. 
    \begin{itemize}
        \item[(i)] for $\sigma:=\dep(\bar{x},\bar{y})$, $\sigma\in\Cl(\Sigma)$ iff for all $s,s'\in X$, $s(\bar{x})=s'(\bar{x})$ implies $s(\bar{y})=s'(\bar{y})$,
       \item[(ii)] for $\sigma:=x\approx y$, $\sigma\in\Cl(\Sigma)$ iff $(A_x,m_x)=(A_y,m_y)$,
       \item[(iii)] for $\sigma:=x\approx^* y$, $\sigma\in\Cl(\Sigma)$ iff $\{\{m_x(i):i\in A_x\}\}=\{\{m_y(i):i\in A_y\}\}$.
    \end{itemize}
\end{lemma}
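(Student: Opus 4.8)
The plan is to take $\Delta := \Cl(\Sigma)$, form the multigraph $G(\Delta)$ of Definition~\ref{graph_def}, fix an scc-numbering, and let $X$ (together with the value set $A=\{0,\dots,n\}$) be exactly the team produced by the construction of Definition~\ref{addcan_def}. Since that construction already outputs such a pair $(A,X)$, the existence claim is immediate and all the work lies in verifying the three biconditionals (i)--(iii). Throughout I would exploit the structural facts of Lemma~\ref{graph_prop_lemma}: transitive closure of the colored subgraphs, the containment of the black subgraph in the blue one, and the clique decomposition of each strongly connected component, so that membership of an atom in $\Delta$ is read off from a colored edge and satisfaction in $X$ is read off from the explicit tuples.

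For (i), I would treat the two directions separately. For the forward direction, suppose $\dep(\bar{x},\bar{y})\in\Delta$; by transitivity (Lemma~\ref{graph_prop_lemma}(ii)) it suffices to handle a single target $y$ with $\dep(\bar{x},y)\in\Delta$, and one checks that the coordinated placement of zeros over each red clique and its red descendants, together with the ancestor property of Lemma~\ref{graph_prop_lemma}(v), forces any two assignments agreeing on $\bar{x}$ to agree on $y$. For the converse I would use the contrapositive. If $\dep(\bar{x},\bar{y})\notin\Delta$ is nonunary, then step~(0) inserts a tuple whose $\bar{x}$-coordinates are all zero (each is functionally determined by $\bar{x}$ via reflexivity) but which is left empty, hence later filled with a distinct nonzero value, on some coordinate $y_j$ of $\bar{y}$ with $\dep(\bar{x},y_j)\notin\Delta$; this tuple and the all-zero tuple of step~(i) witness a violation. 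The remaining unary FDs and constancy atoms not in $\Delta$ are violated by the distinct fillings of steps~(iv)--(v), using that a variable is constant in $X$ exactly when it sits in the final component of step~(iii).

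For (ii) and (iii) I would establish the two correspondences $x\approx y\in\Delta \iff x,y$ share a \emph{black} clique and $x\approx^* y\in\Delta \iff x,y$ share a \emph{blue} clique, which are exactly the edge relations of the transitively closed subgraphs by Lemma~\ref{graph_prop_lemma}(iv). The filling offset $d+k$ chosen for black clique $k$ in step~(iv) is designed so that two columns receive identical value-multisets $(A_x,m_x)=(A_y,m_y)$ precisely when they lie in a common black clique, which gives (ii). For (iii) the coarser invariant is the multiset of multiplicities $\{\{m_x(i):i\in A_x\}\}$: since all vertices of a component lie in a single blue clique and the black cliques only shift \emph{which} nonzero values occur (not \emph{how often}), the multiplicity profiles coincide throughout a blue clique while differing across components, matching blue-edge membership in $\Delta$.

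The main obstacle I anticipate is the exact-equality bookkeeping in the converse halves: one must verify that the zero-counts $\mathrm{count}(k)$ fixed in step~(ii) and the distinct offsets of steps~(iv)--(v) yield value-multisets that agree \emph{precisely} on common black cliques and multiplicity-multisets that agree \emph{precisely} on common blue cliques, ruling out any accidental collision between distinct cliques or distinct components. This requires tracking, component by component, both the number of zeros and the nonzero fillings, and invoking the transitivity and containment properties of Lemma~\ref{graph_prop_lemma}(ii)--(iv) to exclude coincidences not forced by a colored edge. As these invariants are exactly those established for probability distributions in~\cite{hirvonen24}, the argument is a direct adaptation; the only change is that tuple weights are now arbitrary nonzero elements of an additively cancellative semiring rather than reals, which leaves the purely combinatorial counting untouched.
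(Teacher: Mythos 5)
Your proposal takes exactly the paper's route: the paper's proof likewise defines $X$ via Definition~\ref{addcan_def} applied to $G(\Cl(\Sigma))$ and defers the verification of (i)--(iii) to \cite{hirvonen24}, which is precisely the adaptation you describe. Your more detailed sketch of the clique/component bookkeeping (and your closing observation that the lemma is purely combinatorial, weights entering only later in Theorem~\ref{FDUMIUMDEthrm}) is consistent with the construction's stated intent, so there is no gap.
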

\begin{proof}
    The set $X$ can be defined as in Definition \ref{addcan_def}. The satisfaction of the listed properties follows from \cite{hirvonen24}. 
\end{proof}

\begin{theorem}\label{FDUMIUMDEthrm}
    The axiomatization $\mathcal{A}_{FD+UMI+UMDE}$ is sound and complete for the implication problem for FDs+UMIs+UMDEs over any additively cancellative semiring $K$.
\end{theorem}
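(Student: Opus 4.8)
The plan is to handle soundness and completeness separately. Soundness of every component of $\mathcal{A}_{FD+UMI+UMDE}$ over an arbitrary (positive) semiring is already supplied by Proposition \ref{soundness}, so the whole task reduces to completeness. I would prove completeness by contraposition: assuming $\Sigma\not\vdash_{\mathcal{A}}\tau$ for a set $\Sigma\cup\{\tau\}$ of FDs, UMIs, and UMDEs, and setting $D:=\Var(\Sigma\cup\{\tau\})$, I would exhibit a single $K$-team $\mathbb{X}$ over $D$ with $\mathbb{X}\models\Sigma$ but $\mathbb{X}\not\models\tau$, witnessing $\Sigma\not\models_K\tau$.

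The combinatorial engine is Lemma \ref{FDUMIUMDE_lemma}, which builds a \emph{set} $X$ of assignments $s\colon D\to A$ whose purely combinatorial profile (FD-satisfaction, the column multisets $(A_u,m_u)$, and the multisets of multiplicities) captures membership in $\Cl(\Sigma)$ exactly. The genuinely new work, compared with the probabilistic case of \cite{hirvonen24}, is transporting this profile faithfully into an arbitrary additively cancellative semiring. The key semiring fact I would isolate first is that, for a positive, nontrivial, additively cancellative $K$, the map $\iota\colon\mathbb{N}\to K$, $\iota(n):=n\cdot 1=\underbrace{1+\dots+1}_{n}$, is injective. Indeed, positivity forces $\iota(n)\neq 0$ for every $n\geq 1$ (otherwise $1+\iota(n-1)=0$ would give $1=0$), and if $\iota(m)=\iota(n)$ with $m<n$ then $\iota(m)=\iota(m)+\iota(n-m)$, so additive cancellativity yields $\iota(n-m)=0$, contradicting the previous point.

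With $\iota$ injective, I would take $\mathbb{X}\colon X\to K$ to be the counting team $\mathbb{X}(s):=1$ for all $s\in X$; then $\Supp(\mathbb{X})=X$ and $|\mathbb{X}_{u=a}|=m_u(a)\cdot 1=\iota(m_u(a))$ for every variable $u$ and value $a$, and $X(u)=A_u$. Injectivity of $\iota$ now upgrades each of the three biconditionals of Lemma \ref{FDUMIUMDE_lemma} to a statement about $\mathbb{X}$: FD-satisfaction is unchanged since FDs depend only on $\Supp(\mathbb{X})=X$; we have $\mathbb{X}\models x\approx y$ iff $\iota(m_x(a))=\iota(m_y(a))$ for all $a$ iff $(A_x,m_x)=(A_y,m_y)$; and $\mathbb{X}\models x\approx^* y$ iff the $K$-multisets $\{\{\iota(m_x(a)):a\in X(x)\}\}$ and $\{\{\iota(m_y(a)):a\in X(y)\}\}$ coincide, which by injectivity of $\iota$ is equivalent to $\{\{m_x(a):a\in A_x\}\}=\{\{m_y(a):a\in A_y\}\}$. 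Hence for every atom $\sigma$ with $\Var(\sigma)\subseteq D$ we obtain $\mathbb{X}\models\sigma$ iff $\sigma\in\Cl(\Sigma)$. Since $\Sigma\subseteq\Cl(\Sigma)$ while $\tau\notin\Cl(\Sigma)$, this yields $\mathbb{X}\models\Sigma$ and $\mathbb{X}\not\models\tau$, completing the contrapositive.

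The main obstacle is not the combinatorics, which is inherited wholesale from Lemma \ref{FDUMIUMDE_lemma} and \cite{hirvonen24}, but the clean identification of the single structural property of $K$ that makes the counting team a faithful surrogate for the probabilistic construction. The delicate point is verifying that injectivity of $n\mapsto n\cdot 1$ is really forced by positivity and additive cancellativity \emph{together}, since neither alone suffices (the Boolean semiring is positive but not additively cancellative, and a semiring such as $\mathbb{Z}/2\mathbb{Z}$ is additively cancellative but not positive). A related care point is that the $\approx^*$ case genuinely compares multisets of $K$-elements rather than of integers, so injectivity of $\iota$ must be invoked at the level of multisets and not merely of single values.
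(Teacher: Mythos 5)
Your proposal is correct and follows essentially the same route as the paper: completeness by contraposition using the counting team $\mathbb{X}(s)=1$ on the set $X$ from Lemma \ref{FDUMIUMDE_lemma}, with the three biconditionals transferred to $K$ via the injectivity of $n\mapsto n\cdot 1$, which the paper invokes in the form $\sum_{1\leq j\leq m_x(i)}1\neq\sum_{1\leq j\leq m_y(i)}1$ whenever $m_x(i)\neq m_y(i)$, citing positivity and additive cancellativity. Your only addition is spelling out the proof of that injectivity (and noting it must be applied at the level of multisets for $\approx^*$), which the paper leaves implicit.
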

\begin{proof} 
Construct the set $X$ as in Lemma \ref{FDUMIUMDE_lemma}, and define $\mathbb{X}\colon X\to K$ as $\mathbb{X}(s)=1$ for all $s\in X$. We show that $\mathbb{X}\models\sigma$ iff $\sigma\in\Cl(\Sigma)$.
Assume first that $\sigma:=\dep(\bar{x},\bar{y})$. Note that $\Supp(\mathbb{X})=X$, so by Lemma \ref{FDUMIUMDE_lemma}, item (i), we have $\mathbb{X}\models\sigma$ iff $\sigma\in\Sigma$.

Assume then that $\sigma:=x\approx y$. Extend the functions $m_u:A_u\to\{1,2,\dots\}$ to $A$ by defining $m_u(i)=0$ for any $i\in A\setminus A_u$. Note that by positivity and additive cancellativity of $K$, $\sum_{1\leq j\leq m_x(i)}1\neq\sum_{1\leq j\leq m_y(i)}1$ for any $i\in A$ such that $m_x(i)\neq m_y(i)$. By Lemma \ref{FDUMIUMDE_lemma}, item (ii), we have $\sigma\in\Cl(\Sigma)$ iff $(A_x,m_x)=(A_y,m_y)$. By the definition of $\mathbb{X}$, we then have $(A_x,m_x)=(A_y,m_y)$ iff $|\mathbb{X}_{x=i}|=\sum_{1\leq j\leq m_x(i)}1=\sum_{1\leq j\leq m_y(i)}1=|\mathbb{X}_{y=i}|$ for all $i\in A$. Hence, $\mathbb{X}\models\sigma$ iff $\sigma\in\Cl(\Sigma)$.

Assume finally that $\sigma:=x\approx^* y$. Then by Lemma \ref{FDUMIUMDE_lemma}, item (iii), $\sigma\in\Cl(\Sigma)$ iff $\{\{m_x(i):i\in A_x\}\}=\{\{m_y(i):i\in A_y\}\}$ iff $\{\{|\mathbb{X}_{x=i}|:i\in X(x)\}\}=\{\{|\mathbb{X}_{y=i}|:i\in X(y)\}\}$ iff $\mathbb{X}\models\sigma$. 
\end{proof}
Note that the proof of Theorem \ref{FDUMIUMDEthrm} also entails the existence of the so-called \emph{Armstrong relations} for the class FD+UMI+UMDE.
\begin{definition}
    Let $\Sigma$ be a set of atoms of class $\mathcal{C}$. We say that a $K$-team $\mathbb{X}$ is an \emph{Armstrong relation} for $\Sigma$ over $K$ if $\mathbb{X}\models\sigma$ iff $\Sigma\models\sigma$. We say that class $\mathcal{C}$ \emph{enjoys} \emph{Armstrong relations} over $K$ if every finite set of atoms of class $\mathcal{C}$ has an Armstrong relation over $K$.
\end{definition}
\begin{corollary}
    The class FD+UMI+UMDE enjoys Armstrong relations over any positive and additively cancellative semiring $K$.
\end{corollary}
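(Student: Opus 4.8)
The corollary follows almost immediately from the completeness theorem just proved, so the plan is to make the connection between completeness and Armstrong relations explicit. The key observation is definitional: $\mathbb{X}$ is an Armstrong relation for $\Sigma$ over $K$ precisely when $\mathbb{X}\models\sigma$ if and only if $\Sigma\models_K\sigma$, for every atom $\sigma$ of the class. I would begin by fixing an arbitrary finite set $\Sigma$ of FD+UMI+UMDE atoms with $\Var(\Sigma)\subseteq D$ for some finite $D$, and exhibiting the concrete $K$-team constructed in the proof of Theorem \ref{FDUMIUMDEthrm} as the required witness.

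Concretely, I would let $X$ be the set of assignments supplied by Lemma \ref{FDUMIUMDE_lemma} and define $\mathbb{X}\colon X\to K$ by $\mathbb{X}(s)=1$ for all $s\in X$, exactly as in the theorem. The theorem establishes that $\mathbb{X}\models\sigma$ if and only if $\sigma\in\Cl_{\mathcal{A},D}(\Sigma)$, where $\mathcal{A}=\mathcal{A}_{FD+UMI+UMDE}$. The remaining work is to rewrite the right-hand side as $\Sigma\models_K\sigma$. This is where soundness and completeness enter: by the soundness direction (Proposition \ref{soundness} together with the already-known soundness results), $\sigma\in\Cl(\Sigma)$ means $\Sigma\vdash_{\mathcal{A}}\sigma$, which implies $\Sigma\models_K\sigma$; conversely, by the completeness established in Theorem \ref{FDUMIUMDEthrm}, $\Sigma\models_K\sigma$ implies $\Sigma\vdash_{\mathcal{A}}\sigma$, i.e. $\sigma\in\Cl(\Sigma)$. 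Combining these two chains, $\sigma\in\Cl(\Sigma)$ if and only if $\Sigma\models_K\sigma$, and hence $\mathbb{X}\models\sigma$ if and only if $\Sigma\models_K\sigma$, which is exactly the Armstrong relation property.

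The only point requiring slight care is a matter of scope rather than difficulty. The equivalence $\mathbb{X}\models\sigma \Leftrightarrow \sigma\in\Cl(\Sigma)$ in the theorem is proved for atoms $\sigma$ whose variables lie in $D$, and $\Cl_{\mathcal{A},D}$ is itself defined relative to this $D$. I would therefore note that any atom $\sigma$ of the class that is logically implied by $\Sigma$ has $\Var(\sigma)\subseteq\Var(\Sigma)\subseteq D$, so restricting attention to atoms over $D$ loses nothing: every relevant atom is covered by the theorem's equivalence. Since $D$ is finite whenever $\Sigma$ is finite, the witness $\mathbb{X}$ is a genuine finite $K$-team. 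As this construction works for every finite $\Sigma$, the class FD+UMI+UMDE enjoys Armstrong relations over $K$. I do not expect any genuine obstacle here; the statement is essentially a repackaging of the completeness proof, and the main thing to get right is simply to invoke both soundness and completeness to identify $\Cl(\Sigma)$ with the set of logical consequences of $\Sigma$.
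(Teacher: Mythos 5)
Your proposal is correct and takes essentially the same route as the paper, which states this corollary as an immediate byproduct of the proof of Theorem \ref{FDUMIUMDEthrm}: the team $\mathbb{X}$ constructed there satisfies exactly the atoms in $\Cl(\Sigma)$, and soundness together with completeness identifies $\Cl(\Sigma)$ with the set of logical consequences of $\Sigma$, which is precisely the Armstrong relation property. One small inaccuracy in your scope remark: it is not true that every atom implied by $\Sigma$ has its variables in $\Var(\Sigma)$ (e.g., $\Sigma\models x\approx x$ for any fresh variable $x$), but this is harmless, since the satisfaction relation, and hence the Armstrong property, is only defined for atoms whose variables lie in the team's domain $D$.
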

It is known that for probability distributions (i.e., $K=\mathbb{R}_{\geq0}$), the only interaction between non-disjoint IAs and UMIs or UMDEs is via constancy atoms expressed as $x\perp x$, so non-disjoint IAs can be considered separately as disjoint IAs and UCAs \cite{hirvonen24conf}. Since IAs are required to be disjoint in our setting, we can consider disjoint IAs, UCAs (expressed as $\dep(x)$), UMIs, and UMDEs instead of non-disjoint IAs, UMIs, and UMDEs. We define $\mathcal{A}_{IA+UCA+UMI+UMDE}:=\mathcal{A}_{IA}\cup\{IA \ \& \ FD \ 1\}\cup\mathcal{A}_{UCA\& UMDE}\cup\mathcal{A}_{UMI}\cup\mathcal{A}_{UMDE}\cup\mathcal{A}_{UMI \& UMDE}$. The proof of the following lemma is a straightforward adaptation of the completeness proof for non-disjoint IAs, UMIs and UMDEs from \cite{hirvonen24conf}.
\begin{lemma}[\cite{hirvonen24conf}]\label{IAUMIUMDE_lemma}
        Let $\Sigma\cup\{\tau\}$ be a set of (disjoint) IAs, UCAs, UMIs, and UMDEs 
         such that $\Var(\Sigma\cup\{\tau\})\subseteq D$ and $\tau\notin\Cl(\Sigma)$. Then there exists 
         a $\mathbb{R}_{\geq0}$-team $\mathbb{X}:X\to\mathbb{R}_{\geq0}$ over $D$ such that $\mathbb{X}(s)=1/|X|$ for all $s\in X$, and $\mathbb{X}\models\Sigma$, but $\mathbb{X}\not\models\tau$.
\end{lemma}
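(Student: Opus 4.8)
The plan is to reduce the statement to the completeness result for non-disjoint IAs, UMIs, and UMDEs over $\mathbb{R}_{\geq0}$ established in \cite{hirvonen24conf}. The bridge is the observation that over $\mathbb{R}_{\geq0}$ the non-disjoint atom $x\perp x$ expresses precisely constancy: for a team $\mathbb{X}$, the satisfaction condition gives $|\mathbb{X}_{x=a}|\times|\mathbb{X}|=|\mathbb{X}_{x=a}|\times|\mathbb{X}_{x=a}|$ for every $a$, which over the reals forces $|\mathbb{X}_{x=a}|\in\{0,|\mathbb{X}|\}$, i.e.\ $\mathbb{X}\models\dep(x)$. Accordingly, I would first replace every UCA $\dep(x)$ occurring in $\Sigma\cup\{\tau\}$ by the non-disjoint atom $x\perp x$, obtaining a set $\Sigma'\cup\{\tau'\}$ of non-disjoint IAs, UMIs, and UMDEs, and put $\Delta:=\Cl(\Sigma)$.

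The hard part will be transferring non-derivability across this translation, namely showing that $\tau\notin\Cl(\Sigma)$ implies $\tau'\notin\Cl'(\Sigma')$, where $\Cl'$ denotes closure under the non-disjoint axiomatization of \cite{hirvonen24conf}. By contraposition this amounts to simulating every non-disjoint deduction of $\tau'$ from $\Sigma'$, after translating each $x\perp x$ back to $\dep(x)$, by a deduction in $\mathcal{A}_{IA+UCA+UMI+UMDE}$: the effect of a term $x\perp x$ inside a non-disjoint application of IA4 is recovered in our system by $\dep(x)$ via the axiom IA~\&~FD~1, and the interaction of constancy with marginal distributions is recovered by the rules in $\mathcal{A}_{UCA \& UMDE}$. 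Establishing this correspondence between the two closure operators is exactly where the adaptation of \cite{hirvonen24conf} is needed rather than a verbatim citation, and is the main technical obstacle.

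Once non-derivability has been transferred, I would invoke the construction of \cite{hirvonen24conf}: since $\tau'\notin\Cl'(\Sigma')$, it yields a uniform $\mathbb{R}_{\geq0}$-team $\mathbb{X}$ with $\mathbb{X}(s)=1/|X|$ for all $s\in X$ such that $\mathbb{X}\models\Sigma'$ but $\mathbb{X}\not\models\tau'$. Concretely, this team is assembled as a product over the independence blocks extracted from $\Delta$, so that cross-block independences hold automatically while every non-derivable IA is violated by a genuine correlation within a block, and the within-block value distributions are fixed by the black/blue clique analysis of the UMI/UMDE graph so that $x\approx y$ (resp.\ $x\approx^* y$) holds exactly when the atom lies in $\Delta$; the constant variables, i.e.\ those $x$ with $\dep(x)\in\Delta$, receive a fixed value throughout.

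Finally, I would translate back. Since the translation leaves IAs, UMIs, and UMDEs unchanged and only rewrites each $\dep(x)$ as $x\perp x$, and since $\mathbb{X}\models x\perp x$ iff $\mathbb{X}\models\dep(x)$ over $\mathbb{R}_{\geq0}$, we obtain $\mathbb{X}\models\Sigma$ and $\mathbb{X}\not\models\tau$. The verification that $\tau$ is actually falsified splits into the four cases $\tau\in\{\text{IA},\text{UCA},\text{UMI},\text{UMDE}\}$, each dispatched by reading off the relevant count or product condition directly from the explicit uniform team, which completes the argument.
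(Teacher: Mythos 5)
Your proposal hinges on a step that you explicitly flag as ``the main technical obstacle'' and then never discharge: the transfer of non-derivability, i.e., that $\tau'\in\Cl'(\Sigma')$ implies $\tau\in\Cl(\Sigma)$, where $\Cl'$ is the closure under the non-disjoint calculus of \cite{hirvonen24conf}. This is a genuine gap, not routine bookkeeping. A non-disjoint deduction of $\tau'$ from $\Sigma'$ may pass through genuinely non-disjoint intermediate atoms such as $xy\perp yz$, which have no counterpart at all in $\mathcal{A}_{IA+UCA+UMI+UMDE}$; simulating such a deduction requires translating every intermediate atom into disjoint IAs plus constancy atoms and then verifying, rule by rule, that each inference of the non-disjoint calculus remains admissible under this translation---in effect re-proving the separation result of \cite{hirvonen24conf} proof-theoretically. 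Note also that the tempting semantic shortcut (soundness of $\Cl'$ gives $\Sigma'\models_{\mathbb{R}_{\geq0}}\tau'$, the equivalence of $x\perp x$ and $\dep(x)$ over $\mathbb{R}_{\geq0}$ gives $\Sigma\models_{\mathbb{R}_{\geq0}}\tau$, hence $\tau\in\Cl(\Sigma)$) is circular: the last step is exactly the completeness of the disjoint system over uniform $\mathbb{R}_{\geq0}$-teams, which is what Lemma \ref{IAUMIUMDE_lemma} is being used to establish. Your translation $\dep(x)\leftrightarrow x\perp x$ and the back-translation of the counterexample team are fine; it is the closure correspondence that is missing.

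The paper avoids this issue entirely by not reducing between the two proof systems. Its proof directly adapts the \emph{constructions} of \cite{hirvonen24conf} and builds, for each shape of $\tau$, an explicit uniform counterexample team within the disjoint setting: a single-assignment team when $\tau$ is a UMI; a uniform distribution over a product $Z_1\times\dots\times Z_n\times U_1\times\dots\times U_m$ with $Z_i=\{0,1\}$ and $U_j=\{0\}$ when $\tau$ is a UMDE or a UCA; and a parity team, in which $x_1$ is the mod-2 sum of the remaining variables of $\bar{x}\bar{y}$, when $\tau$ is an IA. Satisfaction of $\Sigma$ is then verified directly from the axioms (transitivity, the UCA \& UMDE rules, and IA2--IA4 together with a minimality assumption on $\tau$), so no inter-system simulation is ever needed. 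To repair your argument you would either have to prove the simulation lemma for the concrete rule set of \cite{hirvonen24conf}, or abandon the reduction and carry out the case-by-case construction as the paper does.
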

\begin{proof}
    \textbf{(UMI):} Suppose that $\tau:=x\approx y$. We construct a team $\X$ such that $\X\models \Sigma$ and $\X\not\models x\approx y$. Let $z_1,\dots,z_n$ be a list of those variables $z_i\in D$ for which $\Sigma\vdash x\approx^* z_i$, and let $u_1,\dots,u_m$ be a list of those variables $u_i\in D$ for which $\Sigma\not\vdash x\approx^* u_i$. The lists are clearly disjoint and $D=\{z_1,\dots,z_n\}\cup\{u_1,\dots,u_m\}$.
Define a team $X=\{s\}$, where 
\[
s(v)=
\begin{cases}
0, &\text{ if } v\in\{z_1,\dots,z_n\}\\
1, &\text{ if } v\in\{u_1,\dots,u_m\},
\end{cases}
\]
and let $\X\colon X\to \mathbb{R}_{\geq0}$ be such that $\X(s)=1$. Since $\Sigma\vdash x\approx x$ and $\Sigma\not\vdash x\approx y$, we have $x\in\{z_1,\dots,z_n\}$ and $y\in\{u_1,\dots,u_m\}$. Hence, by the construction, $\X\not\models x\approx y$. Suppose that $\Sigma\vdash v\approx v'$. Now, because of the transitivity axiom UMI3, either $v,v'\in \{z_1,\dots,z_n\}$ or $v,v'\in \{u_1,\dots,u_m\}$. This means that $\X\models v\approx v'$. It is easy to see that all UMDEs, UCAs and IAs are satisfied by $\X$, so $\X\models\Sigma$.

\textbf{(UMDE):} Suppose that $\tau:= x\approx^* y$. We construct a team $\X$ such that $\X\models \Sigma$ and $\X\not\models x\approx^* y$. 
First, note that $\Sigma\vdash \dep(x)$ and $\Sigma\vdash \dep(y)$ imply $\Sigma\vdash x\approx^* y$, so either $\Sigma\not\vdash \dep(x)$ or $\Sigma\not\vdash \dep(y)$. Without loss of generality, assume that $\Sigma\not\vdash \dep(x)$.

Let $z_1,\dots,z_n$ be a list of those variables $z_i\in D$ for which $\Sigma\vdash x\approx^* z_i$, and let $u_1,\dots,u_m$ be a list of those variables $u_i\in D$ for which $\Sigma\not\vdash x\approx^* u_i$. These lists are clearly disjoint and $D=\{z_1,\dots,z_n\}\cup\{u_1,\dots,u_m\}$.
Let the team $X$ consist of all the tuples from the set
\[
Z_1\times\dots\times Z_n\times U_1\times\dots\times U_m,
\] 
where $Z_i=\{0,1\}$ and $U_j=\{0\}$ for $i=1,\dots,n$ and $j=1,\dots,m$, and define $\X$ as the uniform distribution over $X$. It is easy to see that $\X\not\models x\approx^* y$. Suppose that $\Sigma\vdash v\approx v'$. Now, since $v\approx v'$ implies $v\approx^* v'$, because of the transitivity axiom UMDE3, either $v,v'\in \{z_1,\dots,z_n\}$ or $v,v'\in \{u_1,\dots,u_m\}$. This means that $\X\models v\approx v'$. The case $\Sigma\vdash v\approx^* v'$ is analogous. 

Suppose that $\Sigma\vdash \dep(v)$. Suppose for a contradiction that $\X\not\models \dep(v)$. Then by the construction of $\X$, $v\in\{z_1,\dots,z_n\}$. This means that $\Sigma\vdash x\approx^* v$, and thus, by applying the axiom UCA \& UMDE 1, we obtain $\Sigma\vdash \dep(x)$. This contradicts the assumption that $\Sigma\not\vdash \dep(x)$.
It is easy to see that by the construction that if $\Sigma\vdash \bar{w}\pia \bar{w}'$, then $\X\models \bar{w}\pia \bar{w}'$.

\textbf{(UCA):} Suppose that $\tau:=\dep(x)$. Then the construction from the UMDE case is such that $\X\models\Sigma$ and $\X\not\models \dep(x)$.

\textbf{(IA):} Suppose that $\tau:= \bar{x}\pia\bar{y}$. We construct a team $\X$ such that $\X\models\Sigma$ and $\X\not\models \bar{x}\pia\bar{y}$. 
We may assume that the atom $\bar{x}\pia \bar{y}$ is minimal in the sense that $\Sigma\vdash \bar{x}'\pia\bar{y}'$ for all $\bar{x}'$, $\bar{y}'$ such that $\Var(\bar{x}')\subseteq \Var(\bar{x})$, $\Var(\bar{y}')\subseteq \Var(\bar{y})$, and $\Var(\bar{x}'\bar{y}')\neq\Var(\bar{x}\bar{y})$. If not, we can remove variables from $\bar{x}$ and $\bar{y}$ until this holds. By the decomposition axiom IA3, it suffices to show the claim for the minimal atom. Note that due to the trivial independence axiom IA1 both $\bar{x}$ and $\bar{y}$ are at least of length one.

Let $\bar{x}=x_1\dots x_n$ and $\bar{y}=y_1\dots y_m$. Note that by the minimality of $\bar{x}\pia\bar{y}$, we have $\Sigma\not\vdash \dep(x_i)$ and $\Sigma\not\vdash \dep(y_j)$ for all $i=1,\dots,n$ and $j=1,\dots,m$. Let $\{u_1,\dots,u_k\}\subseteq D$ be the set of variables for which $\Sigma\vdash \dep(u_i)$. Let $\{z_1,\dots,z_l\}=D\backslash(\{x_1,\dots,x_n\}\cup\{y_1,\dots,y_m\}\cup\{u_1,\dots,u_k\})$. Define a team $X_0$ over $D\backslash\{x_1\}$ such that it consists of all the tuples from the set 
\[
X_2\times\dots\times X_n\times Y_1\times\dots Y_m\times Z_1\times\dots\times Z_l\times U_1\times\dots\times U_k,
\] 
where $X_2=\dots = X_n = Y_1=\dots Y_m= Z_1=\dots = Z_l=\{0,1\}$ and $U_1=\dots = U_k=\{0\}$. Let then $X=\{s\cup\{(x_1,a)\}\mid s\in X_0\}$, where 
\[
a=\sum_{i=2}^{n} s(x_i)+\sum_{i=1}^{m} s(y_j) \qquad(\text{mod } 2).
\]
Define then $\X$ as the uniform distribution over $X$.
Now $\X\not\models\bar{x}\pia\bar{y}$. Let $s, s'\in X$ be such that $s(x_1)=1$, $s(x_i)=0$, and $s'(y_j)=0$ for all $2\leq i\leq n$ and $1\leq j\leq m$. Then there is no $s''\in X$ such that $s''(\bar{x})=s(\bar{x})$ and $s''(\bar{y})=s'(\bar{y})$.

Suppose that $\Sigma\vdash v\approx v'$. Suppose for a contradiction that $\X\not\models v\approx v'$. Then one of $v$ and $v'$ must be in $\{u_1,\dots,u_k\}$ and one in $D\backslash\{u_1,\dots,u_k\}$.
Assume that $v'\in\{u_1,\dots,u_k\}$. This means that $\Sigma\vdash \dep(v')$. Since $\Sigma\vdash v\approx v'$, by applying UMI \& UMDE and UCA \& UMDE 1, we obtain $\Sigma\vdash \dep(v)$. But then $v\in\{u_1,\dots,u_k\}$, which is a contradiction. The case $\Sigma\vdash v\approx^* v'$. is analogous.
Note also that if $\Sigma\vdash \dep(v)$, then $v\in\{u_1,\dots,u_k\}$, and thus $\X\models \dep(v)$.

Suppose then that $\Sigma\vdash \bar{w}\pia \bar{w}'$. We may assume that $u_i\not\in\Var(\bar{w}\bar{w}')$ for all $1\leq i\leq l$.
Assume first that $\Var(\bar{w}\bar{w}')\cap\Var(\bar{x}\bar{y})=\emptyset$. Then $\Var(\bar{w}\bar{w}')\subseteq\Var(\bar{z})$, where $\bar{z}=z_1\dots z_l$. It is clear from the definition of $\X$ that $\X\models \bar{w}\pia \bar{w}'$.

Assume then that $\Var(\bar{w}\bar{w}')\cap\Var(\bar{x}\bar{y})\neq\emptyset$, but $\Var(\bar{x}\bar{y})\not\subseteq\Var(\bar{w}\bar{w}')$. Then  $\X\models \bar{w}\pia \bar{w}'$ because $|\X_{\bar{w}=\bar{a}}|=(1/2)^{|\bar{w}|}$ for all $a\in\{0,1\}^{|\bar{w}|}$ and $|\X_{\bar{w}'=\bar{a}}|=(1/2)^{|\bar{w}'|}$ for all $a\in\{0,1\}^{|\bar{w}'|}$.

Assume finally that $\Var(\bar{x}\bar{y})\subseteq\Var(\bar{w}\bar{w}')$. We show that this case is not possible. We may assume that $\bar{w}=\bar{x}'\bar{y}'\bar{z}'$ and $\bar{w}'=\bar{x}''\bar{y}''\bar{z}''$, where $\Var(\bar{x})=\Var(\bar{x}'\bar{x}'')$, $\Var(\bar{y})=\Var(\bar{y}'\bar{y}'')$, and $\Var(\bar{z}'\bar{z}'')\subseteq\Var(\bar{z})$. By the axiom IA3, we have $\Sigma\vdash \bar{x}'\bar{y}'\pia \bar{x}''\bar{y}''$. Note that by the minimality of $\bar{x}\pia\bar{y}$, we have $\Sigma\vdash \bar{x}'\pia \bar{y}'$. Using the exchange axiom IA4 to $\bar{x}'\pia\bar{y}'$ and $\bar{x}'\bar{y}'\pia \bar{x}''\bar{y}''$, we obtain $\Sigma\vdash \bar{x}'\pia \bar{y}'\bar{x}''\bar{y}''$. So now, $\Sigma\vdash \bar{x}'\pia \bar{y}\bar{x}''$, and by the symmetry axiom IA2, $\Sigma\vdash  \bar{y}\bar{x}''\pia\bar{x}'$. Again, by the minimality of $\bar{x}\pia\bar{y}$ and the symmetry axiom IA2, we have $\Sigma\vdash \bar{y}\pia \bar{x}''$. Then using the exchange axiom IA4 again, this time to $\bar{y}\pia\bar{x}''$ and $\bar{y}\bar{x}''\pia\bar{x}'$, we obtain $\Sigma\vdash \bar{y}\pia \bar{x}''\bar{x}'$. By the symmetry axiom IA2, we have $\Sigma\vdash\bar{x}\pia\bar{y}$, which is a contradiction.
\end{proof}
Now we obtain the following theorem using the above lemma. 
\begin{theorem}
    The axiomatization $\mathcal{A}_{IA+CA+UMI+UMDE}$ is sound and complete for the implication problem for IAs+UCAs+UMIs+UMDEs over any additively and multiplicatively cancellative commutative semiring $K$.
\end{theorem}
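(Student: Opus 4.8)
Soundness is already in hand: every axiom in $\mathcal{A}_{IA+CA+UMI+UMDE}$ is proved sound above, namely the IA axioms and IA~\&~FD~1 by Proposition~\ref{soundness2} (this is the only place where commutativity and multiplicative cancellativity are needed), and the unary UMI, UMDE, and UCA~\&~UMDE axioms by Proposition~\ref{soundness}. It therefore remains to establish completeness, and the plan is to argue contrapositively: assuming $\tau\notin\Cl(\Sigma)$, I will exhibit a $K$-team $\mathbb{X}$ with $\mathbb{X}\models\Sigma$ and $\mathbb{X}\not\models\tau$, which shows $\Sigma\not\models_K\tau$.

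The key idea is to recycle the counterexample of Lemma~\ref{IAUMIUMDE_lemma} and transport it from $\mathbb{R}_{\geq0}$ to an arbitrary positive, additively cancellative $K$. Concretely, Lemma~\ref{IAUMIUMDE_lemma} provides a set $X$ of assignments $s\colon D\to A$ together with the uniform $\mathbb{R}_{\geq0}$-team $\mathbb{Y}\colon X\to\mathbb{R}_{\geq0}$, $\mathbb{Y}(s)=1/|X|$, satisfying $\mathbb{Y}\models\Sigma$ and $\mathbb{Y}\not\models\tau$. On the \emph{same} set $X$ I would define $\mathbb{X}\colon X\to K$ by $\mathbb{X}(s)=1$ for all $s\in X$, and then prove that for every atom $\sigma$ among IAs, UCAs, UMIs, and UMDEs one has $\mathbb{X}\models\sigma$ iff $\mathbb{Y}\models\sigma$. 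This immediately yields $\mathbb{X}\models\Sigma$ and $\mathbb{X}\not\models\tau$, as required.

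To justify this equivalence, write $c_{\bar{x}}(\bar{a}):=|\{s\in X:s(\bar{x})=\bar{a}\}|$ for the integer count and let $n\cdot 1$ denote the $n$-fold sum $1+\dots+1$ in $K$, so that $|\mathbb{X}_{\bar{x}=\bar{a}}|=c_{\bar{x}}(\bar{a})\cdot 1$ while $|\mathbb{Y}_{\bar{x}=\bar{a}}|=c_{\bar{x}}(\bar{a})/|X|$. The linchpin is that $n\mapsto n\cdot 1$ is injective when $K$ is positive and additively cancellative: for $m<n$ we have $n\cdot 1=m\cdot 1+(n-m)\cdot 1$ with $(n-m)\cdot 1\neq 0$ (a nonempty sum of copies of $1\neq 0$ cannot vanish, by positivity), whence additive cancellativity gives $m\cdot 1\neq n\cdot 1$. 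Consequently $\Supp(\mathbb{X})=X=\Supp(\mathbb{Y})$, the sets $X(x)$ agree in both teams, and an equation $c\cdot 1=c'\cdot 1$ in $K$ holds iff $c=c'$ in $\mathbb{N}$.

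With injectivity available, the atom-by-atom check is routine. For $\dep(x)$ both teams require $x$ to be constant on $X$; for $x\approx y$ both require $c_x(a)=c_y(a)$ for all $a$; and for $x\approx^* y$ both require the multisets $\{\{c_x(a):a\in X(x)\}\}$ and $\{\{c_y(a):a\in X(y)\}\}$ to coincide. The only place multiplication enters is the IA $\bar{x}\perp\bar{y}$: using $(m\cdot 1)\times(n\cdot 1)=(mn)\cdot 1$ (immediate from distributivity and $1\times 1=1$), the defining equation $|\mathbb{X}_{\bar{x}\bar{y}=(\bar{a},\bar{b})}|\times|\mathbb{X}|=|\mathbb{X}_{\bar{x}=\bar{a}}|\times|\mathbb{X}_{\bar{y}=\bar{b}}|$ reduces, after injectivity, to the integer identity $c_{\bar{x}\bar{y}}(\bar{a},\bar{b})\cdot|X|=c_{\bar{x}}(\bar{a})\cdot c_{\bar{y}}(\bar{b})$, which is precisely the condition characterizing $\mathbb{Y}\models\bar{x}\perp\bar{y}$ over $\mathbb{R}_{\geq0}$ once denominators are cleared. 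I expect this IA case---passing from the semiring product equation to the shared integer equation while tracking the injective map $n\mapsto n\cdot 1$---to be the main (if modest) obstacle; every other part is a direct transcription of Lemma~\ref{IAUMIUMDE_lemma}.
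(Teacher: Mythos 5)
Your proposal is correct and takes essentially the same route as the paper's own proof: both reuse the set $X$ from Lemma \ref{IAUMIUMDE_lemma}, define the constant-weight $K$-team $\mathbb{X}(s)=1$ on it, and transfer satisfaction of each atom between the uniform $\mathbb{R}_{\geq0}$-team and $\mathbb{X}$ using the fact that every marginal of $\mathbb{X}$ is an integer multiple of $1_K$. Your explicit check that $n\mapsto n\cdot 1$ is injective under positivity and additive cancellativity (and that it respects products) is precisely the step the paper leaves implicit in its chain of equivalences for the IA case, so this is the same argument, only spelled out in slightly more detail.
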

\begin{proof}
     Suppose that $\Sigma\not\vdash\tau$. We show that $\Sigma\not\models\tau$ by constructing a $K$-team $\mathbb{X}$ such that $\mathbb{X}\models\Sigma$, but $\mathbb{X}\not\models\tau$.
Construct the set $X$ as in Lemma \ref{IAUMIUMDE_lemma}, and define $\mathbb{X}\colon X\to K$ as $\mathbb{X}(s)=1$ for all $s\in X$. Let $\mathbb{X}'$ be the $\mathbb{R}_{\geq0}$-team of Lemma \ref{IAUMIUMDE_lemma}. We show that for all $\sigma\in\Sigma\cup\{\tau\}$,  $\mathbb{X}\models\sigma$ iff $\mathbb{X}'\models\sigma$.
    Now $\mathbb{X}'\models\bar{x}\perp\bar{y}$ iff $|\mathbb{X}'_{\bar{x}\bar{y}=\bar{a}\bar{b}}|\times|\mathbb{X}'|=|\mathbb{X}'_{\bar{x}=\bar{a}}|\times|\mathbb{X}'_{\bar{y}=\bar{b}}|$ for all $\bar{a}\bar{b}\in A^{|\bar{x}\bar{y}|}$. Note that by the definition of $\mathbb{X}$, we have $|\mathbb{X}_{\bar{z}=\bar{c}}|=\sum_{1\leq i\leq |X|\times|\mathbb{X}'_{\bar{z}=\bar{c}}|}1$ for all $\bar{z}\in D^{|\bar{z}|}$ and $\bar{c}\in A^{|\bar{z}|}$. Then $|\mathbb{X}'_{\bar{x}\bar{y}=\bar{a}\bar{b}}|\times|\mathbb{X}'|=|\mathbb{X}'_{\bar{x}=\bar{a}}|\times|\mathbb{X}'_{\bar{y}=\bar{b}}|$ iff $|X|\times|\mathbb{X}'_{\bar{x}\bar{y}=\bar{a}\bar{b}}|\times|X|\times|\mathbb{X}'|=|X|\times|\mathbb{X}'_{\bar{x}=\bar{a}}|\times|X|\times|\mathbb{X}'_{\bar{y}=\bar{b}}|$ iff $|\mathbb{X}_{\bar{x}\bar{y}=\bar{a}\bar{b}}|\times|\mathbb{X}|=|\mathbb{X}_{\bar{x}=\bar{a}}|\times|\mathbb{X}_{\bar{y}=\bar{b}}|$. Therefore, $\mathbb{X}'\models\bar{x}\perp\bar{y}$ iff $\mathbb{X}\models\bar{x}\perp\bar{y}$. The claim for UMIs and UMDEs follows analogously from the observation that $|\mathbb{X}_{\bar{z}=\bar{c}}|=\sum_{1\leq i\leq |X|\times|\mathbb{X}'_{\bar{z}=\bar{c}}|}1$ for all $\bar{z}\in D^{|\bar{z}|}$ and $\bar{c}\in A^{|\bar{z}|}$. Since $\Supp(\mathbb{X}')=\Supp(\mathbb{X})$, the claim also clearly holds for all UCAs.
   
\end{proof}

\subsection{Classes of semirings without additive cancellativity}\label{withoutaddcan}
We now move to the results for classes of semirings without additive cancellativity. For unary marginal distribution inclusion, the axiomatization $\mathcal{A}_{UIND^*}$ is complete for semirings that lack the property of Proposition \ref{addcan_prop} (ii) that characterizes when the atom is symmetric.
\begin{theorem}
    The axiomatization $\mathcal{A}_{UIND^*}$ is sound and complete for the implication problem for UIND$^*$s over any semiring $K$ containing non-zero elements $a$ and $b$ such that $a+b=a$.
\end{theorem}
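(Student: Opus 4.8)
The plan is to reduce completeness to a statement about preorders and then exploit the absorbing pair $a+b=a$ to build a small counterexample team. Soundness is already covered by Proposition~\ref{soundness} (the two axioms of $\mathcal{A}_{UIND^*}$ are the unary instances of IND$^*$1--2), so I would prove only completeness, and I would do this by contraposition: assuming $\Sigma\not\vdash_{\mathcal{A}_{UIND^*}}\tau$, I construct a $K$-team $\mathbb{X}$ with $\mathbb{X}\models\Sigma$ and $\mathbb{X}\not\models\tau$. The first observation is that, since $\mathcal{A}_{UIND^*}$ consists exactly of reflexivity and transitivity, the relation $u\preceq v:\Leftrightarrow\Sigma\vdash u\subseteq^* v$ is precisely the reflexive--transitive closure of $\Sigma$, i.e.\ a preorder on $D$. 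Writing $\tau=x\subseteq^* y$, the assumption $\Sigma\not\vdash\tau$ becomes $x\not\preceq y$, and the down-set $L(y):=\{u\in D:u\preceq y\}$ is $\preceq$-downward closed and satisfies $x\notin L(y)$.

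Next I would build a two-assignment team over $A=\{0,1\}$ separating $L(y)$ from its complement. Let $a,b\in K\setminus\{0\}$ with $a+b=a$. Define $s_0,s_1\colon D\to A$ by $s_0(u)=0$ for all $u$, and $s_1(u)=0$ if $u\preceq y$ and $s_1(u)=1$ otherwise; since $x\notin L(y)$ we have $s_0(x)\neq s_1(x)$, so these are distinct. Put $X=\{s_0,s_1\}$ with $\mathbb{X}(s_0)=a$ and $\mathbb{X}(s_1)=b$. A direct marginal computation then shows, for each $u\in D$, that $M_u:=\{\{|\mathbb{X}_{u=c}|:c\in X(u)\}\}$ equals $\{\{a\}\}$ when $u\preceq y$ (the two assignments collapse into the single value $0$, whose marginal is $a+b=a$) and equals $\{\{a,b\}\}$ when $u\not\preceq y$ (value $0$ has marginal $a$ and value $1$ has marginal $b$, both nonzero since $a,b\neq0$).

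It then remains to read off the two satisfaction claims from these marginals. For $\mathbb{X}\models\Sigma$, take any $u\subseteq^* v\in\Sigma$, so $u\preceq v$: if $u\preceq y$ then $M_u=\{\{a\}\}$ and $a\in M_v$ in either case, while if $u\not\preceq y$ then downward closure of $L(y)$ forces $v\not\preceq y$, whence $M_u=M_v=\{\{a,b\}\}$; in both cases $M_u\subseteq M_v$. For $\mathbb{X}\not\models\tau$, note $x\not\preceq y$ gives $M_x=\{\{a,b\}\}$ and $y\preceq y$ gives $M_y=\{\{a\}\}$, and a two-element multiset cannot embed into a one-element multiset (this holds even if $a=b$), so $M_x\not\subseteq M_y$.

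The delicate point — and the only place the hypothesis on $K$ is used — lies in the verification that $\mathbb{X}\models\Sigma$: when a variable $u\preceq y$ sits below some $v\not\preceq y$ in $\Sigma$, the single collapsed marginal $a+b$ of $u$ must embed into the marginals $\{\{a,b\}\}$ of $v$. This is exactly what $a+b=a$ secures, since then $a+b=a\in\{\{a,b\}\}$; for an additively cancellative $K$ no such absorbing pair exists (cf.\ Proposition~\ref{addcan_prop}(ii)), $\subseteq^*$ becomes symmetric, and the construction indeed breaks down. The requirement $b\neq0$ is used dually, to ensure that value $1$ contributes a genuine second nonzero marginal, so that $|M_x|=2>1=|M_y|$ and $\tau$ is violated.
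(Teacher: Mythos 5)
Your proof is correct and takes essentially the same approach as the paper's: a two-assignment team over $\{0,1\}$ with weights $a$ and $b$, where the absorption $a+b=a$ collapses the marginals of one block of variables to $\{\{a\}\}$ while the other block retains $\{\{a,b\}\}$, so that $\tau$ fails on cardinality grounds while every atom of $\Sigma$ survives by reflexivity/transitivity of the derivability preorder. The only (immaterial) difference is that you partition $D$ using the down-set $\{u\in D:\Sigma\vdash u\subseteq^* y\}$ of $y$, whereas the paper uses the up-set $Z=\{z\in D:\Sigma\vdash x\subseteq^* z\}$ of $x$; the verification is the mirror image and goes through identically.
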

\begin{proof}
 Suppose that $\Sigma\not\vdash\tau$. We show that $\Sigma\not\models\tau$ by constructing a very simple $K$-team $\mathbb{X}$ such that $\mathbb{X}\models\Sigma$, but $\mathbb{X}\not\models\tau$.
  Let $\tau:=x\subseteq^*y$, and define $Z:=\{z\in D:\Sigma\vdash x\subseteq^*z\}$. Let $X:=\{s,s'\}$, where $s,s'\colon D\to\{0,1\}$ are such that $s(u)=0$ for all $u\in D$, and $s'(u)=1$ if $u\in Z$ and $s'(u)=0$ if $u\in  D\setminus Z$.
    Define then $\mathbb{X}\colon X\to K$ such that $\mathbb{X}(s)=a$ and $\mathbb{X}(s')=b$, where $a$ and $b$ are as in the statement of the theorem. 
    
    Note that if $u\in Z$ and $u'\in D\setminus Z$, then $|\mathbb{X}_{u=0}|=a$, $|\mathbb{X}_{u=1}|=b$, $|\mathbb{X}_{u'=0}|=a+b=a$, and $|\mathbb{X}_{u'=1}|=0$. Therefore, $\mathbb{X}\not\models u\subseteq^* u'$ iff  $u\in Z$ and $u'\in D\setminus Z$. Now clearly $\mathbb{X}\not\models\tau$, because $x\in Z$ and $y\in D\setminus Z$. Let $v\subseteq^* w\in\Sigma$. Then either $\Sigma\vdash x\subseteq^* v$ or $\Sigma\not\vdash x\subseteq^* v$. In the first case, by UIND$^*$2, we have $\Sigma\vdash x\subseteq^* w$, so $v,w\in Z$, and $\mathbb{X}\models v\subseteq^* w$. In the latter case, $v\in D\setminus Z$, and thus also $\mathbb{X}\models v\subseteq^* w$.
\end{proof}
Next, we show that the axiomatizations $\mathcal{A}_{IND}$,
$\mathcal{A}_{FD+UIND+SCI}:=\mathcal{A}_{FD}\cup\mathcal{A}_{UIND}\cup\mathcal{A}_{SCI}\cup\mathcal{A}_{SCI\& FD}\cup\mathcal{A}_{CYCLE-UIND}$, and $\mathcal{A}_{UFD+UIND+IA}:=\mathcal{A}_{UFD}\cup\mathcal{A}_{UIND}\cup\mathcal{A}_{IA}\cup\mathcal{A}_{IA \& FD}\cup\mathcal{A}_{CYCLE-UIND}$
are sound and complete for the respective implication problems for INDs, FDs+UINDs+SCIs, and UFDs+UINDs+IAs over any totally zero-min ordered (commutative) semiring $K$ containing an idempotent non-zero element. 
Note that the examples of non-additively cancellative semirings given in the preliminaries section all contain idempotent non-zero elements, so this class of semirings covers, e.g., the Boolean, tropical, and Viterbi semirings.

In the case of the Boolean semiring, the axiomatizations  $\mathcal{A}_{IND}$, $\mathcal{A}_{FD+UIND+SCI}$, and $\mathcal{A}_{UFD+UIND+IA}$  correspond to the sound and complete axiomatizations for the finite implication problems for inclusion dependencies \cite{casanova82}, for functional dependencies, unary inclusion dependencies, and saturated conditional independencies (MVDs) \cite{CosmadakisKV90}, and for unary functional dependencies, unary inclusion dependencies, and independence atoms \cite{hannula2021interactionfunctionalinclusiondependencies} over (uni)relational databases. This means that we can utilize these results in the form of the following lemma to prove the completeness in the more general case of semirings.
 Note that in the lemma we assume that the Boolean semiring has the usual order, i.e., $0<1$.
\begin{lemma}[\cite{casanova82,CosmadakisKV90,hannula2021interactionfunctionalinclusiondependencies}]\label{lemmafd}
\begin{enumerate}[label=(\roman*),ref=\ref{lemmafd} (\roman*)]
    \item\label{FDUIND_lemma}
    Let $\Sigma\cup\{\tau\}$ be a set of INDs or a set of FDs, UINDs, and SCIs 
         such that $\Var(\Sigma\cup\{\tau\})\subseteq D$ and $\tau\notin\Cl(\Sigma)$. Then there exists  
         a Boolean team $\mathbb{X}:X\to\{0,1\}$ over $D$ such that $\mathbb{X}(s)=1$ for all $s\in X$, and $\mathbb{X}\models\Sigma$, but $\mathbb{X}\not\models\tau$.
    \item\label{FDUINDIA_lemma}
    Let $\Sigma$ be a set of UFDs, UINDs, and IAs
         such that $\Var(\Sigma)\subseteq D$. Then there exists a Boolean team $\mathbb{X}:X\to\{0,1\}$ over $D$ such that $\mathbb{X}(s)=1$ for all $s\in X$, and $\mathbb{X}\models\sigma$ iff $\sigma\in\Cl(\Sigma)$ for all $\sigma$.
\end{enumerate}   
\end{lemma}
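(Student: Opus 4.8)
The plan is to derive both parts directly from the relational-database completeness results cited in the statement, by passing through the correspondence — already recorded in the Example for $K=\mathbb{B}$ — between Boolean teams of uniform weight $1$ and ordinary relations. Concretely, a Boolean team $\mathbb{X}$ with $\mathbb{X}(s)=1$ for all $s\in X$ has $\Supp(\mathbb{X})=X$, so it is nothing but the relation whose tuple set is $X$, and conversely every finite relation arises this way. It therefore suffices to check that, under this identification, Boolean satisfaction of each atom occurring in the lemma agrees with the corresponding relational dependency, and then to quote the cited constructions.

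First I would verify these satisfaction correspondences over $\mathbb{B}$ (with $0<1$). For $\dep(\bar{x},\bar{y})$ the defining condition is literally the relational FD. For an IND $\bar{x}\leq\bar{y}$, each Boolean marginal $|\mathbb{X}_{\bar{x}=\bar{a}}|$ equals $1$ precisely when $\bar{a}\in X(\bar{x})$ and $0$ otherwise, so the inequality over all $\bar{a}$ reduces to the projection inclusion $X(\bar{x})\subseteq X(\bar{y})$, i.e.\ the usual inclusion dependency. For an SCI, agreement with the multivalued dependency $\Var(\bar{x})\twoheadrightarrow\Var(\bar{y})$ is as stated in the Example. For an IA $\bar{x}\perp\bar{y}$, reading $+$ as disjunction and $\times$ as conjunction collapses the product equation into the relational independence condition that every pair consisting of an occurring $\bar{x}$-value and an occurring $\bar{y}$-value also occurs jointly (the converse implication being automatic). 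These checks are short, but they are the only genuine content of the argument.

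Given the correspondences, part (i) is immediate: whenever $\tau\notin\Cl(\Sigma)$, the completeness proof for INDs in \cite{casanova82}, respectively for FDs, UINDs, and MVDs in \cite{CosmadakisKV90}, produces a relation satisfying every atom of $\Sigma$ while violating $\tau$; regarding this relation as the uniform-weight Boolean team supplies the required $\mathbb{X}$. Part (ii) follows the same way from the Armstrong-relation construction for UFDs, UINDs, and IAs in \cite{hannula2021interactionfunctionalinclusiondependencies}, which yields a single relation satisfying exactly the atoms of $\Cl(\Sigma)$. The main (and essentially only) obstacle is the bookkeeping of the satisfaction translations — in particular making sure the IA translation in $\mathbb{B}$ is handled correctly — since once these are in place no new construction is needed and everything is inherited from the cited relational results.
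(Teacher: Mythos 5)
Your proposal is correct and takes essentially the same route as the paper: the paper offers no separate proof of this lemma, treating it as a direct import of the cited relational results, with the identification of uniform-weight Boolean teams with relations and the atom-by-atom satisfaction correspondences (recorded in the paper's example for $K=\mathbb{B}$ and in Remark \ref{boolremark}) doing exactly the translation work you spell out. Your explicit check of the IA and IND correspondences over $\mathbb{B}$ is the same bookkeeping the paper delegates to that remark, so nothing further is needed.
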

\begin{remark}\label{boolremark}
    For any Boolean team $\mathbb{X}:X\to\{0,1\}$, we have $\mathbb{X}\models \bar{x}\leq \bar{y}$ iff $X(\bar{x})\subseteq X(\bar{y})$, and $\mathbb{X}\models\bar{y}\perp_{\bar{x}}\bar{z}$ iff for all $s,s'\in X$ such that $s(\bar{x})=s'(\bar{x})$, there is $s''\in X$ such that $s''(\bar{x}\bar{y})=s(\bar{x}\bar{y})$ and $s''(\bar{x}\bar{z})=s'(\bar{x}\bar{z})$.
\end{remark}
\begin{theorem}\label{INDax}
    The axiomatization $\mathcal{A}_{IND}$ is sound and complete for the implication problem for INDs over any totally zero-min ordered semiring $K$ containing an idempotent non-zero element.
\end{theorem}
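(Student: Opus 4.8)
Since soundness over an arbitrary semiring is already granted by Proposition~\ref{soundness}, the task reduces to proving completeness, and I would argue by contraposition: assuming $\Sigma\not\vdash_{\mathcal{A}_{IND}}\tau$, I would build a $K$-team $\mathbb{X}$ with $\mathbb{X}\models\Sigma$ but $\mathbb{X}\not\models\tau$. The crucial observation is that the deduction relation $\vdash_{\mathcal{A}_{IND}}$ is purely syntactic and hence independent of the ambient semiring; in particular $\Sigma\not\vdash\tau$ means $\tau\notin\Cl(\Sigma)$, so Lemma~\ref{FDUIND_lemma} supplies a Boolean team $\mathbb{X}_{\mathbb{B}}\colon X\to\{0,1\}$ with $\mathbb{X}_{\mathbb{B}}(s)=1$ for all $s$, satisfying $\mathbb{X}_{\mathbb{B}}\models\Sigma$ and $\mathbb{X}_{\mathbb{B}}\not\models\tau$. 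The plan is to reuse the very same underlying set of assignments $X$ and re-weight it into $K$.

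Fix an idempotent non-zero element $e\in K$, i.e.\ $e+e=e$ and $e\neq0$, and define $\mathbb{X}\colon X\to K$ by $\mathbb{X}(s)=e$ for every $s\in X$. The key computation is that every marginal of $\mathbb{X}$ lands in $\{0,e\}$: a non-empty sum of copies of $e$ equals $e$ by idempotence (a trivial induction), while the empty sum is $0$. Hence $|\mathbb{X}_{\bar{x}=\bar{a}}|=e$ exactly when $\bar{a}\in X(\bar{x})$ and $|\mathbb{X}_{\bar{x}=\bar{a}}|=0$ otherwise, and likewise for any tuple. I would then use the two order hypotheses to translate the numerical satisfaction condition into the relational one. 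Because $K$ is totally ordered and zero-min, $0$ is its least element: for any $c$, totality gives $c\leq0$ or $0\leq c$, and $c\leq 0$ forces $c=0$ by the zero-min property, so $0\leq c$ always holds. Consequently $e\leq|\mathbb{X}_{\bar{y}=\bar{a}}|$ holds iff $|\mathbb{X}_{\bar{y}=\bar{a}}|=e$, i.e.\ iff $\bar{a}\in X(\bar{y})$ (using $e\not\leq0$, which follows from $e\neq0$ and the zero-min property).

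Combining these facts, $\mathbb{X}\models\bar{x}\leq\bar{y}$ iff for every $\bar{a}$, $\bar{a}\in X(\bar{x})$ implies $\bar{a}\in X(\bar{y})$, that is, iff $X(\bar{x})\subseteq X(\bar{y})$. By Remark~\ref{boolremark} this is precisely the condition under which the Boolean team $\mathbb{X}_{\mathbb{B}}$ satisfies the same atom. Therefore $\mathbb{X}$ and $\mathbb{X}_{\mathbb{B}}$ satisfy exactly the same INDs over $D$, so $\mathbb{X}\models\Sigma$ and $\mathbb{X}\not\models\tau$, witnessing $\Sigma\not\models_K\tau$ and completing the argument.

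I expect no serious obstacle here: the substantive content is the reduction to the relational case via Lemma~\ref{FDUIND_lemma}, and the only thing to verify is that idempotent weighting together with the total zero-min order collapses the weighted inclusion condition to ordinary set inclusion of value sets. The mildest point requiring care is confirming that every marginal is forced into $\{0,e\}$ and that $e\not\leq 0$, both of which are immediate from the stated hypotheses.
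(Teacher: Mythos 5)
Your proposal is correct and follows essentially the same route as the paper's own proof: take the assignment set $X$ from Lemma~\ref{FDUIND_lemma}, weight every assignment by an idempotent non-zero element so that all marginals collapse to $\{0,k\}$, and use the total zero-min order together with Remark~\ref{boolremark} to identify weighted inclusion with Boolean inclusion. Your write-up merely spells out a few steps the paper leaves implicit (that $0$ is the least element and that non-empty sums of the idempotent element equal it), so there is nothing to fix.
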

\begin{proof}
 Suppose that $\Sigma\not\vdash\tau$. We show that $\Sigma\not\models\tau$ by constructing a $K$-team $\mathbb{X}$ such that $\mathbb{X}\models\Sigma$, but $\mathbb{X}\not\models\tau$.
 Let the set $X$ of assignments $s\colon D\to A$ be as in Lemma \ref{FDUIND_lemma}. Let $k\in K$ be an idempotent non-zero element, and define $\mathbb{X}\colon X\to K$ as $\mathbb{X}(s)=k$ for all $s\in X$.
 
If $\sigma:=\bar{x}\leq \bar{y}\in\Sigma$, then by Lemma \ref{FDUIND_lemma} and Remark \ref{boolremark}, we have $X(\bar{x})\subseteq X(\bar{y})$. By the definition of $\mathbb{X}$, we then have $|\mathbb{X}_{\bar{x}=\bar{a}}|=0\leq|\mathbb{X}_{\bar{y}=\bar{a}}|$ or 
$|\mathbb{X}_{\bar{x}=\bar{a}}|=k=|\mathbb{X}_{\bar{y}=\bar{a}}|$ for all $\bar{a}\in A^{|\bar{x}|}$.
     Suppose then that $\tau:=\bar{x}\leq \bar{y}$. By Lemma \ref{FDUIND_lemma} and Remark \ref{boolremark}, there is $\bar{a}\in X(\bar{x})$ such that $\bar{a}\notin X(\bar{y})$. This means that $|\mathbb{X}_{\bar{x}=\bar{a}}|=k>0=|\mathbb{X}_{\bar{y}=\bar{a}}|$, so $\mathbb{X}\not\models\tau$.
\end{proof}
\begin{theorem}\label{FDUINDax}
    The axiomatization $\mathcal{A}_{FD+UIND+SCI}$ is sound and complete for the implication problem for FDs+UINDs+SCIs over any totally zero-min ordered multiplicatively cancellative and commutative idempotent semiring $K$.
\end{theorem}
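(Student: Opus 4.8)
Since the soundness of $\mathcal{A}_{FD+UIND+SCI}$ over the class in question follows from Proposition \ref{soundness} (for the FD, UIND, and cycle axioms) together with the already-established soundness of $\mathcal{A}_{SCI}$ and $\mathcal{A}_{SCI\& FD}$, it suffices to prove completeness. The plan is to transfer the relational (Boolean) completeness of Lemma \ref{FDUIND_lemma} to the semiring setting by reweighting a Boolean counterexample with a single idempotent value, exactly as in the proof of Theorem \ref{INDax}. Assuming $\Sigma\not\vdash\tau$, I would first invoke Lemma \ref{FDUIND_lemma} to obtain a set $X$ of assignments $s\colon D\to A$ whose associated Boolean team satisfies $\Sigma$ but not $\tau$. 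Fixing a non-zero idempotent element $k\in K$ (every non-zero element qualifies, since $K$ is idempotent), I would define $\mathbb{X}\colon X\to K$ by $\mathbb{X}(s)=k$ for all $s\in X$, so that $\Supp(\mathbb{X})=X$. The goal is then to show $\mathbb{X}\models\sigma$ iff the Boolean team satisfies $\sigma$, for every atom $\sigma$ of the three types.

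The FD and UIND cases mirror Theorem \ref{INDax}. For an FD $\dep(\bar{x},\bar{y})$ the satisfaction condition refers only to $\Supp(\mathbb{X})=X$, so it coincides verbatim with the relational condition on $X$. For a UIND $x\leq y$, additive idempotency forces $|\mathbb{X}_{x=a}|=k$ when $a\in X(x)$ and $|\mathbb{X}_{x=a}|=0$ otherwise; since the order is total and zero-min we have $0<k$, whence $\mathbb{X}\models x\leq y$ iff $X(x)\subseteq X(y)$, matching Remark \ref{boolremark}.

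The main obstacle is the SCI atom $\bar{y}\perp_{\bar{x}}\bar{z}$, whose satisfaction condition involves products of marginals rather than merely their order or equality. Here I would use that, by additive idempotency, each of the four marginals $|\mathbb{X}_{\bar{x}\bar{y}\bar{z}=s(\bar{x}\bar{y}\bar{z})}|$, $|\mathbb{X}_{\bar{x}=s(\bar{x})}|$, $|\mathbb{X}_{\bar{x}\bar{y}=s(\bar{x}\bar{y})}|$, $|\mathbb{X}_{\bar{x}\bar{z}=s(\bar{x}\bar{z})}|$ lies in $\{0,k\}$, so that each side of the defining equation equals either $0$ or $k\times k$. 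Crucially, positivity gives $k\times k\neq 0$, so these two product values are distinct; hence the equation holds exactly when both products are $k\times k$ or both are $0$. Using that a tuple in $X(\bar{x}\bar{y}\bar{z})$ projects into $X(\bar{x})$, the left product is $k\times k$ iff $s(\bar{x}\bar{y}\bar{z})\in X(\bar{x}\bar{y}\bar{z})$, while the right product is $k\times k$ iff $s(\bar{x}\bar{y})\in X(\bar{x}\bar{y})$ and $s(\bar{x}\bar{z})\in X(\bar{x}\bar{z})$; thus the equation reduces precisely to the requirement that $s(\bar{x}\bar{y}\bar{z})\in X(\bar{x}\bar{y}\bar{z})$ iff both $s(\bar{x}\bar{y})\in X(\bar{x}\bar{y})$ and $s(\bar{x}\bar{z})\in X(\bar{x}\bar{z})$, i.e. the embedded-MVD (join) condition of Remark \ref{boolremark}. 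Therefore $\mathbb{X}\models\bar{y}\perp_{\bar{x}}\bar{z}$ iff the Boolean team does. Note that multiplicative cancellativity and commutativity are not needed for this construction; they enter only through the already-established soundness. Combining the three cases gives $\mathbb{X}\models\Sigma$ and $\mathbb{X}\not\models\tau$, so $\Sigma\not\models_K\tau$, which completes the proof by contraposition.
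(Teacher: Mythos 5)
Your proposal is correct and follows essentially the same route as the paper's proof: reduce to the Boolean counterexample of Lemma \ref{FDUIND_lemma}, reweight it with a single non-zero idempotent $k$, and verify that FD, UIND, and SCI satisfaction transfer, with the SCI case handled exactly as the paper's asserted equivalence between the product condition and the join condition of Remark \ref{boolremark}. Your write-up is in fact slightly more detailed than the paper's (which states the SCI equivalence without spelling out the $\{0,k\}$-valued marginal analysis), and your observation that multiplicative cancellativity and commutativity are needed only for soundness, not for the construction, is accurate.
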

\begin{proof}
 Suppose that $\Sigma\not\vdash\tau$. We show that $\Sigma\not\models\tau$ by constructing a $K$-team $\mathbb{X}$ such that $\mathbb{X}\models\Sigma$, but $\mathbb{X}\not\models\tau$.
 Let the set $X$ of assignments $s\colon D\to A$ be as in Lemma \ref{FDUIND_lemma}, and define $\mathbb{X}$ analogously to the proof of Theorem \ref{INDax}.
Since $k\neq0$, we have $\Supp(\mathbb{X})=X$. Hence, by Lemma \ref{FDUIND_lemma}, the claims are trivial for $\sigma:=\dep(\bar{x},\bar{y})\in\Sigma$ and $\tau:=\dep(\bar{x},\bar{y})$. The proofs of the claims for $\sigma:=x\leq y\in\Sigma$ and $\tau:=x\leq y$ are the same as in the proof of Theorem \ref{INDax}.

 Note that by the definition of $\X$, we have $\X\models\bar{y}\perp_{\bar{x}}\bar{z}$ iff $|\mathbb{X}_{\bar{x}\bar{y}\bar{z}=s(\bar{x}\bar{y}\bar{z})}|\times|\mathbb{X}_{\bar{x}=s(\bar{x})}|=|\mathbb{X}_{\bar{x}\bar{y}=s(\bar{x}\bar{y})}|\times|\mathbb{X}_{\bar{x}\bar{z}=s(\bar{x}\bar{z})}|$ for all $s\colon\Var(\bar{x}\bar{y}\bar{z})\to A$ iff for all $s,s'\in X$ such that $s(\bar{x})=s'(\bar{x})$, there is $s''\in X$ such that $s''(\bar{x}\bar{y})=s(\bar{x}\bar{y})$ and $s''(\bar{x}\bar{z})=s'(\bar{x}\bar{z})$.
 Then, by Lemma \ref{FDUIND_lemma} and Remark \ref{boolremark}, the claim holds for $\sigma:=\bar{y}\perp_{\bar{x}}\bar{z}\in\Sigma$ and $\tau:=\bar{y}\perp_{\bar{x}}\bar{z}$. 
\end{proof}
\begin{theorem}\label{UFDUINDIAthr}
    The axiomatization $\mathcal{A}_{UFD+UIND+IA}$ is sound and complete for the implication problem for UFDs+UINDs+IAs over any totally zero-min ordered multiplicatively cancellative and commutative idempotent semiring $K$.
\end{theorem}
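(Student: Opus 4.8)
Since soundness is already covered by the results of Section~\ref{axioms}, the plan is to establish only completeness, and to do so by contraposition following the same pattern as Theorems~\ref{INDax} and~\ref{FDUINDax}. Assuming $\Sigma\not\vdash\tau$, I would build a $K$-team $\X$ with $\X\models\Sigma$ but $\X\not\models\tau$. The first step is to apply Lemma~\ref{FDUINDIA_lemma} to the set $\Sigma$ of UFDs, UINDs, and IAs, obtaining a set $X$ of assignments $s\colon D\to A$ whose Boolean team $s\mapsto 1$ satisfies an atom $\sigma$ exactly when $\sigma\in\Cl(\Sigma)$. I would then lift this to $K$ just as in the two earlier theorems: since $K$ is nontrivial and idempotent, I may pick an idempotent non-zero $k$ (for instance $k=1$) and set $\X(s)=k$ for every $s\in X$. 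Because $k\neq 0$, this guarantees $\Supp(\X)=X$.

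The core of the argument is to verify that $\X\models\sigma$ iff $\sigma\in\Cl(\Sigma)$ for each atom type, from which $\X\models\Sigma$ (as $\Sigma\subseteq\Cl(\Sigma)$) and $\X\not\models\tau$ (as $\tau\notin\Cl(\Sigma)$) follow at once. For UFDs this is automatic, since satisfaction of $\dep(\bar{x},\bar{y})$ depends only on $\Supp(\X)=X$, which is the support of the Boolean team. For UINDs I would reuse the reasoning of Theorem~\ref{INDax} verbatim: additive idempotency forces every marginal $|\X_{u=a}|$ to equal $k$ when $a\in X(u)$ and $0$ otherwise, and the total zero-min order then reduces $\X\models x\leq y$ to $X(x)\subseteq X(y)$, which is exactly the Boolean condition of Remark~\ref{boolremark}.

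The genuinely new ingredient, and where I expect the only real work, is the IA case, although it should reduce to the CI computation already carried out in Theorem~\ref{FDUINDax} by taking the conditioning tuple empty, since an IA $\bar{y}\pia\bar{z}$ is just the conditional independence atom $\bar{y}\pia_{\emptyset}\bar{z}$. Specializing the defining identity gives $|\X_{\bar{y}\bar{z}=s(\bar{y}\bar{z})}|\times|\X|=|\X_{\bar{y}=s(\bar{y})}|\times|\X_{\bar{z}=s(\bar{z})}|$. By idempotency all the marginals appearing here lie in $\{0,k\}$ and $|\X|=k$, so each side of the equation is either $0$ or $k\times k$, and positivity ensures $k\times k\neq 0$. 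The two sides therefore agree for all $s$ precisely when $\bar{a}\bar{b}\in X(\bar{y}\bar{z})$ holds for exactly those pairs with $\bar{a}\in X(\bar{y})$ and $\bar{b}\in X(\bar{z})$, which is the Boolean satisfaction condition for $\bar{y}\pia\bar{z}$ recorded in Remark~\ref{boolremark}.

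The subtle point to handle carefully is that the forward inclusion $\bar{a}\bar{b}\in X(\bar{y}\bar{z})\Rightarrow\bar{a}\in X(\bar{y}),\,\bar{b}\in X(\bar{z})$ is automatic, so all of the content sits in the product-support direction; it is exactly here that positivity is needed, to rule out the right-hand side being nonzero while the left-hand side vanishes. Once this equivalence $\X\models\bar{y}\pia\bar{z}$ iff $\bar{y}\pia\bar{z}\in\Cl(\Sigma)$ is in place, the three cases together yield $\X\models\Sigma$ and $\X\not\models\tau$, completing the proof just as in Theorems~\ref{INDax} and~\ref{FDUINDax}.
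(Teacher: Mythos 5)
Your proposal is correct and follows essentially the same route as the paper's proof: construct $X$ via Lemma \ref{FDUINDIA_lemma}, lift it to a constant-weight $K$-team with an idempotent non-zero weight, and verify $\X\models\sigma$ iff $\sigma\in\Cl(\Sigma)$ atom by atom, with the IA case reducing (via idempotency, positivity, and Remark \ref{boolremark}) to the Boolean product-support condition exactly as in the paper's specialization of the CI computation from Theorem \ref{FDUINDax}. The only difference is presentational (you phrase it as contraposition, while the paper states the Armstrong-relation-style equivalence directly), and your more explicit IA computation fills in details the paper leaves implicit.
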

\begin{proof}
 We show that there is a $K$-team $\mathbb{X}$ such that $\mathbb{X}\models\sigma$ iff $\sigma\in\Cl(\Sigma)$.  Let the set $X$ of assignments $s\colon D\to A$ be as in Lemma \ref{FDUINDIA_lemma}, and define $\mathbb{X}$ analogously to the proof of Theorem \ref{INDax}. For $\sigma:=\dep(x,y)$ or $\sigma:=x\leq y$, the claims can be shown as in the proofs of Theorems \ref{INDax} and \ref{FDUINDax}. 
 
 Let $\sigma:=\bar{x}\perp \bar{y}$. Then by Lemma \ref{FDUINDIA_lemma} and Remark \ref{boolremark}, we have $\sigma\in\Cl(\Sigma)$ iff for all $s,s'\in X$, there is $s''\in X$ such that $s''(\bar{x})=s(\bar{x})$ and $s''(\bar{y})=s'(\bar{y})$ iff $|\mathbb{X}_{\bar{x}\bar{y}=s(\bar{x}\bar{y})}|\times|\mathbb{X}|=|\mathbb{X}_{\bar{x}=s(\bar{x})}|\times|\mathbb{X}_{\bar{y}=s(\bar{y})}|$ for all $s\colon\Var(\bar{x}\bar{y})\to A$ iff $\mathbb{X}\models\sigma$.
\end{proof}
The proof of Theorem \ref{UFDUINDIAthr} also entails the existence of Armstrong relations.
\begin{corollary}
    The class UFD+UIND+IA enjoys Armstrong relations over any totally zero-min ordered multiplicatively cancellative and commutative idempotent semiring $K$.
\end{corollary}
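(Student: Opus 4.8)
The plan is to derive this corollary directly from the construction carried out in the proof of Theorem~\ref{UFDUINDIAthr}, without building a new team. Recall that a $K$-team $\X$ is an Armstrong relation for $\Sigma$ precisely when $\X\models\sigma \iff \Sigma\models_K\sigma$ for every atom $\sigma$ of the class, and that the class enjoys Armstrong relations once such an $\X$ exists for every finite $\Sigma$. The key observation is that the completeness argument for this class is of the strong \emph{single realizing model} type: the team produced there satisfies exactly the atoms in the closure $\Cl(\Sigma)$, rather than merely falsifying one prescribed non-derivable atom.

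Concretely, I would fix an arbitrary finite set $\Sigma$ of UFD, UIND, and IA atoms with $\Var(\Sigma)\subseteq D$ and take the $K$-team $\X$ constructed in the proof of Theorem~\ref{UFDUINDIAthr} (obtained from the Boolean team of Lemma~\ref{FDUINDIA_lemma} by weighting every assignment in the support with a fixed idempotent non-zero $k\in K$). That proof establishes the full biconditional $\X\models\sigma \iff \sigma\in\Cl(\Sigma)$ for every atom $\sigma$ of the class, handling the three atom types via $\Supp(\X)=X$, the zero-min order, and the multiplicative-cancellativity translation of independence given in Remark~\ref{boolremark}.

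It then remains to replace $\Cl(\Sigma)$ by logical consequence. By definition $\sigma\in\Cl(\Sigma)$ iff $\Sigma\vdash_{\mathcal{A}}\sigma$, and by the soundness and completeness of $\mathcal{A}_{UFD+UIND+IA}$ asserted in Theorem~\ref{UFDUINDIAthr}, this holds iff $\Sigma\models_K\sigma$. Chaining the two biconditionals yields $\X\models\sigma \iff \Sigma\models_K\sigma$ for all atoms $\sigma$ of the class, so $\X$ is an Armstrong relation for $\Sigma$ over $K$; since $\Sigma$ was arbitrary, the class enjoys Armstrong relations over $K$.

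I do not expect a genuine obstacle here, since all the substantive work—the construction and the verification of the biconditional for the weighted team—is already carried out inside Theorem~\ref{UFDUINDIAthr}. The only point requiring care is to note that completeness is what licenses the identification of the syntactic closure $\Cl(\Sigma)$ with the semantic consequence set $\{\sigma : \Sigma\models_K\sigma\}$; this is precisely where the hypotheses on $K$ (totally zero-min ordered, commutative, multiplicatively cancellative, idempotent) enter, and they are already in force by assumption.
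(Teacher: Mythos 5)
Your proposal is correct and matches the paper's own (implicit) argument exactly: the paper derives this corollary precisely because the proof of Theorem~\ref{UFDUINDIAthr} is of the strong form $\X\models\sigma\iff\sigma\in\Cl(\Sigma)$, which combined with soundness and completeness gives $\X\models\sigma\iff\Sigma\models_K\sigma$, i.e., $\X$ is an Armstrong relation.
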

Let $\mathcal{A}_{FD+UMI}:=\mathcal{A}_{FD}\cup\mathcal{A}_{UMI}\cup\mathcal{A}_{CYCLE-UMI}$, $\mathcal{A}_{FD+UMI+SCI}:=\mathcal{A}_{FD}\cup\mathcal{A}_{UMI}\cup\mathcal{A}_{SCI}\cup\mathcal{A}_{SCI\& FD}\cup\mathcal{A}_{CYCLE-UMI}$ and $\mathcal{A}_{UFD+UMI+IA}:=\mathcal{A}_{UFD}\cup\mathcal{A}_{UMI}\cup\mathcal{A}_{IA}\cup\mathcal{A}_{IA \& FD}\cup\mathcal{A}_{CYCLE-UMI}$. Then from the proofs of Theorems \ref{FDUINDax} and \ref{UFDUINDIAthr}, we obtain the following result for FDs+UMIs+SCIs and UFDs+UMIs+IAs.
\begin{corollary}\label{simulcor}
        The axiomatizations $\mathcal{A}_{FD+UMI+SCI}$ and $\mathcal{A}_{UFD+UMI+IA}$ are sound and complete for the implication problems for FDs+UMIs+SCIs and UFDs+UMIs +IAs, respectively, over any totally zero-min ordered multiplicatively cancellative and commutative idempotent semiring $K$.
\end{corollary}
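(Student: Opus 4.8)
The plan is to transfer the completeness results of Theorems~\ref{FDUINDax} and~\ref{UFDUINDIAthr} from the UIND setting to the UMI setting, exploiting that over an ordered semiring a marginal identity is exactly a two-sided inclusion. First I would record the semantic observation that, for any $K$-team $\mathbb{X}$ and variables $x,y$, antisymmetry of the order yields
\[
\mathbb{X}\models x\approx y \iff \mathbb{X}\models x\leq y \ \text{ and } \ \mathbb{X}\models y\leq x,
\]
since $|\mathbb{X}_{x=a}|\leq|\mathbb{X}_{y=a}|$ together with $|\mathbb{X}_{y=a}|\leq|\mathbb{X}_{x=a}|$ forces equality for every $a$. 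Writing $\Sigma^{\leq}$ for the set obtained from $\Sigma$ by keeping its FDs, SCIs, and IAs and replacing each UMI $x\approx y$ by the pair $x\leq y$, $y\leq x$, it follows that a $K$-team satisfies $\Sigma$ iff it satisfies $\Sigma^{\leq}$, and satisfies $x\approx y$ iff it satisfies both $x\leq y$ and $y\leq x$.

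The core of the argument is a syntactic translation: I would show that whenever $\Sigma^{\leq}\vdash\sigma$ holds in the UIND-based system ($\mathcal{A}_{FD+UIND+SCI}$ or $\mathcal{A}_{UFD+UIND+IA}$), then $\Sigma\vdash\sigma'$ holds in the matching UMI-based system, where $\sigma'=\sigma$ for FDs, SCIs, and IAs, and $\sigma'=(u\approx v)$ for an inclusion $\sigma=(u\leq v)$. This is an induction on the length of the UIND-derivation. The only axioms that produce inclusion atoms are UIND1, UIND2, and the cycle rules, so I would carry along the invariant that the inclusions derivable from the symmetric seed $\Sigma^{\leq}$ always occur in both directions. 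Reflexivity and transitivity translate directly to UMI1 and UMI3 and preserve this symmetry, while the FD/SCI/IA axioms translate to themselves. At a cycle-rule step, the inclusion premises $x_{S_k(i)}\leq x_i$ translate by the induction hypothesis to identities $x_{S_k(i)}\approx x_i$, hence by UMI2 to $x_i\approx x_{S_k(i)}$; feeding these identities together with the (already translated) FD premises into $k$-CYCLE-UMI delivers precisely the reversed FDs $\dep(x_{S_k(i)},x_i)$ output by CYCLE-UIND, while the reversed inclusions it outputs correspond to identities already available by symmetry.

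Given the translation, I would finish by contraposition, mirroring the proofs of Theorems~\ref{FDUINDax} and~\ref{UFDUINDIAthr}. Assuming $\Sigma\not\vdash\tau$ in the UMI-based system, the translation gives $\Sigma^{\leq}\not\vdash\tau$ when $\tau$ is an FD, SCI, or IA, and $\Sigma^{\leq}\not\vdash x\leq y$ (whence also $\Sigma^{\leq}\not\vdash y\leq x$ by the symmetry invariant) when $\tau$ is a UMI $x\approx y$. In each case I would apply Lemma~\ref{FDUIND_lemma} (respectively Lemma~\ref{FDUINDIA_lemma}) to $\Sigma^{\leq}$ to obtain the underlying set $X$, and weight it by a fixed idempotent nonzero $k\in K$ exactly as in Theorem~\ref{INDax}. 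Idempotency gives $|\mathbb{X}_{u=a}|\in\{0,k\}$, so exactly as in Theorem~\ref{INDax} the inclusion $u\leq v$ holds on $\mathbb{X}$ iff $X(u)\subseteq X(v)$, and consequently $u\approx v$ holds iff $X(u)=X(v)$; one then checks that $\mathbb{X}\models\Sigma^{\leq}$, hence $\mathbb{X}\models\Sigma$, while $\mathbb{X}\not\models\tau$.

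I expect the cycle-rule step of the translation to be the main obstacle, since it is the one place where the two axiomatizations genuinely differ in shape: one must verify both that CYCLE-UIND and CYCLE-UMI yield corresponding consequences and that applying CYCLE-UIND to a symmetric family of inclusions never breaks the symmetry invariant. Once this is settled, the remaining steps are a routine reuse of the earlier constructions.
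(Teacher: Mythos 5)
Your proposal is correct and follows essentially the same route as the paper: the paper likewise replaces each UMI by the two opposite UINDs (its $\Sigma^*$ is your $\Sigma^{\leq}$), relies on a syntactic correspondence between the UMI-based and UIND-based derivations (which it imports from Theorem 4.1 of \cite{hirvonen24} rather than proving by induction as you do, with the cycle rules handled exactly as in your analysis), and then reuses the countermodel constructions of Theorems \ref{FDUINDax} and \ref{UFDUINDIAthr} together with antisymmetry of the order to transfer the countermodel back to the UMI setting.
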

\begin{proof}
    Suppose that $\Sigma\not\vdash\tau$. We show that $\Sigma\not\models\tau$. 
Since in the axiomatizations of the statement new UMIs can only be deduced by applying rules $\mathcal{A}_{FD+UMI}$,
the proof of Theorem 4.1 from \cite{hirvonen24} can be easily extended to sets $\Sigma\cup\{\tau\}$ of FDs, UMIs and SCIs to obtain the following: $\Sigma\vdash\tau$ iff $\Sigma^*\vdash\tau^*$, where $\Sigma^*:=\{\sigma\in\Sigma :\sigma \text{ an FD or an SCI }\}\cup\{x\leq y : x\approx y\in\Sigma \text{ or } y\approx x\in\Sigma\}$ and 
    \[
    \tau^*:=\begin{cases}
        \tau, \text{ if } \tau \text{   is an FD or an SCI},\\
        x\leq y \ \wedge \ y\leq x, \text{ if } \tau=x\approx y.
    \end{cases}
    \]
In the above, the notation $\Sigma^*\vdash  x\leq y \ \wedge \ y\leq x$ means that both $\Sigma^*\vdash  x\leq y$ and $\Sigma^*\vdash y\leq x$ hold. 

   Suppose first that $\tau$ is an FD or an SCI. By the proof of Theorem \ref{FDUINDax}, from $\Sigma^*\not\vdash\tau^*$, it follows that there is a $K$-team $\X$ such that $\X\models\Sigma^*$ and $ \X\not\models\tau^*$. Clearly, we have $\X\models\Sigma$ and $\X\not\models\tau$.
    If $\tau=x\approx y$, then $\Sigma^*\not\vdash\tau^*$ means that either $\Sigma^*\not\vdash x\leq y$ or $\Sigma^*\not\vdash y \leq x$. Without loss of generality, we may assume that $\Sigma^*\not\vdash x\leq y$. Then again by the proof of Theorem \ref{FDUINDax}, it follows that there is a $K$-team $\X$ such that $\X\models\Sigma^*$ and $ \X\not\models x\leq y$. Clearly, then we have $\X\models\Sigma$ and $\X\not\models\tau$.
The proof of the claim for UFDs, UMIs and IAs is analogous.
\end{proof}

\subsection{Complexity results}

In the literature, there exist polynomial time algorithms for the implication problems for FDs+UINDs+SCIs \cite{CosmadakisKV90}, UFDs+UINDs+IAs \cite{hannula2021interactionfunctionalinclusiondependencies}, FDs+UMIs+UMDEs \cite{hirvonen24}, and IAs+UMIs+UMDEs \cite{hirvonen24conf} over either relational databases or probability distributions. Since we showed that the same (or analogous) axioms are sound and complete more generally for certain classes of semirings, we obtain the following corollary.
\begin{corollary}
The implication problem is in polynomial time 
\begin{itemize}
        \item[(i)] for FDs+UINDs+SCIs, FDs+UMI+SCIs, UFDs+UINDs+IAs, and UFDs+ UMIs+IAs, over any totally zero-min ordered multiplicatively cancellative and commutative idempotent semiring $K$,
   \item[(ii)] for FDs+UMIs+UMDEs, over any additively cancellative semiring,
      \item[(iii)] for IAs+UCAs+UMIs+UMDEs, over any additively and multiplicatively cancellative semiring.
\end{itemize}    
\end{corollary}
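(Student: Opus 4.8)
The plan is to reduce each listed semiring implication problem to the corresponding implication problem over a fixed base semiring---$\mathbb{B}$ for the relational-style classes and $\mathbb{R}_{\geq0}$ for the probabilistic-style classes---for which polynomial-time algorithms are already available in the literature. The crucial point I would emphasize is that the deduction relation $\vdash_{\mathcal{A}}$ is purely syntactic: it depends only on the axiom set $\mathcal{A}$ and the form of the atoms, and not on the semiring $K$. Hence, whenever a fixed axiomatization $\mathcal{A}$ is sound and complete both over a semiring class containing $K$ and over a base semiring $K_0$, the two semantic implication problems must coincide with the single syntactic relation, giving
\[
\Sigma\models_K\tau \iff \Sigma\vdash_{\mathcal{A}}\tau \iff \Sigma\models_{K_0}\tau .
\]

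Next I would instantiate this equivalence for each item. For (i) I would take $K_0=\mathbb{B}$ and combine Theorems~\ref{FDUINDax} and~\ref{UFDUINDIAthr} and Corollary~\ref{simulcor} (completeness over the semiring class) with the completeness of the identical axiomatizations over relational databases \cite{CosmadakisKV90,hannula2021interactionfunctionalinclusiondependencies}; for (ii) I would take $K_0=\mathbb{R}_{\geq0}$ using Theorem~\ref{FDUMIUMDEthrm} together with \cite{hirvonen24}; and for (iii) again $K_0=\mathbb{R}_{\geq0}$, using the completeness of $\mathcal{A}_{IA+CA+UMI+UMDE}$ established above together with \cite{hirvonen24conf}. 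The displayed equivalence then lets me invoke the existing polynomial-time algorithms for the base cases directly: each such algorithm decides $\Sigma\models_{K_0}\tau$, which is exactly $\Sigma\models_K\tau$, so the same algorithm solves the semiring problem in polynomial time. For the two classes in (i) stated with UMIs, I would first apply the polynomial-size translation $\Sigma\mapsto\Sigma^{*}$, $\tau\mapsto\tau^{*}$ from the proof of Corollary~\ref{simulcor}, which replaces each UMI $x\approx y$ by the pair of UINDs $x\leq y$ and $y\leq x$, reducing these problems in polynomial time to the FDs+UINDs+SCIs and UFDs+UINDs+IAs problems already handled.

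The main obstacle I anticipate is not computational but one of justification: I must argue that the cited algorithms genuinely decide the syntactic relation $\vdash_{\mathcal{A}}$ rather than some a priori different problem. This is exactly what the soundness and completeness of $\mathcal{A}$ over the base semiring supplies, since it forces base-semiring implication and $\vdash_{\mathcal{A}}$ to agree; thus an algorithm solving the base implication problem solves $\vdash_{\mathcal{A}}$ and hence $\Sigma\models_K\tau$. The remaining work is routine bookkeeping: checking that the atom-type translations (UMIs to paired UINDs, and the treatment of UCAs as UFDs) are computable in polynomial time and preserve logical implication, both of which follow from arguments already given above.
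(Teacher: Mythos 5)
Your proposal is correct and follows essentially the same route as the paper: both proofs exploit the chain $\Sigma\models_K\tau \iff \Sigma\vdash_{\mathcal{A}_{\mathcal{C}}}\tau \iff \Sigma\models_{K_0}\tau$ (with $K_0=\mathbb{B}$ or $\mathbb{R}_{\geq0}$) obtained from soundness and completeness of the same axiomatization over both semiring classes, and then invoke the existing polynomial-time algorithms, handling the UMI-containing classes in (i) exactly as the paper does via the polynomial-time translation $\Sigma\mapsto\Sigma^{*}$, $\tau\mapsto\tau^{*}$ from the proof of Corollary~\ref{simulcor}.
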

\begin{proof}
    Excluding the classes FDs+UMI+SCIs and UFDs+UMIs+IAs, for each class of atoms $\mathcal{C}$ in the corollary, we have $\Sigma\models_K\tau\iff\Sigma\vdash_{\mathcal{A}_{\mathcal{C}}}\tau\iff\Sigma\models_{K'}\tau$, where $K$ is any semiring from the corresponding class of semirings for $\mathcal{C}$ in the corollary, $\mathcal{A}_{\mathcal{C}}$ is the axiomatization for the class $\mathcal{C}$, and $K'$ is either $\mathbb{B}$ or $\mathbb{R}_{>0}$ depending on whether the existing algorithm is for relational databases or probability distributions. Hence, to decide whether $\Sigma\models\tau$ over the relevant class of semirings, it suffices to use the existing algorithm for $\mathcal{C}$ to check whether $\Sigma\models_{K'}\tau$.
    
    For the classes $\mathcal{C}$ of FDs+UMI+SCIs and UFDs+UMIs+IAs, we can use the fact that $\Sigma\models_K\tau\iff\Sigma\vdash_{\mathcal{A}_{\mathcal{C}}}\tau\iff\Sigma^*\vdash_{\mathcal{A}_{\mathcal{C}^*}}\tau^*\iff\Sigma^*\models_{\mathbb{B}}\tau^*$, where $K$ and $\mathcal{A}_{\mathcal{C}}$ are as above, and $\mathcal{C}^*$ is the corresponding class that contains UINDs instead of UMIs, and $\Sigma^*$ and $\tau^*$ are as in the proof of Corollary \ref{simulcor}. Since the construction of $\Sigma^*\cup\{\tau^*\}$ from $\Sigma\cup\{\tau\}$ can be done in polynomial time, the implication problems can be solved in polynomial time by constructing $\Sigma^*\cup\{\tau^*\}$ and then checking whether $\Sigma^*\models_{\mathbb{B}}\tau^*$.
\end{proof}

\section{Conclusion and Open Questions}

In this paper, we have studied implication problems for various classes of atoms in semiring team semantics, and have shown that many known axiomatizations generalize to suitable classes of semirings. The properties of the semiring impact which atoms are sensible to consider; in particular, additive cancellativity seems to be relevant for a symmetry property of certain atoms.
We list some questions that are left open in this paper.
\begin{itemize}
    \item[(i)] Can the completeness results for some classes of non-additively cancellative semirings be improved to require less specific assumptions?
    \item[(ii)] Are the axioms IND$^*$1--IND$^*$3 (MDE1--MDE4) complete for marginal distribution inclusion (equivalence) atoms over some classes of semirings? 
    \item[(iii)] Can some results of Section \ref{withoutaddcan} be extended to obtain complete axiomatizations for classes that contain UIND$^*$s? 
    \item[(iv)] Can a result similar to Corollary \ref{simulcor} be obtained so that the axiomatizations could be extended to classes that contain UMDEs, perhaps by restricting to implication over additively cancellative semirings?
\end{itemize}
The examples of non-additively cancellative semirings given in the preliminaries section are all idempotent, so the results with the assumptions used in this paper cover some well-known semirings. On the other hand, the requirement of zero-min order for weighted marginal inclusion dependencies might not be desirable for some applications that use tropical semirings for optimization, because we might actually want that the nonexistence of some value in the support prohibits the satisfaction of a weighted marginal inclusion dependency. 

Note that the results cannot be extended to obtain complete axiomatizations for any classes of atoms containing FDs+INDs or FDs+IAs over all positive semirings. In the first case, the reason is that the implication problem for FDs+INDs is undecidable\footnote{If an implication problem is undecidable, there cannot be a variable bounded axiomatization for it. (We say that an axiomatization $\mathcal{A}$ is variable bounded if every deduction $\Sigma\vdash_{\mathcal{A}}\tau$ can be witnessed by a sequence $(\tau_1,\dots,\tau_n)$ such that $Var(\{\tau_1,\dots,\tau_n\})\subseteq\Var(\Sigma\cup\{\tau\})$.) If there was a variable bounded axiomatization, then given a finite set of atoms $\Sigma\cup\{\tau\}$, such that $\Var(\Sigma\cup\{\tau\})=D$, there are only finitely many possible deductions for $\Sigma\vdash\tau$ using the variables in $D$. An algorithm checking these deductions would determine whether $\Sigma\models\tau$, making the implication problem decidable.} over the Boolean semiring \cite{chandra85}. In the latter case, this follows from the fact that the implication problem for FDs+IAs is an undecidable fragment of the implication problem for (non-disjoint) conditional independence \cite{LI23}.

\begin{credits}
\subsubsection{\ackname}
The author was supported by the Magnus Ehrnrooth foundation.

\subsubsection{\discintname}
The author has no competing interests to declare that are
relevant to the content of this article.
\end{credits}

\bibliographystyle{splncs04}
\bibliography{biblio}

\end{document}